\documentclass{theoretics}

\usepackage[T1]{fontenc}

\makeatother

\newcommand{\eps}{\varepsilon}
\newcommand{\ALG}{\mathsf{ALG}}
\newcommand{\RED}{\mathsf{RED}}
\newcommand{\OPT}{\mathsf{OPT}}
\newcommand{\opt}{\mathsf{opt}}
\newcommand{\cost}{\mathsf{cost}}
\newcommand{\credit}{\mathsf{cr}}

\newcommand{\set}[1]{\left\{#1 \right\}}

\newcounter{casei}
\newcounter{caseii}[casei]
\newcounter{caseiii}[caseii]

\newenvironment{caseanalysis}{\setcounter{casei}{0}}{}

\newcommand{\case}[1]{%
\refstepcounter{casei}%
\smallskip\noindent\textbf{\boldmath(\arabic{casei}) #1}\unboldmath%
\phantomsection%
}
\newcommand{\subcase}[1]{%
\refstepcounter{caseii}%
\smallskip\noindent\textbf{\boldmath(\arabic{casei}.\arabic{caseii}) #1}\unboldmath%
\phantomsection%
}
\newcommand{\subsubcase}[1]{%
\refstepcounter{caseiii}%
\smallskip\noindent\textbf{\boldmath(\arabic{casei}.\arabic{caseii}.\arabic{caseiii}) #1}\unboldmath%
\phantomsection%
}

\addbibresource{references.bib}

\ThCSauthor[1]{Miguel Bosch-Calvo}{miguel.boschcalvo@idsia.ch}[0000-0002-8647-6928]
\ThCSauthor[1]{Fabrizio Grandoni}{fabrizio.grandoni@idsia.ch}[0000-0002-9676-4931]
\ThCSauthor[2]{Afrouz Jabal Ameli}{a.jabal.ameli@tue.nl}[0000-0001-5620-9039]

\ThCSaffil[1]{IDSIA, USI-SUPSI, Switzerland}
\ThCSaffil[2]{TU Eindhoven, Netherlands}

\ThCSthanks{This article was invited from ICALP 2023 \cite{BGJ23}. The first 2 authors are partially supported by the SNF Grant 200021\_200731 / 1.}
\ThCSshortnames{M. Bosch-Calvo, F. Grandoni, A. Jabal Ameli}
\ThCSshorttitle{A 4/3 Approximation for 2-Vertex-Connectivity}
\ThCSyear{2025}
\ThCSarticlenum{13}
\ThCSreceived{Dec 20, 2023}
\ThCSrevised{Feb 19, 2025}
\ThCSaccepted{Mar 30, 2025}
\ThCSpublished{Jun 16, 2025}
\ThCSdoicreatedtrue
\ThCSkeywords{Combinatorial optimisation, approximation algorithms, graph algorithms, network design}

\title{A \texorpdfstring{$4/3$}{4/3} Approximation for \texorpdfstring{$2$}{2}-Vertex-Connectivity}

\begin{document}
\maketitle

\begin{abstract}
The 2-Vertex-Connected Spanning Subgraph problem (2VCSS) is among the most basic NP-hard (Survivable) Network Design problems: we are given an (unweighted) undirected graph $G$. Our goal is to find a spanning subgraph $S$ of $G$ with the minimum number of edges which is $2$-vertex-connected, namely $S$ remains connected after the deletion of an arbitrary node. 2VCSS is well-studied in terms of approximation algorithms, and the current best (polynomial-time) approximation factor is $10/7$ by Heeger and Vygen [SIDMA'17] (improving on earlier results by Khuller and Vishkin [STOC'92] and Garg, Vempala and Singla [SODA'93]).

Here we present an improved $4/3$ approximation. Our main technical ingredient is an approximation preserving reduction to a conveniently structured subset of instances which are ``almost'' 3-vertex-connected. The latter reduction might be helpful in future work. 
\end{abstract}

\section{Introduction}

Real-world networks are prone to failures. For this reason it is important to design them so that they are still able to support a given amount of traffic despite a few (typically temporary) failures of nodes or edges. The basic goal of survivable network design is to construct cheap networks which are resilient to such failures. 

Most natural survivable network design problems are NP-hard, and a lot of work was dedicated to the design of approximation algorithms for them. One of the most basic survivable network design problems is the 2-Vertex-Connected Spanning Subgraph problem (2VCSS). Recall that an (undirected) graph $G=(V,E)$ is $k$-vertex-connected (kVC) if, after removing any subset $W$ of at most $k-1$ nodes (with all the edges incident to them), the residual graph $G[V\setminus W]$ is connected. In particular, in a 2VC graph $G$ we can remove any single node while maintaining the connectivity of the remaining nodes (intuitively, we can tolerate a single node failure). In 2VCSS we are given a 2VC (unweighted) undirected graph $G=(V,E)$, and our goal is to compute a minimum cardinality subset $S\subseteq E$ of edges such that the (spanning) subgraph $(V,S)$ is 2VC. 

2VCSS is NP-hard: indeed an $n$-node graph $G$ admits a Hamiltonian cycle iff it contains a 2VC spanning subgraph with $n$ edges. Czumaj and Lingas \cite{CL99} proved that the problem is APX-hard, hence it does not have a PTAS unless P=NP. A 2-approximation for 2VCSS can be obtained in different ways. For example one can compute an (open) ear decomposition of the input graph and remove the trivial ears (containing a single edge). The resulting graph is 2VC and contains at most $2(n-1)$ edges (while the optimum solution must contain at least $n$ edges). 
The first non-trivial $5/3$ approximation was obtained by Khuller and Vishkin \cite{KV94}. This was improved to $3/2$ by Garg, Vempala and Singla \cite{GVS93} (see also an alternative $3/2$ approximation by Cheriyan and Thurimella \cite{CT00}). Finally Heeger and Vygen \cite{HV17} presented the current-best $10/7$ approximation\footnote{Before \cite{HV17} a few other papers claimed even better approximation ratios \cite{GR05,JRV03}, however they have been shown to be buggy or incomplete, see the discussion in \cite{HV17}.}. Our main result is as follows (please see Section \ref{sec:overview} for an overview of our approach):   
\begin{theorem}\label{thm:mainTheorem}
There is a polynomial-time $\frac{4}{3}$-approximation algorithm for 2VCSS.    
\end{theorem}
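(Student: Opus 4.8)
\medskip
\noindent\textbf{Proof plan.} Following the two-part structure announced in the abstract, I would (i) give an approximation-preserving reduction from an arbitrary $2$VC instance to a family of \emph{structured} instances that are almost $3$-vertex-connected, and then (ii) design a polynomial-time algorithm that, on such structured instances, returns a $2$VC spanning subgraph of size at most $\tfrac43\OPT$. Part (ii) composed with part (i) yields Theorem~\ref{thm:mainTheorem}. Throughout, the only lower bound available in general is $\OPT\ge n$, since any $2$VC spanning subgraph of an $n$-vertex graph has at least $n$ edges; the point of the reduction is exactly that on the resulting instances one can get away with this simple bound (possibly together with a refined credit-based bound), which is what makes the target $\tfrac43 n\le\tfrac43\OPT$ reachable.

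\emph{The reduction (part (i)).} I would decompose $G$ along its separation pairs, i.e.\ its $2$-vertex cuts, using the classical theory of triconnected components (the SPQR / Tutte decomposition). For a $2$-cut $\{u,v\}$ whose removal leaves sides $C_1,\dots,C_p$, form the split graphs $G_i:=G[C_i\cup\{u,v\}]$ with a (virtual) edge $uv$ added. The crucial structural fact is that a subgraph $S\subseteq E(G)$ is $2$VC if and only if, for every side $C_i$, the restriction of $S$ to $G[C_i\cup\{u,v\}]$ together with the virtual edge $uv$ is $2$VC: any two internally-vertex-disjoint paths leaving $C_i$ must do so through $u$ and through $v$. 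Hence $\OPT$ decomposes additively across the split graphs up to a fixed correction for the virtual edges, and conversely optimal (or $\alpha$-approximate) solutions of the $G_i$ glue back into a solution of $G$ of the corresponding size, so an $\alpha$-approximation on all split graphs gives an $\alpha$-approximation on $G$. Iterating, we are left with triconnected components that are either $3$-connected, a cycle, or a bond; cycles and bonds are trivial (their real edges are essentially forced), while the $3$-connected pieces — with their virtual edges realised either as ``free'' special edges or as short forced paths through fresh degree-$2$ vertices — are exactly the ``almost $3$-connected'' structured instances. Degree-$2$ vertices of the original graph, whose two incident edges are forced, are eliminated by suppression and folded into the same additive bookkeeping.

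\emph{The algorithm on structured instances (part (ii)).} Here I would take an open ear decomposition as the starting point (recall that dropping trivial ears already gives a $2$VC subgraph with at most $2(n-1)$ edges, our baseline) and run a credit/charging argument: give each vertex $\tfrac13$ unit of credit, so that the total credit $\tfrac13 n$ equals the slack between the spanning-tree bound $n-1$ and the target $\tfrac43 n$. An ear of length $\ell$ adds $\ell$ edges and $\ell-1$ new internal vertices, so it contributes one unit to the ``excess'' over a spanning tree while bringing $(\ell-1)/3$ credit; thus ears of length $\ge 4$ are self-paying and the difficulty concentrates entirely on ears of length $1,2,3$ (length-$1$ ears being removable for free). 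Near-$3$-connectivity is precisely the leverage needed: it tightly constrains how small separators occur and lets one reroute, merge, or pair up short ears — e.g.\ via an auxiliary matching — so that the accumulated excess is covered by the available credit. For the local case analysis I would follow, and streamline using the extra structure, the blueprint of the earlier $3/2$ and $10/7$ algorithms; in the cases where a clean $\tfrac43 n$ bound is not immediate one falls back on a refined lower bound, such as the minimum cardinality of a spanning subgraph of minimum degree two.

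\emph{Main obstacle.} The real difficulty lies in part (i): making the reduction fully rigorous and genuinely approximation-preserving. The delicate points are the degenerate $2$-cuts (a side $C_i$ being a single vertex or a single edge), the handling of virtual edges so that gluing always yields a $2$-\emph{vertex}-connected (not merely $2$-edge-connected) graph, and the interaction between recursive splitting and degree-$2$ suppression, since a split can create new degree-$2$ vertices and a suppression can create new separation pairs. Producing a self-consistent definition of the structured class for which all of this goes through — and which is rich enough that part (ii) never needs more than $\tfrac43 n$ edges — is where essentially all the work is; the ear-decomposition algorithm of part (ii), though technical, is comparatively routine once the structure is in place.
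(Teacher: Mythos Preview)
Your two-part outline matches the paper at the very top level, but both parts diverge from the paper's actual argument, and part (ii) contains a genuine gap.

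\textbf{Part (i).} The paper does \emph{not} reduce to (almost) $3$-connected pieces via SPQR. Its ``structured'' graphs still contain $2$-cuts --- precisely the \emph{isolating} ones, where one side is a single vertex --- because suppressing degree-$2$ vertices does not interact cleanly with $2$-\emph{vertex}-connectivity (the very issue you flag as your main obstacle). Instead, the paper removes \emph{irrelevant} edges ($uv$ with $\{u,v\}$ a $2$-cut), splits only at \emph{non}-isolating $2$-cuts, and deletes a specific \emph{removable} edge from certain $5$-cycles with two degree-$2$ vertices. That last condition is not an SPQR-type fact at all; it is reverse-engineered from the credit argument in part (ii). So the right reduction is not ``go to triconnected components and patch the boundary''; it is ``remove exactly those local obstructions that break the accounting later.''

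\textbf{Part (ii).} Here the paper is fundamentally different from your plan, and this is where your proposal is thin. The paper does \emph{not} start from an ear decomposition, and the lower bound $\OPT\ge n$ is not the working bound. The algorithm computes a \emph{minimum $2$-edge-cover} $H$ (this is the lower bound, not a fallback), massages $H$ into a \emph{canonical} form, and then runs a credit scheme on the component/block/bridge structure of $H$: $|E(C)|/3$ to each small cycle component, $1$ to each large component, $1$ to each block, $1/4$ to each bridge. One then merges components and absorbs blocks via ``nice cycles'' and carefully chosen \emph{extending paths}, paying for added edges out of the released credits; the absence of non-isolating cuts and of removable $5$-cycles is exactly what makes a $3$-matching lemma and the shortcut-pair arguments go through. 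Your ear-based scheme (``$1/3$ per vertex; ears of length $\ge 4$ self-pay; handle length-$2$ and -$3$ ears using near-$3$-connectivity'') is the standard setup, but you give no mechanism for the short-ear case beyond ``reroute, merge, or pair up.'' That is the whole difficulty, and there is no known way to push ear decomposition to $4/3$ for $2$VCSS along those lines.

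Finally, you invert the difficulty: you call part (i) the main obstacle, but in the paper part (i) is a few pages of clean recursion, while part (ii) --- the $2$-edge-cover transformation --- is where essentially all the work and all the case analysis live, and it is what dictates which structural conditions part (i) must secure.
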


\subsection{Related Work}

An undirected graph $G$ is $k$-edge-connected (kEC) if it remains connected after removing up to $k-1$ edges. The 2-Edge-Connected Spanning Subgraph problem (2ECSS) is the natural edge-connectivity variant of 2VCSS, where the goal is to compute a 2EC spanning subgraph with the minimum number of edges. Like 2VCSS, 2ECSS does not admit a PTAS unless $P=NP$ \cite{CL99}. It is not hard to compute a $2$ approximation for 2ECSS. For example it is sufficient to compute a DFS tree and augment it greedily. Khuller and Vishkin \cite{KV92} found the first non-trivial $3/2$-approximation algorithm. Cheriyan, Seb{\"{o}} and Szigeti \cite{CSS01} improved the approximation factor to $17/12$. This was further improved to $4/3$ in two independent and drastically different works by Hunkenschr{\"o}der, Vempala and Vetta \cite{HVV19} and Seb{\"o} and Vygen \cite{SV14}. The current best and very recent $\frac{118}{89}+\eps<1.326$ approximation is due to Garg, Grandoni and Jabal Ameli \cite{GGJ23}. Our work exploits several ideas from the latter paper. The $k$-Edge Connected Spanning Subgraph problem (kECSS) is the natural generalization of 2ECSS to any connectivity $k\geq 2$ (see, e.g., \cite{CT00,GG12}).

A major open problem in the area is to find a better than 2 approximation for the weighted version of 2ECSS. This is known for the special case with 0-1 edge weights, a.k.a. the Forest Augmentation problem, by the recent work by Grandoni, Jabal-Ameli and Traub \cite{GJT22} (see also \cite{BDS22,CDGKN20,CCDZ20} for the related Matching Augmentation problem).

A problem related to kECSS is the $k$-Connectivity Augmentation problem (kCAP): given a $k$-edge-connected undirected graph $G$ and a collection of extra edges $L$ (\emph{links}), find a minimum cardinality subset of links $L'$ whose addition to $G$ makes it $(k+1)$-edge-connected. It is known~\cite{DKL76} that kCAP can be reduced to the case $k=1$, a.k.a. the Tree Augmentation problem (TAP), for odd $k$ and to the case $k=2$, a.k.a. the Cactus Augmentation problem (CacAP), for even $k$. Several approximation algorithms better than $2$ are known for TAP \cite{A17,CG18,CG18a,EFKN09,FGKS18,GKZ18,KN16,KN16b,N03}, culminating with the current best $1.393$ approximation by Cecchetto, Traub and Zenklusen \cite{CTZ21}. Until recently no better than $2$ approximation was known for CacAP (excluding the special case where the cactus is a single cycle \cite{GGJS19}): the first such algorithm was described by Byrka, Grandoni and Jabal Ameli \cite{BGJ20}, and later improved to $1.393$ in \cite{CTZ21}. In a recent breakthrough by Traub and Zenklusen, a better than $2$ (namely $1.694$) aproximation for the weighted version of TAP was achieved \cite{TZ21} (later improved to $1.5+\eps$ in \cite{TZ23}). Partial results in this direction where achieved earlier in \cite{A17,CN13,FGKS18,GKZ18,N17}. Even more recently, Traub and Zenklusen obtained a $1.5+\eps$ approximation for the weighted version of kCAP \cite{TZ23}.

\subsection{Preliminaries}\label{sec:preliminaries}

We use standard graph notation. For a graph $G=(V,E)$, we  let $V(G)=V$ and $E(G)=E$ denote its nodes and edges, respectively. For $W\subseteq V$ and $F\subseteq E$, we use the shortcuts $G\setminus F:=(V,E\setminus F)$ and $G\setminus W:=G[V\setminus W]$. Throughout this paper we sometimes use interchangeably a subset $F$ of edges and the corresponding subgraph $(W,F)$, $W=\{v\in V: v\in f\in F\}$. The meaning will be clear from the context. For example, we might say that $F\subseteq E$ is 2VC or that $F$ contains a connected component. In particular, we might say that $S\subseteq E$ is a 2VC spanning subgraph. Also, given two subgraphs $G_1$ and $G_2$, by $G'=G_1\cup G_2$ we mean that $G'$ is the subgraph induced by $E(G_1)\cup E(G_2)$. We sometimes represent paths and cycles as sequence of nodes. A $k$-vertex-cut of $G$ is a subset $W$ of $k$ nodes such that $G[V\setminus W]$ has at least 2 connected components. A node defining a $1$-vertex-cut is a cut vertex. By $\OPT(G)\subseteq E(G)$ we denote an optimum solution to a 2VCSS instance $G$, and let $\opt(G):=|\OPT(G)|$ be its size. All the algorithms described in this paper are deterministic.

We will use the notion of block-cutpoint graph. The block-cutpoint graph $G^*$ of a graph $G$ is a bipartite graph that has a node corresponding to every maximal 2VC subgraph of $G$, as well as a node corresponding to every cut vertex of $G$. Furthermore, $G^*$ has an edge $ub$ iff $u$ is a cut vertex and $u\in V(B)$, where $B$ is the maximal 2VC subgraph of $G$ corresponding to $b$ in $G$. It is a well-known fact~\cite[Chapter~3]{west2001introduction} that the block-cutpoint graph of a graph $G$ is a tree if $G$ is connected.

\section{Overview of Our Approach}
\label{sec:overview}

In this section we sketch the proof of our $4/3$-approximation (Theorem \ref{thm:mainTheorem}). The details and proofs which are omitted here will be given in the following technical sections. 

Our result relies on $3$ main ingredients. The first one is an approximation-preserving (up to a small additive term) reduction of 2VCSS to instances of the same problem on properly \emph{structured} graphs, which are ``almost'' 3VC in a sense described later (see Section \ref{sec:overview:structured}).

At this point we compute a minimum-size 2-edge-cover $H$ similarly to prior work: this provides a lower bound on the size of the optimal solution. For technical reasons, we transform $H$ into a \emph{canonical} form,  without increasing its size (see Section \ref{sec:overview:canonical}).

The final step is to convert $H$ into a feasible solution $S$. Starting from $S=H$, this is done by iteratively adding edges to and removing edges from $S$ in a careful manner. In order to keep the size of $S$ under control, we assign $1/3$ credits to each edge of the initial $S$, and use these credits to pay for any increase in the number of edges of $S$ (see Section \ref{sec:overview:credits}). We next describe the above ingredients in more detail.

\subsection{A Reduction to Structured Graphs}
\label{sec:overview:structured}

Our first step is an approximation-preserving (up to a small additive factor) reduction of 2VCSS to instances of the same problem on properly \emph{structured} graphs. This is similar in spirit to an analogous reduction for 2ECSS in \cite{GGJ23}. In particular, we exploit the notion of irrelevant edges and isolating cuts defined in that paper. We believe that our reduction might be helpful also in future work.

In more detail, we can get rid of the following \emph{irrelevant} edges.
\begin{definition}[Irrelevant edge]
Given a graph $G$, we say that an edge $e=uv\in E(G)$ is \emph{irrelevant} if $\{u,v\}$ is a 2-vertex-cut of $G$.
\end{definition}
\begin{restatable}{lemma}{lemirrelevant}\label{lem:irrelevant}
Given a 2VC graph $G$, let $e$ be an irrelevant edge of $G$. Then, every optimal 2VCSS solution for $G$ does not contain $e$.
\end{restatable}
\begin{proof}
Recall that an ear-decomposition of an undirected graph $G$ is a sequence of paths or cycles $P_1,\ldots,P_k$ (ears) spanning $E(G)$ such that $P_1$ is a cycle and $P_i$, $i > 1$, has its internal nodes disjoint from $V_{i-1}:=V(P_1)\cup\ldots\cup V(P_{i-1})$ and its endpoints (or one node if $P_i$ is a cycle) in $V_{i-1}$. We say that an ear-decomposition is open if $P_i$ is a path, for $i > 1$. Every 2VC graph admits an open ear decomposition \cite[Chapter~15]{S03}. We will need the following observation: 
\begin{fact}\label{lem:chord}
Suppose that a minimal solution $S$ to 2VCSS on a graph $G$ contains a cycle $C$. Then $S$ does not contain any chord $f$ of $C$. Indeed, otherwise consider any open ear decomposition of $S$ which uses $C$ as a first ear. Then $f$ would be a trivial ear (consisting of a single edge) of the decomposition, and thus $S\setminus \{f\}$ would also be 2VC, contradicting the minimality of $S$.
\end{fact}

Let $H\subseteq E$ be any optimal (hence minimal) solution to 2VCSS on $G$. Assume by contradiction that $H$ contains an irrelevant edge $e=uv$. Removing $u$ and $v$ splits $H$ into distinct connected components $C_1,\dots, C_k$, with $k\geq 2$. Each one of those components has edges $u_iu$, $v_iv$ in $H$, where $u_i, v_i\in C_i$ for $i\in \set{1,\dots, k}$, otherwise $H$ would contain a cut vertex. Let $P_1$ be a path from $u_1$ to $v_1$ in $C_1$, and $P_2$ be a path from $v_2$ to $u_2$ in $C_2$. Then $e$ is a chord of the cycle $P_1\cup P_2\cup\{uu_1,v_1v,vv_2,u_2u\}$, contradicting the minimality of $H$ by Fact \ref{lem:chord}.
\end{proof}
We can enforce (see later) that our graph $G$ is ``almost'' 3VC, in the sense that the only 2-vertex-cuts of $G$ are a very specific type of \emph{isolating} cuts defined as follows.
\begin{restatable}[Isolating cut]{definition}{defisolating}
Given a 2-vertex-cut $\{u,v\}$ of a graph $G$, we say that this cut is \emph{isolating} if $G\setminus\set{u,v}$ has exactly two connected components, one of which consists of $1$ node. Otherwise the cut is \emph{non-isolating}.
\end{restatable}
Assuming that there are no non-isolating cuts, we can avoid the following local configuration: this will be helpful in the rest of our analysis.
\begin{restatable}[Removable 5-cycle]{definition}{def:removable}
    We say that a 5-cycle $C$ of a 2VC graph $G$ is \emph{removable} if it has at least two vertices of degree 2 in $G$.
\end{restatable}
\begin{restatable}{lemma}{lemremovable}\label{lem:removable}
    Given a 2VC graph $G$ without non-isolating cuts and with at least $6$ nodes. Let $C$ be a removable 5-cycle of $G$. Then in polynomial time one can find an edge $e$ of $C$ such that there exists an optimum solution to 2VCSS on $G$ not containing $e$ (we say that $e$ is a \emph{removable} edge).
\end{restatable}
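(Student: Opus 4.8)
Write $C=v_1v_2v_3v_4v_5$ cyclically and let $D\subseteq V(C)$ be the set of vertices of degree $2$ in $G$; by hypothesis $|D|\ge 2$. The algorithm will just output one edge of $C$ read off from $D$, so the whole content is that some optimum avoids that edge. I would argue in four steps.

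\textit{Step 1 (normalizing $D$ and the chords of $C$).} If $v_i\in D$ then its only neighbours are $v_{i-1},v_{i+1}$, so $\{v_{i-1},v_{i+1}\}$ is a 2-vertex-cut of $G$ (as $|V(G)|\ge 6>3$) isolating $v_i$; being isolating, $G\setminus\{v_{i-1},v_{i+1}\}$ consists of exactly $\{v_i\}$ and $V(G)\setminus\{v_{i-1},v_i,v_{i+1}\}$. First, $D$ contains no two consecutive vertices: if $v_i,v_{i+1}\in D$ then in $G\setminus\{v_{i-1},v_{i+2}\}$ the pair $\{v_i,v_{i+1}\}$ is a connected component of size $2$ (each keeps only its neighbour in the pair) while the remaining $\ge 1$ vertices lie in other components, so $\{v_{i-1},v_{i+2}\}$ would be a non-isolating cut, a contradiction. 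As any three vertices of $C$ include two consecutive ones, this gives $|D|=2$ with the two vertices at distance two; relabel so $D=\{v_1,v_3\}$. The only possible chords of $C$ are $v_2v_4$ and $v_2v_5$ (all others are incident to a degree-$2$ vertex), and each of these, if present, is irrelevant by Lemma~\ref{lem:irrelevant} (deleting $\{v_2,v_4\}$ isolates $v_3$; deleting $\{v_2,v_5\}$ isolates $v_1$), hence lies in no optimum. Finally, $v_1,v_3\in D$ force $v_5v_1,v_1v_2,v_2v_3,v_3v_4$ into every solution, so the only edge of $C$ an optimum could avoid is $e:=v_4v_5$, which is what we output.

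\textit{Step 2 (suppressing $v_1$ and $v_3$).} Let $\hat G$ be the graph obtained by suppressing $v_1$ and $v_3$ (deleting each and joining its two neighbours), i.e.\ deleting $v_1,v_3$ and adding $v_2v_5$ and $v_2v_4$. Suppression of a degree-$2$ vertex preserves 2-vertex-connectivity, and since no optimum uses $v_2v_4$ or $v_2v_5$ we may suppress inside an optimum without creating parallel edges; thus $\hat G$ is a simple 2VC graph in which $C$ becomes a triangle $T=v_2v_4v_5$ with all of $v_2,v_4,v_5$ of degree $\ge 3$. Expanding $v_2v_4\mapsto v_2v_3v_4$ and $v_2v_5\mapsto v_2v_1v_5$ (and the reverse) is a size-$2$-shifting bijection between the 2VC spanning subgraphs of $G$ containing the four forced edges and those of $\hat G$ containing $\{v_2v_4,v_2v_5\}$, not touching $v_4v_5$. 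Since every optimum of $G$ contains the forced edges, it suffices to produce a minimum-size 2VC spanning subgraph $\hat S$ of $\hat G$ with $\{v_2v_4,v_2v_5\}\subseteq\hat S$ and $v_4v_5\notin\hat S$.

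\textit{Step 3 (a single-edge swap).} Take $\hat S$ of minimum size with $\{v_2v_4,v_2v_5\}\subseteq\hat S$; if $v_4v_5\notin\hat S$ we are done, so assume $T\subseteq\hat S$. Then $\hat S\setminus\{v_4v_5\}$ is connected (through $v_4v_2v_5$) but, by minimality, not 2VC, and its only cut vertex is $v_2$: for $x\notin\{v_2,v_4,v_5\}$, $\hat S$ minus $x$ still contains $T$, so $v_4v_5$ lies on a cycle and is not a bridge there, hence $\hat S\setminus\{v_4v_5\}$ minus $x$ is connected; and deleting an endpoint of $v_4v_5$ just yields $\hat S$ minus a vertex, which is connected. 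So $\hat S\setminus\{v_4v_5\}$ minus $v_2$ has exactly two components $A\ni v_4$ and $B\ni v_5$ with $A\sqcup B=V(\hat G)\setminus\{v_2\}$. If some edge $f\in E(\hat G)\setminus\{v_4v_5\}$ joins $A$ to $B$, then $\hat S':=(\hat S\setminus\{v_4v_5\})\cup\{f\}$ is 2VC (removing $v_2$ gives $A\sqcup B$ reconnected by $f$; removing any other vertex leaves it connected, as $v_2$ was the unique cut vertex of $\hat S\setminus\{v_4v_5\}$), has size $|\hat S|$, contains $\{v_2v_4,v_2v_5\}$, and avoids $v_4v_5$ — done.

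\textit{Step 4 (existence of $f$, and the main obstacle).} It remains to exclude that $v_4v_5$ is the only $\hat G$-edge between $A$ and $B$. In that case every vertex of $A\setminus\{v_4\}$ has all of its incident $G$-edges inside $A\setminus\{v_4\}$ (edges to $B$ pass only through $v_4v_5$, and $v_1,v_3$ are adjacent only to $v_2,v_4,v_5$), so $A\setminus\{v_4\}$ would be a component of $G\setminus\{v_2,v_4\}$ separate from the isolated vertex $v_3$. But $\{v_2,v_4\}$ is a 2-vertex-cut of $G$ isolating $v_3$, hence — being isolating — $G\setminus\{v_2,v_4\}$ has exactly two components, so $A=\{v_4\}$; symmetrically $B=\{v_5\}$, whence $V(\hat G)=\{v_2,v_4,v_5\}$ and $|V(G)|=5$, contradicting $|V(G)|\ge 6$. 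Thus $f$ exists and the plan goes through. I expect Step 4 — guaranteeing a single-edge repair is always available — to be the crux, along with the care needed in Steps 1--2 to make the suppression clean (no parallel edges, correct transfer of minimality); the normalizations of Step 1 are precisely what make the reduction to the triangle $T$ and the swap possible, so I would establish them first.
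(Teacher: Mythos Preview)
Your proof is correct and follows essentially the paper's line: both normalize to degree-$2$ vertices $\{v_1,v_3\}$, output $e=v_4v_5$, argue that deleting $v_4v_5$ from an optimum makes $v_2$ the (unique) cut vertex separating the $v_4$-side from the $v_5$-side, and then invoke the absence of non-isolating cuts at $\{v_2,v_4\}$ (and $\{v_2,v_5\}$) to produce a replacement edge $f$ across that bipartition. The only difference is packaging---you pass to the suppressed graph $\hat G$ so that the swap can be verified by a direct 2VC check, whereas the paper stays in $G$ and certifies the swap by exhibiting $v_4v_5$ as a chord of an explicit cycle via Fact~\ref{lem:chord} (your side remark that $\hat G$ is simple is not quite argued correctly, since you suppress in $G$ rather than in an optimum, but nothing downstream uses it).
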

\begin{proof}
    Assume $C=v_1v_2v_3v_4v_5$.
    If $C$ has two vertices of degree $2$ that are adjacent in $C$, namely $v_1$ and $v_2$, then $\{v_3,v_5\}$ is a non-isolating cut of $G$, a contradiction. Thus, we can assume that $C$ has exactly two non-adjacent vertices of degree $2$, say $v_1$ and $v_3$ w.l.o.g. 
    
    We will show that the edge $e=v_4v_5$ is the desired removable edge. Let $H$ be an optimal 2VCSS solution for $G$ that uses the edge $v_4v_5$. Observe that in this case since $v_1$ and $v_3$ have degree $2$, then $H$ must contain all the edges of $C$.
    
    To complete the argument we show that there exists an edge $f\in E(G)\setminus E(H)$ such that $v_4v_5$ is a chord of a cycle in $H':=H\cup\{f\}$: hence we can remove $v_4v_5$ from $H'$     
    using Fact~\ref{lem:chord} 
    to obtain an alternative optimum solution not containing $v_4v_5$. 
    
    Let $H''=H\setminus\{v_4v_5\}$. There is no cycle $C'$ in $H''$ that contains both $v_4$ and $v_5$, otherwise $v_4v_5$ is a chord of $C'$ in $H$, contradicting the minimality of $H$ by Fact \ref{lem:chord}. Therefore if we remove $v_2$ from $H''$, there must be no paths from $v_4$ to $v_5$. This means that there is a partition of $V(G)\setminus\{v_2\}$ into non-empty sets $V_1$ and $V_2$ such that, $\{v_3,v_4\}\in V_1$, $\{v_1,v_5\}\in V_2$ and there is no edge in $H''$ between $V_1$ and $V_2$.  Since $|V(G)|\ge 6$, then we can assume w.l.o.g that $|V_1|\ge3$.
    
    There must be an edge $f=u_1u_2\in E(G)$ such that $u_1\in V_1\setminus\{v_3, v_4\}$ and $u_2 \in V_2$, otherwise  $\{v_2,v_4\}$ is a non-isolating cut in $G$, a contradiction. 
    Now we show that $f$ is the desired edge. We claim that there exists a path $P_1$ in $H[V_1\setminus \{v_3\}]$ between $u_1$ and $v_4$. Since $H$ is 2VC, there exists a path $P_1$ between $u_1$ and $v_4$ not using $v_2$. Such path does not use $v_3$ either since this node is adjacent only to $v_2$ and $v_4$, and $u_1\notin \{v_3,v_4\}$. If $P_1$ is not contained in $H[V_1]$, it would need to use at least two edges between $V_1$ and $V_2$ in $H$. However, since there is no edge in $H''$ between $V_1$ and $V_2$, the only edge in $H$ between $V_1$ and $V_2$ is $v_4v_5$, so $P_1$ cannot use two edges between $V_1$ and $V_2$, and thus it must be contained in $H[V_1]$.
    
    Symmetrically, we claim that there exists a path $P_2$ in $H[V_2\setminus \{v_1\}]$ between $u_2$ and $v_5$. Notice that $u_2=v_5$ is possible, in which case the claim trivially holds. Hence, next assume $u_2\neq v_5$. Observe that $u_2\neq v_1$ since $u_2$ is adjacent to $u_1\notin \{v_2,v_5\}$. Thus, the claim about $P_2$ follows symmetrically to the case of $P_1$. Altogether, $v_4v_5$ is a chord of the cycle $P_1\cup P_2 \cup \{f\} \cup C \setminus\{v_4v_5\}$ in $H'=H\cup\{f\}$, which implies the lemma.
\end{proof}

\begin{figure}
    \centering
    \includegraphics{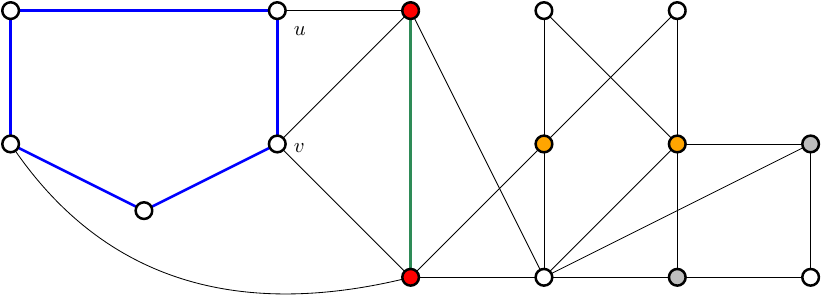}
    \caption{The cycle induced by the blue edges is a removable cycle, since it has two vertices of degree $2$ in $G$. The edge $uv$ is removable. The red and orange (resp. gray) pairs of vertices form a non-isolating (resp. isolating) cut. The green edge is irrelevant.}
    \label{fig:structured}
\end{figure}

We are now ready to define a structured graph and to state our reduction to such graphs (see also Figure~\ref{fig:structured}).
\begin{restatable}[Structured graph]{definition}{defstructured}
    A 2VC graph $G$ is structured if it does not contain: (1) Irrelevant edges; (2) Non-isolating cuts, and (3) Removable $5$-cycles.
\end{restatable}
\begin{restatable}{lemma}{lemreduction}\label{lem:reduction}
    Given a constant $1<\alpha\leq \frac{3}{2}$, if there exists a polynomial-time algorithm for 2VCSS on a structured graph $G$ that returns a solution of cost at most $\max\{\opt(G),\alpha\cdot\opt(G) - 2\}$, then there exists a polynomial-time $\alpha$-approximation algorithm for 2VCSS.
\end{restatable}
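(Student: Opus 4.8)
The plan is to build a recursive procedure $\mathcal{A}$ that, on a $2$VC graph $G$, either certifies that $G$ is structured and calls the hypothesised algorithm, or applies one simplification step and recurses. If $G$ contains an irrelevant edge $e$, then by Lemma~\ref{lem:irrelevant} no optimum uses $e$, so $G\setminus\{e\}$ is still $2$VC with $\opt(G\setminus\{e\})=\opt(G)$, and we return $\mathcal{A}(G\setminus\{e\})$; if $G$ contains a removable $5$-cycle, Lemma~\ref{lem:removable} yields a removable edge $e$ with again $\opt(G\setminus\{e\})=\opt(G)$, and we recurse on $G\setminus\{e\}$; otherwise $G$ has a non-isolating cut $\{u,v\}$, and (an edge $uv$ whose endpoints form a $2$-cut is irrelevant, so we may assume $uv\notin E(G)$) we split $G$ along $\{u,v\}$ as below, recurse on the two pieces, and glue. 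Each step strictly decreases the number of vertices or leaves it unchanged and decreases the number of edges, and the splitting step produces only $O(n)$ pieces in total (pieces pairwise share at most two vertices and each has at least three), so with polynomial-time subroutines for detecting the three configurations, $\mathcal{A}$ runs in polynomial time.

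For the split, let $C_1,\dots,C_k$ ($k\ge 2$) be the components of $G\setminus\{u,v\}$, put $V_1=C_1\cup\{u,v\}$ and $V_2=V(G)\setminus C_1$, and form $G_i$ from $G[V_i]$ by adding a virtual copy of the edge $uv$; one checks that $G_1,G_2$ are $2$VC with strictly fewer vertices than $G$. Two facts drive the analysis. First, $\opt(G)\ge\opt(G_1)+\opt(G_2)-2$: restricting an optimum $S^*$ of $G$ to $V_1$ and to $V_2$ gives edge-disjoint sets (since $uv\notin E(G)$) partitioning $S^*$, each of which, together with the virtual edge, is feasible for the corresponding $G_i$, losing at most one edge per side. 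Second, given feasible $S_1$ for $G_1$ and $S_2$ for $G_2$, deleting the virtual edges from $S_1\cup S_2$ yields a feasible solution $S$ of $G$ with $|S|=|S_1|+|S_2|-q$, where $q\in\{0,1,2\}$ is the number of the $S_i$ actually using their virtual edge; feasibility is a short vertex-deletion check, using that each part is $2$-connected on its own vertex set once the virtual edge is absorbed and that a surviving $uv$ would be a chord of a cycle, hence removable by Fact~\ref{lem:chord}.

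I would then prove by induction on $|V(G)|+|E(G)|$ a bound of the shape $|\mathcal{A}(G)|\le\max\{\opt(G),\alpha\cdot\opt(G)-2\}$, which, since $\alpha>1$, is at most $\alpha\cdot\opt(G)$ and yields the theorem. The base case (structured $G$, with tiny graphs solved optimally by brute force) is the hypothesis, and the irrelevant/removable-edge cases are immediate because the optimum is unchanged; the content is the non-isolating-cut case, where one feeds the inductive bounds for $G_1,G_2$ into $|S|=|S_1|+|S_2|-q$ and compares against $\opt(G)\ge\opt(G_1)+\opt(G_2)-2$.

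The main obstacle is precisely this last accounting. The crude combination, $|S|\le(\alpha\opt(G_1)-2)+(\alpha\opt(G_2)-2)$ with $\opt(G_1)+\opt(G_2)\le\opt(G)+2$, only gives $|S|\le\alpha\opt(G)+2\alpha-4$, weaker than needed, and it degrades further when some $S_i$ avoids its virtual edge (so $q<2$ and the merge saves less): then the inductive bound for $G_i$ may sit at the ``small-piece'' branch $\opt(G_i)$ rather than $\alpha\opt(G_i)-2$, while $\opt(G)\ge\opt(G_1)+\opt(G_2)-2$ can be tight, so the two estimates no longer close. Overcoming this requires choosing \emph{which} component $C_1$ to split off so that the ``small side'' genuinely needs its virtual edge — for instance a component $C_1$ with $G[V_1]$ not $2$VC, which forces the virtual edge into every feasible solution of $G_1$ and hence $q\ge 1$, falling back to a separate argument (exploiting that $uv\notin E(G)$ forces $|V_i|\ge 4$, hence $\opt(G_i)\ge 4$) when no such component exists — combined with a careful case split on $q$ and on whether each $\opt(G_i)$ lies below or above the threshold $\tfrac{2}{\alpha-1}$. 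Pinning these cases down so the constants balance over the whole range $1<\alpha\le\tfrac32$ is the one genuinely delicate part; everything else is routine.
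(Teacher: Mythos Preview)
Your overall recursive scheme matches the paper's, and the treatment of irrelevant edges and removable $5$-cycles is the same. The divergence---and the source of the gap you yourself flag---is in how you split along a non-isolating cut $\{u,v\}$.

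You form $G_i$ from $G[V_i]$ by adding a \emph{virtual edge} $uv$. This yields the merge identity $|S|=|S_1|+|S_2|-q$ with $q\in\{0,1,2\}$ uncontrolled, and as you observe the case $q=0$ does not close: one only gets $|S|\le\alpha\,\opt(G)+2\alpha-4$, which at $\alpha=\tfrac32$ is $\alpha\,\opt(G)-1$, one short of the invariant you need to propagate. Your proposed remedy (choose $C_1$ so that $G[V_1]$ is not $2$VC, forcing its virtual edge) only guarantees $q\ge 1$, and even that is conditional on such a $C_1$ existing; the remaining ``fallback'' and the interaction with the small-piece branch of the $\max$ are left as a sketch. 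As written, the accounting is genuinely incomplete.

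The paper avoids this entirely by a different split: rather than adding a virtual edge, it \emph{contracts} the opposite side to a single vertex. Concretely, $G_1$ is obtained from $G$ by contracting $V_2$ to a node $v_2$ (and symmetrically $G_2$). Because $\{u,v\}$ separates $V_1$ from $V_2$ and irrelevant edges have already been removed (so $uv\notin E(G)$), the contracted vertex $v_2$ has exactly the two neighbours $u,v$ in $G_1$. Hence \emph{every} feasible $H_1$ for $G_1$ contains precisely the two edges $uv_2,vv_2$; there is no analogue of your $q<2$. Returning $(H_1\setminus\{uv_2,vv_2\})\cup(H_2\setminus\{uv_1,vv_1\})$ gives the exact identities
\[
|H|=|H_1|+|H_2|-4,\qquad \opt(G)\ge \opt(G_1)+\opt(G_2)-4,
\]
and plugging in the inductive bound yields $|H|\le \alpha\,\opt(G)+4\alpha-8\le \alpha\,\opt(G)-2$ exactly for $\alpha\le\tfrac32$, with short separate checks when one side falls below the brute-force threshold. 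In short, the missing device is to replace your virtual edge by a virtual degree-$2$ vertex; this deterministically forces two disposable edges on each side and makes the constants balance without any case analysis on whether the subroutine happened to use the added gadget.
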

We remark that any $\alpha-\eps$ approximation of 2VCSS on structured graphs, for an arbitrarily small constant $\eps>0$, immediately implies an algorithm of the type needed in the claim of Lemma~\ref{lem:reduction}: indeed, instances with $\opt(G)\leq \max\{\frac{2}{\eps},\frac{2}{\alpha-1}\}$ can be solved exactly in constant time by brute force.

\begin{algorithm}
\caption{Algorithm $\RED{}$ that reduces from arbitrary to structured instances of 2VCSS. Here $G$ is 2VC and $\ALG$ is an algorithm for structured instances that returns a solution of cost at most $\max\{\opt(G),\alpha\cdot\opt(G) - 2\}$ for some $1<\alpha\leq \frac{3}{2}$.}
\label{fig:algo}
\KwIn{A 2-vertex-connected graph $G$}
\KwOut{A subgraph $H$ of $G$}
\If{$|V(G)|< \max\{6,\frac{2}{\alpha-1}\}$}{\label{line:bruteForceCondition}
    Compute $\OPT(G)$ by brute force (in constant time) and \textbf{return} $\OPT(G)$\;\label{line:bruteForce}
}
\If{$G$ contains an irrelevant edge}{\label{line:irrelevantCondition}
    \textbf{return} $\RED(G \setminus \{e\})$\;\label{line:irrelevant}
}
\If{$G$ contains a non-isolating vertex cut $\set{u, v}$}{\label{line:nonIsolatingCondition}
    Let $(V_1,V_2), 2\leq|V_1|\leq|V_2|$, be a partition of $V(G)\setminus\set{u,v}$ such that there are no edges between $V_1$ and $V_2$ in $G\setminus\set{u,v}$\;
    Let $G_1$ be the graph resulting from $G$ by contracting $V_2$ into one node $v_2$ and $G_2$ by contracting $V_1$ into one node $v_1$ (keeping one copy of parallel edges in both cases)\;
    Let $H_1=\RED(G_1)$ and $H_2=\RED(G_2)$\;
    Let $E_1$ (resp. $E_2$) be the two edges of $H_1$ (resp., $H_2$) with endpoints in $v_2$ (resp., $v_1$)\;
    \textbf{return} $H := (H_1 \setminus E_1) \cup (H_2 \setminus E_2)$\;\label{line:nonIsolating}
}
\If{$G$ contains a removable $5$-cycle}{\label{line:removable5CycleCondition}
    Let $e$ be the removable edge (found via Lemma~\ref{lem:removable}) in that cycle and \textbf{return} $\RED(G\setminus \{e\})$\;\label{line:removable5Cycle}
}
\textbf{return} $\ALG(G)$\;\label{line:baseCase}
\end{algorithm}

The algorithm at the heart of Lemma~\ref{lem:reduction} is algorithm $\RED$ given in Figure \ref{fig:algo}. Lines~\ref{line:bruteForceCondition}-\ref{line:bruteForce} solve instances with few new nodes by brute force. Lines~\ref{line:irrelevantCondition}-\ref{line:irrelevant}, \ref{line:nonIsolatingCondition}-\ref{line:nonIsolating}, and \ref{line:removable5CycleCondition}-\ref{line:removable5Cycle} get rid recursively of irrelevant edges, non-isolating vertex cuts and removable 5-cycles, resp. When Line~\ref{line:baseCase} is reached, the graph is structured and therefore we can apply a black-box algorithm $\ALG$ for structured instances of 2VCSS. 

It is easy to see that the algorithm runs in polynomial time. 
\begin{lemma}\label{lem:alg:time}
    $\RED(G)$ runs in polynomial time in $|V(G)|$ if $\ALG$ does so.
\end{lemma}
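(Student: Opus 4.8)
The plan is to bound separately (i) the work $\RED$ performs at a single invocation, excluding its recursive calls and the possible call to $\ALG$, and (ii) the total number of invocations, i.e.\ the size of the recursion tree $T$ of $\RED(G)$. Write $n_0:=|V(G)|$. The basic invariant is that every graph occurring in the recursion has at most $n_0$ vertices: the edge removals in Lines~4 and~12 leave $|V|$ unchanged, and the split in Lines~7--10 strictly decreases it. Note also that the recursion is well-founded regardless of any structural property of the graphs produced, since each edge-removal step strictly decreases the number of edges and each split strictly decreases the number of vertices.

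For (i): the base case (Line~2) runs in constant time since the threshold $\max\{6,\tfrac{2}{\alpha-1}\}$ is a constant; testing for an irrelevant edge amounts, for each edge $uv$, to checking connectivity of $G\setminus\{u,v\}$, hence $\mathrm{poly}(n_0)$; testing for a non-isolating cut and extracting a suitable partition $(V_1,V_2)$ is done by enumerating the $O(n_0^2)$ pairs $\{u,v\}$ and computing connected components; testing for a removable $5$-cycle is done by enumerating the $O(n_0^5)$ candidate $5$-tuples, and once one is found the removable edge is produced in polynomial time by Lemma~\ref{lem:removable}; the contractions and the set operations $H_1\setminus E_1$ and $H_2\setminus E_2$ are clearly polynomial. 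Hence each invocation does $\mathrm{poly}(n_0)$ work outside recursion and outside the at most one call to $\ALG$, and by hypothesis each $\ALG$ call runs in time polynomial in the number of vertices of its input, which is $\le n_0$.

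For (ii): classify the nodes of $T$ as \emph{leaves} (base case, Line~2, or $\ALG$, Line~13), \emph{unary} (edge removal, Lines~4 or~12: one child, same vertex count, one fewer edge), and \emph{binary} (Lines~7--10: two children $G_1,G_2$ with $|V(G_1)|=|V_1|+3$, $|V(G_2)|=|V_2|+3$, so that $|V(G_1)|,|V(G_2)|\ge 5$, $|V(G_1)|+|V(G_2)|=|V(G)|+4$, and both are strictly smaller than $|V(G)|$, using $2\le|V_1|\le|V_2|$). I claim that $T$ has at most $n_0$ binary nodes: with $\phi(G'):=\max\{0,|V(G')|-5\}$, an induction on the subtree rooted at a node $x$ shows that the number of binary nodes in that subtree is at most $\phi(G_x)$; the only nontrivial case is a binary node $x$ with children $G_1,G_2$, where the count is at most $1+\phi(G_1)+\phi(G_2)=1+(|V(G_1)|-5)+(|V(G_2)|-5)=|V(G_x)|-5=\phi(G_x)$, using $|V(G_i)|\ge 5$ and $|V(G_x)|\ge 6$. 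Since in a rooted tree whose internal nodes are all unary or binary one has $(\#\text{leaves})=(\#\text{binary})+1$, there are at most $n_0+1$ leaves, and hence at most $n_0+1$ calls to $\ALG$. Finally, along any maximal chain of consecutive unary nodes the vertex count is fixed while the edge count strictly decreases, so each such chain has length less than $\binom{n_0}{2}$; and there are at most $2\cdot(\#\text{binary})+1\le 2n_0+1$ such chains (each begins at the root or at a child of a binary node), giving $O(n_0^3)$ unary nodes. Altogether $|T|=O(n_0^3)$, so combining with (i) the total running time is $O(n_0^3)\cdot\mathrm{poly}(n_0)=\mathrm{poly}(n_0)$.

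The one genuine subtlety is bounding the number of binary (split) invocations: a split \emph{increases} the total vertex count across the two subproblems by $4$, so a naive ``depth $\le n_0$, doubling per level'' argument would only yield an exponential bound; the shift by $-5$ in the potential $\phi$ is what absorbs this $+4$ while remaining nonnegative, precisely because each child of a split has at least $5$ vertices. The second, milder point is that the irrelevant-edge and removable-$5$-cycle steps do not decrease the vertex count at all, which is why one needs the polynomial bound on the length of a unary chain obtained from the monotonically decreasing edge count.
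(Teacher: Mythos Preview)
Your proof is correct and follows essentially the same approach as the paper's: both bound the per-invocation work, observe that the edge-removal steps (Lines~4 and~12) contribute only a polynomial factor since they strictly decrease $|E|$, and then separately bound the number of binary split calls. The only notable difference is in this last step: the paper writes down the recurrence $f(n)\le\max_{5\le k\le n/2+2}\{f(k)+f(n-k+4)\}$ and simply asserts it is polynomially bounded, whereas you make this explicit via the potential $\phi(G')=|V(G')|-5$, obtaining a clean linear bound on the number of binary nodes; your argument is in that respect more complete than the paper's.
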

\begin{proof}
    Let $n=|V(G)|$. First observe that each recursive call, excluding the corresponding subcalls, can be executed in polynomial time. In particular,     
    we can find one irrelevant edge, if any, in polynomial time by enumerating all the possible 2-vertex-cuts. Furthermore, we can find some removable $5$-cycle, if any, in polynomial time by enumerating all 5-cycles. Then, by Lemma~\ref{lem:removable}, we can indentify a removable edge in such cycle. We also remark that in Lines~\ref{line:irrelevant} and~\ref{line:removable5Cycle} we remove one edge, and we never increase the number of edges. Hence the corresponding recursive calls increase the overall running time by a polynomial factor altogether. 

    It is then sufficient to bound the number $f(n)$ of recursive calls where we execute Lines~\ref{line:nonIsolatingCondition}-\ref{line:nonIsolating} starting from a graph with $n$ nodes. Consider one recursive call on a graph $G$ with $n$ nodes, where the corresponding graph $G_1$ has $5\leq k\leq n/2+2$ nodes. Notice that $G_2$ has $n-k+4$ nodes. Thus, one has $f(n)\leq \max_{5\leq k\leq n/2+2}\{f(k)+f(n-k+4)\}$, which implies that $f(n)$ is polynomially bounded.
\end{proof}
Let us next show that $\RED$ produces a feasible solution. 
\begin{lemma}\label{lem:redFeasible}
    Given a 2VC graph $G$, $\RED(G)$ returns a feasible 2VCSS solution for $G$.
\end{lemma}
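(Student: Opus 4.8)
The plan is to prove feasibility by induction on $|V(G)|$, following the case structure of algorithm $\RED$ in Figure~\ref{fig:algo}. The base case (Lines 1--2) is trivial since $\OPT(G)$ is by definition a feasible 2VCSS solution. For the recursive cases we assume inductively that every recursive subcall returns a feasible 2VCSS solution of the (smaller) graph it is invoked on, and we must argue that the returned edge set spans $V(G)$ and induces a 2VC subgraph.

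The irrelevant-edge case (Lines 3--4) and the removable-5-cycle case (Lines 11--12) are essentially immediate: in both cases we recurse on $G\setminus\{e\}$, which is still 2VC (for the irrelevant edge this is part of the hypothesis that $\{u,v\}$ is a 2-vertex-cut, so removing the chord $e=uv$ cannot disconnect or create a cut vertex; for the removable edge this follows from Lemma~\ref{lem:removable}, since $\OPT(G\setminus\{e\})$ exists and equals an optimum of $G$ avoiding $e$). Here I should double-check that $G\setminus\{e\}$ still has at least the required number of nodes and is genuinely 2VC so the induction hypothesis applies; then $\RED(G\setminus\{e\})$ is a feasible 2VCSS solution of $G\setminus\{e\}$, and since $V(G\setminus\{e\})=V(G)$ and $E(G\setminus\{e\})\subseteq E(G)$, it is also feasible for $G$.

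The main work, and the main obstacle, is the non-isolating-cut case (Lines 5--10). First I would verify that $G_1$ and $G_2$ are genuinely 2VC graphs (so the recursive calls are legitimate): contracting $V_2$ to a single node $v_2$ cannot destroy 2-vertex-connectivity because $G$ was 2VC and $V_2$ was ``one side'' of the cut, so in $G_1$ the node $v_2$ is attached to the rest via at least the two cut vertices $u,v$; symmetrically for $G_2$. Then, because $H_1=\RED(G_1)$ is 2VC, the contracted node $v_2$ has degree at least $2$ in $H_1$, so the set $E_1$ of ``the two edges of $H_1$ incident to $v_2$'' is well-defined — but I need to be slightly careful here: $H_1$ might use more than two edges at $v_2$. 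The intended reading (consistent with the counting argument elsewhere in the paper) is that one may assume a canonical optimum uses exactly two such edges, or one simply picks two of them and discards the rest; I would state this explicitly. The crucial structural claim is then: $H:=(H_1\setminus E_1)\cup(H_2\setminus E_2)$ spans $V(G)$ and is 2VC. Spanning is clear since every node of $V_1\cup\{u,v\}$ appears in $G_2$ and every node of $V_2\cup\{u,v\}$ appears in $G_1$, and deleting $E_1,E_2$ only removes the artificial contracted nodes. For 2-vertex-connectivity I would argue: after removing any single node $w\in V(G)$, we must exhibit a path in $H$ between any two remaining nodes. If both endpoints lie on the same side, use the 2VC-ness of the corresponding $H_i$ (routing within that side, possibly through $u$ or $v$, which survive the contraction as themselves); the only subtlety is that a path in $H_i$ might pass through the contracted node $v_2$ (resp.\ $v_1$), but such a passage enters and leaves $v_2$ via two of its incident edges, and since $v_2$'s neighbors in $H_1$ all lie in $\{u,v\}$ (as $\{u,v\}$ is the cut), such a sub-path can be rerouted through the other side using $H_2$, which connects $u$ and $v$ robustly. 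Making this rerouting precise — in particular handling the deleted node $w$ possibly being $u$ or $v$, in which case one side's internal 2-connectivity plus a single surviving cut vertex must suffice — is the delicate part, and is exactly where the fact that $\{u,v\}$ is a genuine $2$-vertex-cut (both $u$ and $v$ have neighbors on both sides in $H_1$ and $H_2$) gets used.

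Finally, the last case (Line 13) is immediate: when we reach it $G$ is structured and in particular 2VC, and $\ALG$ is assumed to return a feasible 2VCSS solution. Assembling these cases completes the induction. I expect the write-up to be dominated by the non-isolating-cut analysis; the cleanest way to present the 2VC verification there is probably via the characterization ``$F$ is 2VC iff $|V(F)|\ge 3$ and for every $w$, $F\setminus\{w\}$ is connected,'' checking connectivity of $H\setminus\{w\}$ by a short case analysis on whether $w\in V_1$, $w\in V_2$, or $w\in\{u,v\}$, and in each case stitching together connected subgraphs of $H_1\setminus E_1$ and $H_2\setminus E_2$ that overlap in at least one of $u,v$.
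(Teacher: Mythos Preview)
Your overall strategy matches the paper's: induction following the branch structure of $\RED$, with the substantive work in the non-isolating-cut case, where you verify $G_1,G_2$ are 2VC and then show $H$ is 2VC by removing an arbitrary node and checking connectivity. That part is fine and essentially identical to the paper's argument.

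There is, however, a genuine gap in your induction setup. You induct on $|V(G)|$, but in the irrelevant-edge branch (Lines 3--4) and the removable-5-cycle branch (Lines 11--12) the recursive call is on $G\setminus\{e\}$, which has the \emph{same} vertex set as $G$. So the inductive hypothesis does not apply to those subcalls under your measure. The paper avoids this by inducting on $|E(G)|$ instead: edge removal strictly decreases $|E(G)|$, and in the non-isolating-cut branch one checks $|E(G_1)|,|E(G_2)|<|E(G)|$ (which holds because each $G_i$ loses all edges of the other side while gaining only the two edges to the contracted node). Switching your measure to $|E(G)|$ (or to $|V(G)|+|E(G)|$) fixes the problem with no change to the rest of your argument.

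One smaller point: your worry that $H_1$ ``might use more than two edges at $v_2$'' is unfounded. Since $\{u,v\}$ is a vertex cut, every edge of $G$ leaving $V_2$ has its other endpoint in $\{u,v\}$; after contracting $V_2$ to $v_2$ and keeping one copy of parallel edges, the node $v_2$ has exactly the two neighbors $u$ and $v$ in $G_1$. Hence $v_2$ has degree exactly $2$ in $G_1$, and any 2VC spanning subgraph $H_1$ must contain both of those edges, so $|E_1|=2$ automatically. No ``canonical optimum'' assumption is needed.
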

\begin{proof}
 Let us prove the claim by induction on $|E(G)|$. The base cases are given when $\RED(G)$ executes Lines~\ref{line:bruteForce} or~\ref{line:baseCase}: in these cases $\RED$ clearly returns a feasible solution. Consider an instance $G$ where $\RED(G)$ does not execute those lines (in the root call), and assume the claim holds for any instance $G'$ where $|E(G')| < |E(G)|$.
 By Lemma \ref{lem:irrelevant}, when $\RED$ recurses at  Line~\ref{line:irrelevant}, the graph $G\setminus \{e\}$ is 2VC, hence the recursive call returns a 2VC spanning subgraph by inductive hypothesis. A similar argument holds when Line~\ref{line:removable5Cycle} is executed, this time exploiting Lemma \ref{lem:removable}.

It remains to consider the case when Lines~\ref{line:nonIsolatingCondition}-\ref{line:nonIsolating} are executed. Let us first prove that $G_1$ and $G_2$ are 2VC. We prove it for $G_1$, the proof for $G_2$ being symmetric. Assume to get a contradiction that $G_1$ has a cut vertex $w$. There must be a path in $G[V_1\cup\{u, v\}]$ between $u$ and $v$, otherwise $G$ is not 2VC, so $w\neq v_2$. If $w\in \{u, v\}$, then $w$ is also a cut vertex in $G$, a contradiction. Therefore $w\in V_1$. If one of the components resulting from removing $w$ from $G_1$ contains neither $u$ nor $v$ then $w$ is also a cut vertex in $G$, a contradiction. Thus, removing $w$ from $G_1$ yields two connected components $C_u, C_v$, with $u\in C_u, v\in C_v$. But the path $uv_2v$ is still present in $G_1\setminus\{w\}$, contradicting the fact that $w$ is a cut vertex in $G_1$.

Notice that $|E(G_1)|, |E(G_2)| < |E(G)|$. Since $G_1$ and $G_2$ are 2VC, we can assume by inductive hypothesis that both $H_1$ and $H_2$ are 2VC. It is left to show that $H$ is 2VC.

Assume to get a contradiction that $H$ has a cut vertex $w$. If $w\in\{u, v\}$, then $w$ is also a cut vertex in either $H_1$ or $H_2$. Thus we can assume w.l.o.g. $w\in V_1$. Consider the components resulting from removing the vertex $w$ from $H$. If one of these components contains neither $u$ nor $v$ then $w$ is also a cut vertex in $H_1$. Thus removing $w$ from $H$ yields two connected components $C_u, C_v$, with $u\in C_u, v\in C_v$. But since $w\in V_1$, no edge from $H_2$ present in $H$ is removed by deleting $w$. In particular, there is a path from $u$ to $v$ in $H$, contradicting the fact that $w$ is a cut vertex.
\end{proof}
It remains to analyze the approximation factor of $\RED$. 
\begin{lemma}
$|\RED(G)|\leq\begin{cases}
    \opt(G), \quad & \text{if } |V(G)|< \max\{6,\frac{2}{\alpha-1}\};\\
    \alpha\cdot \opt(G) - 2, \quad & \text{if } |V(G)|\geq \max\{6,\frac{2}{\alpha-1}\}.
\end{cases}$
\end{lemma}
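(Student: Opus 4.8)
The plan is to prove both bounds simultaneously by induction on $|E(G)|$, in parallel with the proof of Lemma~\ref{lem:redFeasible} (in particular we reuse from there that $G_1,G_2$ are 2VC and that $|E(G_1)|,|E(G_2)|<|E(G)|$). Write $n_0:=\max\{6,\frac{2}{\alpha-1}\}$, and recall two facts used repeatedly: whenever $\RED(G)$ does not execute Line~2 we have $|V(G)|\geq n_0$; and any 2VC graph on at least $3$ nodes has minimum degree at least $2$, hence at least $|V(G)|$ edges, so $\opt(G)\geq|V(G)|$. The two base cases are immediate. If Line~2 runs, then $|V(G)|<n_0$ and $|\RED(G)|=\opt(G)$. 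If Line~13 runs, then $|V(G)|\geq n_0\geq\frac{2}{\alpha-1}$, so $\opt(G)\geq\frac{2}{\alpha-1}$, hence $\alpha\cdot\opt(G)-2\geq\opt(G)$, and $|\ALG(G)|\leq\max\{\opt(G),\alpha\cdot\opt(G)-2\}=\alpha\cdot\opt(G)-2$.

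For the inductive step, $|V(G)|\geq n_0$ and $G$ is not structured, so one of Lines~4, 12, 6--10 runs. Lines~4 and~12 recurse on $G':=G\setminus\{e\}$, which by Lemma~\ref{lem:redFeasible} is 2VC, has $|V(G')|=|V(G)|\geq n_0$ and $|E(G')|<|E(G)|$, and satisfies $\opt(G')=\opt(G)$: some optimum solution of $G$ avoids $e$ (Lemma~\ref{lem:irrelevant} resp.\ Lemma~\ref{lem:removable}) and every solution of $G'$ is a solution of $G$. The inductive hypothesis then gives $|\RED(G)|=|\RED(G')|\leq\alpha\cdot\opt(G')-2=\alpha\cdot\opt(G)-2$.

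The heart of the argument is Lines~6--10, and I would first establish two identities. (i) $|\RED(G)|=|H_1|+|H_2|-4$: since there are no edges between $V_1$ and $V_2$ and $G$ is 2VC, in $G_1$ the vertex $v_2$ is adjacent to $u$ and to $v$ and to nothing else, so $v_2$ has degree exactly $2$ in the 2VC graph $H_1$, i.e.\ $|E_1|=2$, and symmetrically $|E_2|=2$; moreover $H_1\setminus E_1$ and $H_2\setminus E_2$ are edge-disjoint because $uv\notin E(G)$ (an edge $uv$ would be irrelevant and hence already removed). (ii) $\opt(G)=\opt(G_1)+\opt(G_2)-4$: for ``$\geq$'', partition $\OPT(G)$ into its edges $A_1$ inside $V_1\cup\{u,v\}$ and its edges $A_2$ inside $V_2\cup\{u,v\}$ (a genuine partition as $uv\notin E(G)$), and verify that $A_1\cup\{uv_2,vv_2\}$ is a feasible 2VCSS solution for $G_1$: after deleting any single vertex of $G_1$, a path of $\OPT(G)$ between two nodes of $V_1\cup\{u,v\}$ that enters $V_2$ does so only through $u$ or $v$, so that portion of the path is either a removable detour or is replaceable by $u,v_2,v$; this gives $\opt(G_1)\leq|A_1|+2$, symmetrically $\opt(G_2)\leq|A_2|+2$, and summing yields $\opt(G_1)+\opt(G_2)\leq\opt(G)+4$; for ``$\leq$'', the graph $(\OPT(G_1)\setminus\{v_2\})\cup(\OPT(G_2)\setminus\{v_1\})$ has exactly $\opt(G_1)+\opt(G_2)-4$ edges and is 2VC by the argument of the last paragraph of the proof of Lemma~\ref{lem:redFeasible} applied with $H_i:=\OPT(G_i)$.

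Granting (i) and (ii), the target $|\RED(G)|\leq\alpha\cdot\opt(G)-2$ rearranges to $|H_1|+|H_2|\leq\alpha\cdot\opt(G_1)+\alpha\cdot\opt(G_2)-4\alpha+2$, and it remains to bound $|H_i|=|\RED(G_i)|$ through the inductive hypothesis according to whether $|V(G_i)|\geq n_0$. Recall $|V(G_1)|=|V_1|+3$, $|V(G_2)|=|V_2|+3$, with $5\leq|V(G_1)|\leq|V(G_2)|$, $|V(G_1)|+|V(G_2)|=|V(G)|+4\geq n_0+4$, and $\opt(G_i)\geq|V(G_i)|\geq5$. If both $|V(G_i)|\geq n_0$, the hypothesis gives $|H_i|\leq\alpha\cdot\opt(G_i)-2$ and the inequality reduces to $-4\leq-4\alpha+2$, i.e.\ $\alpha\leq\frac32$. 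If only $|V(G_1)|<n_0$ (the sole one that can be small, since $|V(G_1)|\leq|V(G_2)|$), then $|H_1|\leq\opt(G_1)$ and $|H_2|\leq\alpha\cdot\opt(G_2)-2$, and the inequality reduces to $(\alpha-1)(\opt(G_1)-4)\geq0$, true as $\opt(G_1)\geq5$. If both are $<n_0$, then $|H_i|\leq\opt(G_i)$ and the inequality reduces to $(\alpha-1)(\opt(G_1)+\opt(G_2))\geq4\alpha-2$, which follows from $\opt(G_1)+\opt(G_2)\geq n_0+4$ and $(\alpha-1)n_0\geq2$. The step I expect to be the main obstacle is identity (ii), and within it the rerouting/contraction argument that collapsing $V_2$ in $\OPT(G)$ preserves $2$-vertex-connectivity --- slightly delicate because $\OPT(G)[V_2]$ itself need not be connected --- while the remaining case analysis is just bookkeeping with the constant $n_0$.
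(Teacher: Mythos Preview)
Your proof is correct and follows essentially the same approach as the paper: induction on $|E(G)|$, the identity $|H|=|H_1|+|H_2|-4$, the comparison $\opt(G)\geq\opt(G_1)+\opt(G_2)-4$, and the same three-way case split on whether $|V(G_i)|\geq n_0$. You go slightly beyond the paper by proving full equality in (ii) (the paper notes in a footnote that equality holds but only proves and uses the direction ``$\geq$''), and you are more explicit about why $uv\notin E(G)$ and why $|E_i|=2$; otherwise the arguments coincide.
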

\begin{proof}
We prove the claim by induction on $|E(G)|$. The base cases correspond to the execution of Lines~\ref{line:bruteForce} and~\ref{line:baseCase}. Here the claim trivially holds. The claim holds by inductive hypothesis and by Lemmas \ref{lem:irrelevant} and \ref{lem:removable} when Lines~\ref{line:irrelevant} and~\ref{line:removable5Cycle}, resp., are executed. Notice that the $6$ that appears in the $\max$ in the claim of the lemma is meant to guarantee that the conditions of Lemma \ref{lem:removable} are satisfied.

It remains to consider the case when Lines~\ref{line:nonIsolatingCondition}-\ref{line:nonIsolating} are executed. Let $\OPT$ be a minimum 2VC spanning subgraph of $G$, and $\OPT_i$ be an optimal 2VCSS solution for $G_i$, $i\in\set{1, 2}$. We will later show:\footnote{In fact, it is not difficult to show that \eqref{eqn:red_opt} holds with equality, but it is not needed for our purposes and therefore we omit it.}
\begin{equation}\label{eqn:red_opt}
|\OPT| \geq |\OPT_1| + |\OPT_2| - 4
\end{equation}

Notice that, since $G$ contains no irrelevant edges, $H_1\setminus E_1$ and $H_2\setminus E_2$ are edge-disjoint. Moreover,  $|H_i\cap E_i| = 2$ for $i\in\set{1, 2}$, so we have $|H|= |H_1|+ |H_2| - 4$. Also, for $|V_i|\geq \frac{2}{\alpha-1}$, one has $|\OPT_i|\leq \alpha|\OPT_i|-2$, by the induction hypothesis. We now distinguish a few cases.

If $|V_2|< \max\{6,\frac{2}{\alpha-1}\}$, then $|H| = |H_1|+ |H_2| - 4 = |\OPT_1| + |\OPT_2| - 4 \leq |\OPT|$. 

If $|V_1| \geq \max\{6,\frac{2}{\alpha-1}\}$, then $|H| = |H_1|+ |H_2| - 4\leq \alpha|\OPT_1| - 2 + \alpha|\OPT_2| - 2  - 4 = \alpha(|\OPT_1| + |\OPT_2|) - 8 \leq \alpha |\OPT| + 4\alpha - 8 \leq \alpha |\OPT| - 2$. The last inequality uses the assumption $\alpha\leq 3/2$.

Finally, if $|V_1|< \max\{6,\frac{2}{\alpha-1}\}$ and $|V_2| \geq \max\{6,\frac{2}{\alpha-1}\}$, we have $|H| = |H_1|+ |H_2| - 4\leq |\OPT_1| + \alpha|\OPT_2| - 2 - 4 = (1-\alpha)|\OPT_1| + \alpha(|\OPT_1| + |\OPT_2|) - 6\leq (1-\alpha)|\OPT_1| + 4\alpha + \alpha |\OPT| - 6\leq \alpha |\OPT| - 2$. The last inequality holds since $|\OPT_1|\geq|V(G_1)|\geq 5$ and $\alpha > 1$.

It remains to prove \eqref{eqn:red_opt}. Assume by contradiction that $|\OPT|< |\OPT_1| + |\OPT_2| - 4$. Notice that, since $G$ contains no irrelevant edges, $E(G)=(E(G_1)\setminus E_1)\dot\cup (E(G_2)\setminus E_2)$ and thus $\OPT=((E(G_1)\setminus E_1)\cap \OPT)\dot\cup((E(G_2)\setminus E_2)\cap\OPT)$. Thus we have that either $|(E(G_1)\setminus E_1)\cap\OPT|<|\OPT_1|-2$ or $|(E(G_2)\setminus E_2)\cap\OPT|<|\OPT_2|-2$. Assume w.l.o.g. that $|(E(G_1)\setminus E_1)\cap\OPT|<|\OPT_1|-2$. Then $((E(G_1)\setminus E_1)\cap\OPT)\cup\set{uv_2,vv_2}$ is a 2VC spanning subgraph of $G_1$ of cardinality less than $\OPT_1$, a contradiction. \eqref{eqn:red_opt} follows.
\end{proof}

\subsection{A Canonical 2-Edge-Cover}
\label{sec:overview:canonical}

It remains to give a good enough approximation algorithm for structured graphs. The first step in our algorithm (similarly to prior work on related problems \cite{CDGKN20,GGJ23,HVV19}) is to compute (in polynomial time \cite[Chapter~30]{S03}) a minimum-cardinality 2-edge-cover\footnote{A 2-edge-cover $H$ of a graph $G$ is a subset of edges such that each node $v$ of $G$ has at least $2$ edges of $H$ incident to it.} $H$ of $G$. It is worth to remark that $|H|\leq \opt(G)$: indeed the degree of each node in any 2VC spanning subgraph of $G$ must be at least $2$. 

For technical reasons, we transform $H$, without increasing its size, into another 2-edge-cover which is \emph{canonical} in the following sense. We need some notation first. If a connected component of $H$ has at least $6$ edges we call it a \emph{large component}, and otherwise a \emph{small component}. Let $C$ be a large component of $H$. We call every maximal 2VC subgraph of $C$ with at least $3$ nodes a \emph{block}, and every edge of $C$ such that its removal splits that component into two connected components a \emph{bridge}. Notice that every edge of $C$ is either a bridge or belongs to some block in that component. Also, every edge of $C$ belongs to at most one block, thus there is a unique partition of the edges of $C$ into blocks and bridges (but a node of $C$ might belong to multiple blocks and to multiple bridges). Observe that $C$ is 2VC iff it has exactly one block. If $C$ is large but not 2VC we call it a \emph{complex component}. If a block $B$ of a complex component $C$ contains only one cut vertex of $C$, we say that $B$ is a \emph{leaf-block} of $C$. An example showcasing the notation we use can be found in Figure~\ref{fig:terminology}.

\begin{figure}
    \centering
    \includegraphics{/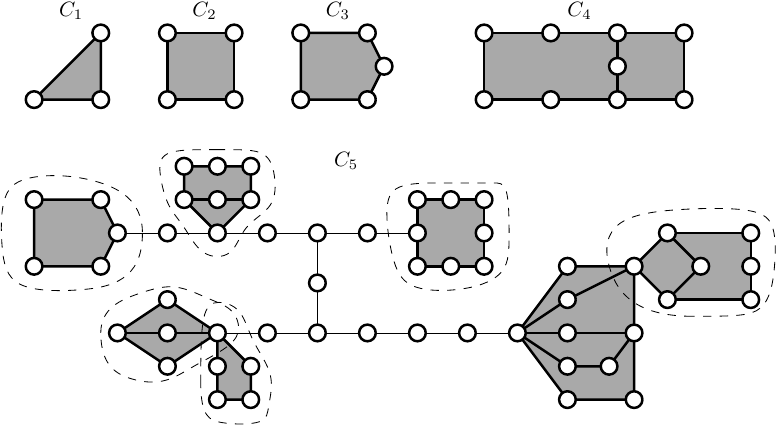}
    \caption{Gray regions represent blocks of $H$. $C_1, C_2$ and $C_3$ are small components of $H$, and $C_4$ and $C_5$ are large components. Also, $C_5$ is a complex component. The non-bold edges are bridges of $C_5$. The dashed lines encompass the leaf-blocks of $C_5$. Notice that $C_4$ contains a single block, while $C_5$ contains many of them.}
    \label{fig:terminology}
\end{figure}

Since $H$ is a $2$-edge-cover, every complex component $C$ must have at least $2$ leaf-blocks. Indeed, take the block-cutpoint graph $C^*$ of $C$. Since $C$ is connected, $C^*$ is a tree, and since $C$ is not 2VC, $C^*$ has at least two leafs. Those leafs cannot correspond to a bridge on $C$ because $H$ is a $2$-edge-cover, so they must correspond to leaf-blocks.

\begin{definition}[Canonical 2-Edge-Cover]
A 2-edge-cover $S$ of a graph $G$ is \emph{canonical} if: (1) Every small component of $S$ is a cycle; (2) For any complex component $C$ of $S$, each leaf-block $B$ of $C$ has at least $5$ nodes.
\end{definition}
\begin{restatable}{lemma}{lemcanonical}\label{lem:canonical}
Given a minimum 2-edge-cover $H$ of a structured graph $G$, in  polynomial time one can compute a \emph{canonical} 2-edge-cover $S$ of $G$ with $|S|=|H|$. 
\end{restatable}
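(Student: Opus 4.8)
The plan is to start from $S:=H$ and repeatedly apply a local repair operation that keeps $S$ a $2$-edge-cover of $G$ of size exactly $|H|$; consequently $S$ stays a \emph{minimum} $2$-edge-cover throughout, which already takes care of property~(1). Indeed, in a minimum $2$-edge-cover no two nodes of degree $\ge 3$ are adjacent (deleting the edge between them would yield a smaller $2$-edge-cover), so any component with at most $5$ edges, hence at most $5$ nodes, that is not a cycle would contain a node of degree $\ge 3$ and thus have at least $|V|+1$ edges on $|V|\le 4$ nodes; the only possibilities ($K_4$ and $K_4$ minus an edge) both contain two adjacent nodes of degree $\ge 3$, a contradiction. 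So we only need to repair violations of property~(2).

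Suppose some complex component $C$ of $S$ has a leaf-block $B$ on at most $4$ nodes, with unique cut vertex $c$. Since $C$ is complex it is not $2$VC, so $B\subsetneq C$ and $c$ has an $S$-edge leaving $B$; as $B$ is $2$VC, $\deg_S(c)\ge 3$. By the ``no two adjacent high-degree nodes'' property and maximality of $B$, the block $B$ is either a triangle or a $4$-cycle, $c$ is its only node of degree $\ge 3$, and every other node of $B$ has degree exactly $2$ with both its $S$-edges inside $B$. Using that $G$ is structured, there is a node $z\in V(B)\setminus\{c\}$ adjacent to $c$ in $B$ and having a neighbour $w$ in $G$ with $w\notin V(B)$: otherwise, if $B$ is a triangle then $c$ would be a cut vertex of $G$, and if $B$ is a $4$-cycle $c\,x\,c'\,y$ then $\{c,c'\}$ would be a non-isolating $2$-vertex-cut of $G$ (with singleton sides $\{x\}$ and $\{y\}$ and a nonempty third side); both contradict structuredness.

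The repair operation deletes $cz$ from $S$ and adds $zw$. The node $z$ keeps degree $2$ (its $S$-edges become $zw$ and its other edge inside $B$), $c$ stays at degree $\ge 2$, $w$ gains an edge, and $zw\in E(G)$; hence $S$ remains a minimum $2$-edge-cover of $G$. Since $cz$ lies in the $2$VC block $B$ it is not a bridge, so deleting it does not disconnect $C$, and adding an edge never disconnects anything; thus the number of connected components of $S$ does not increase, and it strictly decreases when $w$ lies outside $C$. The crucial, and most delicate, point is that the operation never increases the number of \emph{bad} leaf-blocks, i.e.\ leaf-blocks on fewer than $5$ nodes. Deleting $cz$ only dissolves $B$ into bridges and affects no other block; adding $zw$ can only merge blocks of the two components it joins and hence only enlarge them, so the sole way a new bad leaf-block can appear is that a block of the component of $w$ becomes a leaf-block after the merge, and in that case the component count has strictly dropped. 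Here one has to go through the configurations (triangle vs.\ $4$-cycle, $\deg_S(c)=3$ vs.\ $\ge 4$, and whether $w$ lies in $C$, in a small cyclic component, or in another complex component), using the maximality of $B$ to rule out any spurious new small block created on the $c$-side; this case analysis is the main obstacle.

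Putting the pieces together, the integer quantity $\Phi(S):=n\cdot(\#\text{bad leaf-blocks of }S)+(\#\text{connected components of }S)$, where $n=|V(G)|$, is non-negative and strictly decreases with every repair, so at most $O(n^2)$ repairs are performed; each is found and executed in polynomial time via the block--cut tree of $S$. When no repair applies, $S$ is a minimum $2$-edge-cover (hence all its small components are cycles) in which every leaf-block of every complex component has at least $5$ nodes, i.e.\ $S$ is canonical, and $|S|=|H|$ as required.
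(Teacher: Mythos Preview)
Your repair operation is exactly the one the paper uses, and your argument for property~(1) via ``no two adjacent high-degree nodes in a minimum $2$-edge-cover'' is clean. The gap is in termination: the potential $\Phi(S)=n\cdot(\#\text{bad leaf-blocks})+(\#\text{components})$ need not strictly decrease. Take a triangle leaf-block $B=\{c,z,z'\}$ with $\deg_S(c)=3$, the third edge at $c$ being a bridge $cw$ to a degree-$2$ node $w\in C$. Structuredness does not forbid $zw\in E(G)$; if you happen to pick this pair, the repair deletes $cz$ and adds $zw$. Now $\{c,z',z,w\}$ is a $4$-cycle whose unique cut vertex is $w$ (which just acquired degree $3$), i.e.\ a new bad leaf-block. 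One bad leaf-block was destroyed and one created, and since $w\in C$ the component count is unchanged: $\Phi$ does not move.

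The error is in the sentence ``adding $zw$ can only merge blocks of the two components it joins and hence only enlarge them.'' When $w\in C$, the edge $zw$ closes a cycle through the freshly-dissolved bridges of $B$ together with the bridge $cw$; this \emph{creates} a block out of bridges rather than enlarging an existing one, and nothing prevents it from having only $4$ nodes. Your ``maximality of $B$'' does not help, since before the swap $cw$ is a genuine bridge and $B$ \emph{is} maximal; the small block materialises only after the swap. So the case analysis you defer cannot establish the claim as stated. The paper avoids this by using a different potential: it swaps as long as either the number of components drops, or that number stays the same and the total count of bridges plus blocks drops. In the scenario above, the triangle block and the bridge $cw$ (two objects) collapse into one $4$-cycle block, so the second coordinate strictly decreases. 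Once no such swap exists, any surviving bad leaf-block would admit exactly your repair and hence a further drop---a contradiction. Replacing $\Phi$ by this lexicographic potential fixes your argument with no other changes.
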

\begin{proof}
We start with $S:=H$. At each step if there are edges $e\in E(G)\setminus E(S)$ and  $e'\in E(S)$, such that $S':=S\cup \{e\}\setminus\{e'\}$ is a $2$-edge-cover that has fewer connected components than $S$, or it has the same number of connected components as $S$ but has fewer bridges and blocks in total than $S$, then we replace $S$ by $S'$. This process clearly terminates within a polynomial number of steps, returning a 2-edge-cover $S$ of the same size as the initial $H$ (hence in particular $S$ must be minimal).

Let us show that the final $S$ satisfies the remaining properties. Assume by contradiction that $S$ has a connected component $C$ with at most $5$ edges that is not a cycle. By a simple case analysis $C$ must be a $4$-cycle plus one chord $f$. However, this contradicts the minimality of $S$.

Finally assume by contradiction that $S$ has a complex component $C$, with a leaf-block $B$ such that $B$ has at most $4$ nodes. By the minimality of $S$, $B$ must be a $3$-cycle or a $4$-cycle. Let $B=v_1\ldots v_k$, $k\in \{3,4\}$, and assume w.l.o.g. that $v_1$ is the only cut-vertex of $C$ that belongs to $B$. In this case we show that there must exist an edge $e=uz\in E(G)$ such that $u\in \{v_2,v_k\}$ and $z\notin B$. If this is not true then for $k=3$, $v_1$ is a cut-vertex in $G$, and for $k=4$, $\{v_1,v_3\}$ form a non-isolating cut, leading to a contradiction in both cases.  Consider $S':=S\cup \{e\}\setminus\{uv_1\}$. Note that $S'$ is a $2$-edge-cover of the same size as $S$. Since $uv_1$ belongs to a cycle of $S$, then the number of connected components in $S'$ is not more than in $S$. If $z\notin C$ the number of connected components of $S'$ is less than in $S$, which is a contradiction. Otherwise, the number of connected components of $S$ and $S'$ is the same. Now in $S'$ all the bridges and the blocks of $S$ that shared an edge with any path from $u$ to $z$ in $S\setminus\{uv_1\}$ become part of the same block and all the other bridges and blocks remain the same. This is a contradiction as the total number of bridges and blocks of $S'$ is less than in $S$. 
\end{proof}

\subsection{A Credit-Based Argument}
\label{sec:overview:credits}

Next assume that we are given a minimum-cardinality canonical 2-edge-cover $H$ of a structured graph $G$. Observe that, for $|H|\leq 5$, $H$ is necessarily a cycle of length $|H|$ by the definition of canonical 2-edge-cover. In particular $H$ is already a feasible (and optimal) solution. Therefore, we next assume $|H|\geq 6$. Starting from $S=H$, we will gradually add edges to (and sometimes remove edges from) $S$, until $S$ becomes 2VC (see Section \ref{sec:2VCSS}). In order to keep the size of $S$ under control, we use a credit-based argument similarly to prior work \cite{CDGKN20,GGJ23,GJT22}. At high level, the idea is to assign a certain number of credits $\credit(S)$ to $S$. Let us define the cost of $S$ as $\cost(S)=|S|+\credit(S)$. We guarantee that for the initial value of $S$, namely $S=H$, $\cost(S)\leq \frac{4}{3}|H|$. Furthermore, during the process $\cost(S)$ does not increase. 

During the process we maintain the invariant that $S$ is canonical. Hence, the following credit assignment scheme is valid for any intermediate $S$:
\begin{enumerate}\itemsep0pt
    \item To every small component $C$ of $S$ we assign $\credit(C)=|E(C)|/3$ credits.
    \item Each large component $C$ receives $\credit(C)=1$ credits.
    \item Each block $B$ receives $\credit(B)=1$ credits.
    \item Each bridge $b$ receives $\credit(b)=1/4$ credits.
\end{enumerate}
An example of such assignment can be seen in Figure~\ref{fig:creditAssignment}. We remark that each large connected component $C$ of $S$ which is 2VC receives one credit in the role of a component, and one additional credit in the role of a block of that component. Let $\credit(S)\geq 0$ the total number of credits assigned to the subgraphs of $S$. It is not hard to show that the initial cost of $S$ is small enough.
\begin{figure}
    \centering
    \includegraphics{/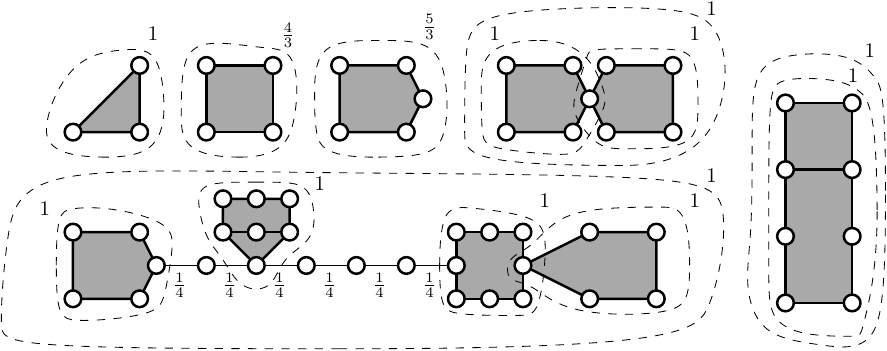}
    \caption{An example of a valid credit assignment. Gray regions represent blocks of $S$. The dashed lines encompass the objects that receive credit (other than bridges). We remark that a large 2VC component receives $2$ credits, one credit in the role of a component, and one additional credit in the role of a block of that component.}
    \label{fig:creditAssignment}
\end{figure}
\begin{lemma}\label{lem:initialCost}
$\cost(H)\leq \frac{4}{3}|H|$.
\end{lemma}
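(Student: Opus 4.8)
The plan is to prove the bound one connected component at a time: I will show that every connected component $C$ of $H$ satisfies $\credit(C) \le \frac{1}{3}|E(C)|$, where $\credit(C)$ collects the credit assigned to $C$ as a component together with the credit of all blocks and bridges contained in $C$. Summing this inequality over all components of $H$ gives $\credit(H) \le \frac{1}{3}|H|$, and hence $\cost(H) = |H| + \credit(H) \le \frac{4}{3}|H|$. It therefore suffices to verify the per-component inequality in the three possible cases: $C$ small, $C$ large and 2VC, and $C$ complex.

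If $C$ is a small component, then by the credit scheme $\credit(C) = |E(C)|/3$ exactly, so there is nothing to prove. If $C$ is large and 2VC, then $C$ is a single block with no bridges, so $\credit(C) = 1 + 1 = 2$; since a large component has at least $6$ edges, indeed $\credit(C) = 2 \le \frac{1}{3}|E(C)|$.

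The only interesting case is a complex component $C$. Let $\beta$ be the number of its blocks and $\gamma$ the number of its bridges, so that $\credit(C) = 1 + \beta + \gamma/4$. Since $C$ is complex it has at least two leaf-blocks, so $\beta \ge 2$; by the canonical property each leaf-block has at least $5$ nodes, hence at least $5$ edges (a $2$VC graph on $k$ nodes has minimum degree $\ge 2$ and thus at least $k$ edges), while every other block has at least $3$ edges. Because the blocks and bridges partition $E(C)$, we get $|E(C)| \ge 5\cdot 2 + 3(\beta - 2) + \gamma = 3\beta + \gamma + 4$. Substituting this into the target inequality $1 + \beta + \gamma/4 \le \frac{1}{3}(3\beta + \gamma + 4)$, it simplifies to $1 \le \frac{4}{3} + \frac{\gamma}{12}$, which holds for every $\gamma \ge 0$.

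I do not expect a genuine obstacle; the argument is a short accounting. The two points that need care are: (i) invoking the already-observed fact that a complex component has at least two leaf-blocks, combined with the canonical property that each leaf-block spans at least $5$ nodes, to obtain the lower bound on $|E(C)|$; and (ii) using that blocks and bridges partition the edge set of $C$, so that no edge is double-counted when summing block and bridge credits against $|E(C)|$. Everything else is routine arithmetic, and note the hypothesis $|H|\ge 6$ is not actually needed for this lemma.
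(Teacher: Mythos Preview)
Your proof is correct and follows essentially the same per-component accounting as the paper: both arguments use that a complex component has at least two leaf-blocks with $\ge 5$ edges each, that every other block has $\ge 3$ edges, and that bridges cost only $1/4$ credit. The only cosmetic difference is that the paper phrases this as a redistribution (give $1/3$ to every non-bridge edge and $1/4$ to every bridge, then shuffle credits to the component and its blocks), whereas you write the resulting inequality $1+\beta+\gamma/4 \le \tfrac{1}{3}(3\beta+\gamma+4)$ directly and verify it algebraically.
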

\begin{proof}
Let us initially assign $\frac{1}{4}$ credits to the bridges of $H$ and $\frac{1}{3}$ credits to the remaining edges. Hence, we assign at most $\frac{|H|}{3}$ credits in total. We next redistribute these credits to satisfy the credit assignment scheme.

Each small component $C$ retains the credits of its edges. If $C$ is large and 2VC then it has exactly one block $B$. Since $|E(C)|\geq 6$, its edges have at least $2$ credits, so we can assign $1$ credit to $C$ and $1$ to $B$.

Now consider a complex component $C$ of $H$. The bridges keep their own credits. Since $H$ is a $2$-edge-cover and $C$ is complex, then $C$ has at least $2$ leaf-blocks $B_1$ and $B_2$. By the definition of canonical, $B_1$ and $B_2$ have at least $5$ nodes (hence edges) each. Therefore, together they have at least $\frac{10}{3}>3$ credits, which is sufficient to assign one credit to $C$, $B_1$ and $B_2$.
Any other block $B$ of $C$ (which has at least $3$ edges) keeps the credits of its edges, hence at least $1$ credit.  Observe that $\cost(H)=|H|+\credit(H)\le \frac{4}{3}|H|$ as desired.    
\end{proof}
As mentioned before, starting from $S=H$, we will transform $S$ without increasing its cost $\cost(S)$ until it becomes a single large component $C$ that is 2VC (and thus it has exactly one block $B$) and therefore a 2VC spanning subgraph of $G$. Notice that at the end of the process $\credit(S)=\credit(C)+\credit(B)=2$, hence $|S|= \cost(S)-2\leq \frac{4}{3}|H|-2$. Combining this with the trivial case for $|H|\leq 5$, we obtain the following lemma. 
\begin{restatable}{lemma}{lemmainLemma}\label{lem:mainLemma}
Given a canonical minimum 2-edge-cover $H$ of a structured graph $G$, one can compute in polynomial time a 2VCSS solution $S$ for $G$ with $|S|\leq \max\{|H|,\frac{4}{3}|H| - 2\}$.
\end{restatable}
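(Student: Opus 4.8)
The goal is to start from a canonical minimum 2-edge-cover $H$ of a structured graph $G$ with $|H|\geq 6$, and iteratively transform $S$ (initially $S=H$) into a 2VC spanning subgraph while maintaining that $S$ is canonical and that $\cost(S)=|S|+\credit(S)$ never increases. By Lemma~\ref{lem:initialCost} we start with $\cost(S)\leq \frac43|H|$, and when $S$ is a single 2VC component we have $\credit(S)=2$, so $|S|\leq \frac43|H|-2$, which together with the trivial case $|H|\leq 5$ gives the bound $\max\{|H|,\frac43|H|-2\}$. So the entire content is the transformation procedure and the invariant-maintenance argument.

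I would structure the process as a sequence of phases, each reducing a natural complexity measure (number of components, then number of small components, then number of bridges/blocks inside the surviving complex component). First, while $S$ has more than one component, I want to \emph{merge} components. The key resource is credit: a small cycle component of length $\ell$ carries $\ell/3\geq 2$ credits (since lengths are at least $3$, and actually the smallest useful case is a triangle with one credit to spare after merging), and a large component carries $1$ credit plus $1$ per block. When two components are joined by adding one or two edges of $G$ between them, the number of edges goes up by $1$ or $2$, but the freed-up credits (one component's worth, plus we may be able to absorb some blocks/bridges into a single new block) should cover it. Because $G$ is structured and $H$ is canonical, I can always find suitable connecting edges: $G$ is 2VC so between any two components there are at least two vertex-disjoint "bridging" structures, and leaf-blocks have $\geq 5$ nodes / small components are cycles, which gives enough degree-$2$ vertices and enough spare credit. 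The delicate point is to always land back in a canonical configuration — e.g., merging two triangles by two edges could create a small non-cycle component, so one must instead merge into a cycle or into a large component. Handling all the merge patterns so the output is canonical and cost-nonincreasing is where the bulk of the case analysis lives (this is the content deferred to Section~\ref{sec:2VCSS}).

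Once $S$ is a single component, if it is already 2VC we are done; otherwise it is a single complex component with $\geq 2$ leaf-blocks, $\geq 5$ nodes each, plus possibly bridges and internal blocks. Now I want to repeatedly add edges of $G$ to \emph{destroy} cut vertices: pick a leaf-block $B$ with cut vertex $w$, and since $G$ is 2VC and has no non-isolating cuts, there is an edge of $G$ from $B\setminus\{w\}$ to outside $B$; adding it merges $B$ and everything on a $w$-avoiding path into one larger block. The credit accounting: $B$ gives up its $1$ block-credit, the bridges on the path give up $1/4$ each, and internal blocks give up $1$ each; the merged block needs only $1$ credit, so as long as the added edge count is covered by these freed credits we are fine — and because $G$ is structured, one added edge typically suffices to absorb at least one block/bridge plus close off the path. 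Iterating drives the number of blocks down to one, i.e. $S$ becomes 2VC. Throughout, "structured" is used to guarantee the needed edges exist (no irrelevant edges, no non-isolating cuts, no removable 5-cycles — the last rules out the pathological small leaf-block obstruction), and "canonical" is used to guarantee enough initial credit is sitting in the right places.

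\textbf{Main obstacle.} The hard part is not the high-level potential argument but verifying, in every one of the many local configurations, that the edge(s) guaranteed by 2-vertex-connectivity and structuredness can be chosen so that (i) the resulting $S$ is still canonical — in particular that no small non-cycle component and no leaf-block with $<5$ nodes is ever created — and (ii) the credits freed by the merge exactly cover the $1$ or $2$ new edges, with the bridge credit of $1/4$ being just barely enough in the tightest cases (e.g. a long path of bridges). Getting the credit scheme to be simultaneously "payable initially" (Lemma~\ref{lem:initialCost}) and "robust under every merge step" is the real engineering, and I expect it to require carefully splitting into cases by the types (small cycle / 2VC large / complex) and sizes of the pieces being combined, exactly as in the analogous 2ECSS arguments of \cite{GGJ23,HVV19}.
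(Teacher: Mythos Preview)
Your high-level framework---start from $S=H$, maintain $\cost(S)=|S|+\credit(S)$ nonincreasing, finish with a single 2VC block carrying $2$ credits---is exactly the paper's. But the phase ordering you propose is essentially the reverse of what the paper does, and several of your accounting claims do not survive contact with the actual credit scheme.

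First, a slip: a small $\ell$-cycle carries $\ell/3\in\{1,4/3,5/3\}$ credits, never $\geq 2$; a triangle has exactly $1$. This matters, because your ``merge two small components with one or two edges'' step is tight or fails in several configurations. The paper devotes the whole of Lemma~\ref{lem:smallComponents} (with the shortcut-pair machinery of Lemmas~\ref{lem:shortcutPair}--\ref{lem:nonLeafSmallComponent}) to this, and even then cannot eliminate \emph{pendant $4$-cycles}: $4$-cycles adjacent to a single large component carry only $4/3$ credit, which is not enough to absorb them into a complex neighbour; they must be left alone until that neighbour is already 2VC (Lemma~\ref{lem:last4Cycles}). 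Your plan to ``merge everything first'' hits this obstruction head-on.

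Second, your phase~1 (reduce to one component) is not cost-free in general. Merging a large 2VC component into a complex one by a single edge costs $+1$ edge, frees $1$ component credit, and creates a bridge costing $+1/4$: net $+1/4$. The paper avoids this by \emph{not} merging until everything is large and 2VC, and then using the Nice-Cycle Lemma~\ref{lem:NiceCycle} to guarantee the two added edges hit distinct vertices on each side so the union stays 2VC (Lemma~\ref{lem:largeComponents}). You never invoke this mechanism.

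Third, your phase~2 sketch---``one added edge typically suffices to absorb at least one block/bridge''---is the point where the argument actually breaks. Absorbing one bridge frees $1/4$; you need a full $1$ beyond the leaf-block's own credit. In Lemma~\ref{lem:bridgeCovering} the paper almost always adds \emph{two} (component-disjoint) extending paths, often also deleting a bridge, and crucially exploits a \emph{second} leaf-block $B'$ as a credit source; the 3-Matching Lemma (Lemma~\ref{lem:3matchingPath}) is what guarantees the needed paths exist with the right endpoints. The paper's ordering---keep other components around while repairing a complex one---is what makes ``extending paths through other components'' available and keeps $\widetilde\cost(P)\leq 1$; your ordering collapses all those paths to single edges, which still has $\widetilde\cost=1$ but removes the flexibility that makes the case analysis close. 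The content of Section~\ref{sec:2VCSS} is precisely the dozen-odd cases showing how to pick the two paths so that the released block/bridge credits cover the two added edges; ``one edge, one bridge'' is never the typical case.
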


Given the above results, it is easy to prove Theorem \ref{thm:mainTheorem}.
\begin{proof}[Proof of Theorem \ref{thm:mainTheorem}]
By Lemma \ref{lem:reduction} it is sufficient to compute a solution of cost at most $\max\{\opt(G),\frac{4}{3}\cdot\opt(G) - 2\}$ on a structured graph $G$. We initially compute a canonical minimum $2$-edge-cover $H$ of $G$ via Lemma \ref{lem:canonical}.
Then we apply Lemma \ref{lem:mainLemma} to obtain a 2VCSS solution $S$ with $|S|\leq \max\{|H|,\frac{4}{3}|H|-2\}\leq \max\{\opt(G),\frac{4}{3}\opt(G)-2\}$. Clearly all steps can be performed in polynomial time. 
\end{proof}

It remains to discuss the proof of Lemma \ref{lem:mainLemma} (assuming $|H|\geq 6$), which is the most technical part of our paper. The construction at the heart of the proof consists of a few stages. Recall that we start with a 2-edge-cover $S=H$, and then gradually transform $S$ without increasing $\cost(S)$. We now define a specific type of $4$-cycles that need special care.

\begin{definition}[pendant $4$-cycle]
    Let $S$ be a 2-edge-cover of a graph $G$ and $C'$ be a large component of $S$. We say that a connected component $C$ of $S$ is a \emph{pendant $4$-cycle} (of $C'$) if $C$ is a $4$-cycle and all the edges of $G$ with exactly one endpoint in $V(C)$ have the other endpoint in $V(C')$.
\end{definition}

In the initial phase of our construction (see Section \ref{sec:removeSmall}), we eliminate all small components from $S$, except for pendant $4$-cycles. Pendant $4$-cycles are harder to remove, and thus they require separate arguments. They will be dealt with in later stages of the algorithm.

\begin{restatable}{lemma}{lemsmallComponents}\label{lem:smallComponents}
    Let $G$ be a structured graph and $H$ be a canonical minimum $2$-edge cover of $G$, with $|H|\geq 6$. In polynomial time one can compute a canonical $2$-edge-cover $S$ of $G$ such that the only small components of $S$ are pendant $4$-cycles and $\cost(S)\leq \cost(H)$.
\end{restatable}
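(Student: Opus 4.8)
The plan is to process the small components of $H$ one at a time and, for each one that is not already a pendant $4$-cycle, either merge it into some other component or absorb it into a large component, in each case paying for the operation with the credits freed up. Since $G$ is $2$VC, every small component $C$ of the current $2$-edge-cover $S$ must have an edge of $G$ leaving it; pick such an edge $e=xy$ with $x\in C$ and $y$ in some other component $C'$ of $S$. By the canonical property, $C$ is a cycle, so $x$ has exactly two incident edges in $S$, say $xu$ and $xv$; set $S':=S\cup\{e\}\setminus\{xu\}$. This keeps $S'$ a $2$-edge-cover, it does not create new small non-cycle components in a problematic way, and it reduces the number of connected components by one, merging $C$ and $C'$. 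The first thing to check is that $S'$ is again canonical (small components stay cycles, and leaf-blocks of complex components keep $\ge 5$ nodes); if joining $C$ to $C'$ would violate this — e.g. by producing a small component that is a path or a leaf-block that is too small — we argue, using that $G$ is structured (no non-isolating cuts, no irrelevant edges, no removable $5$-cycles) and $|H|\ge 6$, that a better choice of $e$ or of the removed edge exists, or restore canonicity by the same local-exchange moves used in the proof of Lemma \ref{lem:canonical}.

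The core of the argument is the credit bookkeeping, showing $\cost(S')\le\cost(S)$. When $C$ (a cycle on $k$ edges, $3\le k\le 5$) is merged with another component we gain $\credit(C)=k/3\ge 1$ credits; we then account for the two sub-cases. If $C$ is merged with a \emph{small} component $C'$ (on $k'$ edges) to form a large component, the resulting component has $k+k'\ge 6$ edges and needs $1$ credit as a component plus $1$ credit for its single block — and we have $k/3+k'/3\ge 2$ available, which suffices. If $C'$ is already a \emph{large} component, then after the exchange $C$'s edges are distributed among the blocks/bridges of $C'$: each block gets $\ge 1$ credit from its own $\ge 3$ edges, each new bridge needs only $1/4$, and the $k/3\ge 1$ freed credits of $C$ cover any shortfall (in the worst case the whole cycle $C$ becomes a new block, needing $1$ extra credit, exactly $k/3\ge 1$). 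One delicate point is that the edge $xu$ we remove could have been a bridge of a large component $C'=C'$; since $C$ is disjoint from $C'$ this does not arise, but the symmetric situation where $e$'s endpoint $y$ sits in a block versus on a bridge of $C'$ must be handled, and in all cases the $\le$ inequality on cost holds with room to spare because $k\ge 3$.

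Iterating this as long as there is a small component that is not a pendant $4$-cycle terminates in polynomially many steps, since each step strictly decreases the number of connected components (or, when we merely restore canonicity, the total number of blocks and bridges, exactly as in Lemma \ref{lem:canonical}); throughout, $\cost$ is non-increasing, so at the end $\cost(S)\le\cost(H)$, $S$ is canonical, and its only small components are pendant $4$-cycles. The main obstacle I anticipate is the case analysis needed to guarantee that the local exchange both preserves canonicity and does not overspend credits when $C$ is a $4$- or $5$-cycle attaching to a large component in an awkward way — in particular ruling out, via the structured-graph hypotheses, the configurations where no good exchange edge exists; this is where the $|H|\ge 6$ assumption and the absence of non-isolating cuts and removable $5$-cycles are essential, and it is why pendant $4$-cycles have to be exempted and deferred to a later stage.
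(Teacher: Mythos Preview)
Your basic operation is broken: if $C$ is a small cycle and $x\in C$ has neighbours $u,v$ in $C$, then $S':=S\cup\{e\}\setminus\{xu\}$ leaves $u$ with degree~$1$, so $S'$ is not a $2$-edge-cover at all. Even if you repair this by, say, adding one edge without removing any, or by adding two outgoing edges and removing one cycle edge, the credit accounting you sketch does not close. For instance, attaching a $3$-cycle $C$ to a large component $C'$ by a single edge makes $C$ a new leaf-block and the added edge a bridge; you gain $\credit(C)=1$ but must pay $1$ for the new block, $\tfrac14$ for the bridge, and $1$ for the extra edge --- cost strictly increases. Moreover the new leaf-block has only $3$ or $4$ nodes, so canonicity fails too, and ``restore canonicity by the moves of Lemma~\ref{lem:canonical}'' does not help: those moves preserve $|S|$ but give you no control over $\cost(S)$.

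The paper's proof is structurally different. It first distinguishes small components adjacent to at least two other components from those adjacent to exactly one (Lemmas~\ref{lem:nonLeafSmallComponent} and~\ref{lem:leafSmallComponent}). The key device is the notion of a \emph{shortcut pair} (Definition~\ref{def:shortcutPair}): instead of removing an arbitrary cycle edge, one opens $C$ into a Hamiltonian path $P_{uv}$ whose \emph{two} endpoints are matched to nodes outside $C$, so that the degrees stay $\geq 2$ and $V(C)$ lands inside a block or along a path of bridges of the merged component. The existence of such pairs (Lemma~\ref{lem:shortcutPair}) is exactly where the structured hypotheses bite: the $3$-matching Lemma supplies three outgoing edges, and the absence of removable $5$-cycles guarantees the needed chord when the matched vertices of a $5$-cycle are non-adjacent. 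For small components with at least two neighbours one then grows a long path through several small components before closing it (this is where the $\tfrac13$ credit per small-cycle edge versus $\tfrac14$ per bridge yields the slack), and for the single-neighbour case a separate case analysis on $|C|\in\{3,5\}$ and on the local structure of $C'$ is required. Your proposal contains none of this machinery; the single-edge exchange you describe cannot be made to work, and the credit estimates you give are not correct in the cases that matter.
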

In the second stage of our construction (see Section \ref{sec:removeComplex}) we reduce to the case where $S$ consists of large 2VC components only.

\begin{restatable}{lemma}{lemcomplexComponents}\label{lem:complexComponents}
    Let $G$ be a structured graph and $S$ be a canonical $2$-edge-cover of $G$ such that the only small components of $S$ are pendant $4$-cycles. In polynomial time one can compute a canonical 2-edge-cover $S'$ of $G$ such that all the connected components of $S'$ are 2VC and large, and $\cost(S')\leq \cost(S)$.
\end{restatable}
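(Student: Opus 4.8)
The plan is to repeatedly attack a single complex component of $S$ until none remains, each time either destroying it (merging it into another large component, or converting it into a 2VC component) while paying with the credits we have available. Recall that $S$ is canonical, so every complex component $C$ has at least two leaf-blocks, each with at least $5$ nodes, and every bridge carries $1/4$ credit while every block and every component carries $1$ credit. Pick a complex component $C$. Its block-cutpoint tree $C^\ast$ has at least two leaves, which are leaf-blocks. I would root $C^\ast$ at an arbitrary block and pick a deepest leaf-block $B$ of $C$; let $u$ be the unique cut vertex of $C$ in $B$. Since $B$ has at least $5$ nodes, there is a vertex $x \in V(B)\setminus\{u\}$, and because $G$ is 2VC there is an edge $e = xy$ of $G$ with $y \notin V(B)$ (otherwise $u$ together with one extra vertex, or $u$ alone, would be a small cut of $G$ separating $V(B)\setminus\{u\}$ — here we must be careful to use that $G$ has no non-isolating cuts and that $|V(B)\setminus\{u\}| \geq 4$, so the cut it would create is neither trivial nor isolating).

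The key local move is: add $e$ to $S$ and, if needed, remove one suitably chosen edge incident to $u$ (or on the $B$-side path) so that $S$ remains a 2-edge-cover and $B$ gets ``absorbed'' — concretely, adding $e$ creates a path between $x$ and $y$ closing a cycle through part of $B$ and part of whatever component/branch contains $y$, which by an ear-decomposition / Fact-type argument merges blocks and bridges along the way. I would split into two cases according to where $y$ lies:

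\textbf{Case 1: $y$ lies outside $C$} (in another large component, a pendant $4$-cycle, or — after the fact — we argue this cannot be a small non-pendant component since those were eliminated). Then adding $e$ reduces the number of connected components by one. The disappearing component $C$ (and any component merged in) releases its $1$ component-credit; together with the $1$ block-credit of $B$ and, if we remove an edge, the released credit, we can pay to keep the merged component canonical. The subtle point is restoring canonicity: the new merged component is large, but it may be complex with new leaf-blocks that are too small; I would argue that any newly-created small leaf-block can be re-absorbed by the same kind of move, or handled by the minimality-style local exchanges already used in Lemma~\ref{lem:canonical}, without net credit loss.

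\textbf{Case 2: $y$ lies in the same component $C$.} Then $x$ and $y$ are joined by a path in $C$ through the cut vertex $u$, so adding $e$ closes a cycle that swallows $B$, the bridge(s) on the $B$-to-$u$ side, and the blocks/bridges on the path from $u$ to $y$, merging all of them into one block; one can then delete the bridge incident to $u$ on the $B$-side (it is now a chord of a cycle) keeping the 2-edge-cover property. The released credits are: $1$ from block $B$, plus $1/4$ from each absorbed bridge, plus $1$ from each other absorbed block, minus the $1$ new block-credit of the merged block — a net gain provided at least one block or several bridges are absorbed, which holds because the path from $u$ to $y$ has positive length (as $y\notin V(B)$). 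If $y$ happens to lie in a block adjacent to $B$ through $u$ with no intervening bridge, then we still absorb that block's credit. The process strictly decreases the number of (bridges + blocks) or the number of components, so it terminates; at the end $S'$ has only large components, and every large component with no cut vertex is 2VC.

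The main obstacle I expect is \textbf{Case 2 bookkeeping combined with restoring the canonical form}: after a merge we may create a new complex component whose leaf-blocks are $3$- or $4$-cycles, and we must show this can always be fixed using structuredness of $G$ (no non-isolating cuts, no irrelevant edges, no removable $5$-cycles) exactly as in the proof of Lemma~\ref{lem:canonical}, and that the credit released in the merge is enough to also cover these repairs. A secondary obstacle is pendant $4$-cycles: when $y$ falls inside a pendant $4$-cycle $C''$ of $C$, we must check that absorbing $C''$ does not leave us credit-short — here the $|E(C'')|/3 = 4/3$ credit of the pendant cycle, plus the component credit of $C$, should suffice, but the argument needs the definition of ``pendant'' (all external edges of $C''$ go to $C$) to guarantee we actually end up merging $C''$ into $C$ rather than stranding it.
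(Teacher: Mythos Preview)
Your proposal has a genuine gap in the credit accounting, and it is not repairable by the local fix you suggest. Consider the basic situation in your Case~2: $B$ is a leaf-block with cut vertex $u$, the only edge leaving $B$ in $C$ is the bridge $uu_1$, and the edge $e=xy$ you find has $y=u_1$ (nothing in your argument rules this out; indeed the paper shows this is often the only option). Adding $e$ creates a block $B'$ on $V(B)\cup\{u_1\}$, releasing $\credit(B)+\credit(uu_1)=1+\tfrac14$ while costing $1$ for the new edge and $1$ for $\credit(B')$, so $\cost$ goes up by $\tfrac34$. Your proposed remedy, ``delete the bridge incident to $u$ \dots\ it is now a chord of a cycle'', is simply false: $uu_1$ is an edge \emph{of} the new cycle, not a chord, and removing it makes $x$ a cut vertex separating $u_1$ from the rest of $B'$, so you have just traded the bridge $uu_1$ for a bridge $xu_1$ with no progress. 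The same shortfall occurs whenever $y\in\{u_1,u_2,u_3\}$ (you absorb at most $3$ bridges, worth $\tfrac34$, but need $1$), and the paper's proof shows these are precisely the cases that arise once the ``easy'' cases are excluded.

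The missing idea is that one edge (or one extending path) out of $B$ is not enough; you must use the $3$-Matching Lemma (extended to extending paths through other components, Lemma~\ref{lem:3matchingPath}) to obtain \emph{two} component-disjoint paths, and combine them with a \emph{second} leaf-block $B'$ located via a longest path in the block-cutpoint tree. Merging $B$ and $B'$ together releases two block credits, which is what makes the accounting close. The paper spends a six-case analysis (Lemma~\ref{lem:bridgeCovering}) on exactly this, and then handles the leftover pendant $4$-cycles separately once no complex component remains (Lemma~\ref{lem:last4Cycles}); your treatment of pendant $4$-cycles as a ``secondary obstacle'' also underestimates the issue, since an extending path from $B$ may be forced to pass through a pendant $4$-cycle (this is the ``non-clean'' case), and that too requires its own argument.
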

At this point we can exploit the following definition and lemma from \cite{GGJ23} (see Figure~\ref{fig:niceCycle}) to construct the desired 2VC spanning subgraph.
\begin{definition}[Nice Cycle]\label{def:nicePathCycle}
Let $\Pi=(V_1,\ldots,V_k)$, $k\geq 2$, be a partition of the node-set of a graph $G$. A nice cycle $N$ of $G$ w.r.t. $\Pi$ is a subset of edges with endpoints in distinct subsets of $\Pi$ such that: (1) $N$ induces one cycle of length at least $2$ in the graph obtained from $G$ by collapsing each $V_i$ into a single node; (2) given the two edges of $N$ incident to some $V_i$, these edges are incident to distinct nodes of $V_i$ unless $|V_i|=1$.  
\end{definition} 

\begin{figure}
    \centering
    \includegraphics{/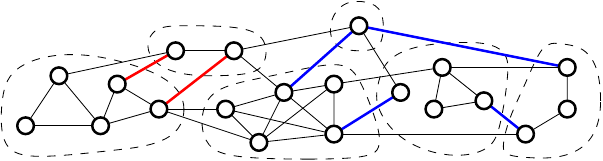}
    \caption{An example of nice cycles w.r.t. the partition $\Pi$ (induced by the dashed lines). In the figure two distinct nice cycles are shown, given by the red and blue edges, respectively.}
    \label{fig:niceCycle}
\end{figure}

\begin{lemma}\label{lem:NiceCycle}\cite{GGJ23}
Let $\Pi=(V_1,\ldots,V_k)$, $k\geq 2$, be a partition of the node-set of a 2VC graph $G$. In polynomial time one can compute a nice cycle $N$ of $G$ w.r.t. $\Pi$. 
\end{lemma}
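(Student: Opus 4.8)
The plan is to reduce the problem to finding a short cycle in an auxiliary simple graph, using that identifying vertices preserves $2$-edge-connectivity. Let $G'=G/\Pi$ be the multigraph obtained by contracting every $V_i$ to a node $v_i$ and discarding the resulting self-loops. Since $G$ is 2VC it is $2$-edge-connected, and since every edge cut of $G'$ pulls back to an edge cut of $G$ with the same set of edges (the two sides being unions of parts), $G'$ is $2$-edge-connected as well; it has at least one edge because $G$ is connected and $k\geq 2$. Hence $G'$ contains a cycle, which already witnesses property (1) of Definition~\ref{def:nicePathCycle}; the whole issue is property (2), namely that at each visited part $V_i$ with $|V_i|\geq 2$ the two edges of the cycle must reach $V_i$ at \emph{distinct} vertices.

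To enforce property (2) ``for free'' I would blow up the large parts. For a part $V_i$ with $|V_i|\geq 2$, let $W_i\subseteq V_i$ collect the vertices incident to an edge leaving $V_i$; since $G$ is 2VC one has $|W_i|\geq 2$ (otherwise the unique such vertex is a cut vertex of $G$). Build $\widetilde G$ from $G$ by replacing each such $V_i$ with a star: a fresh centre $c_i$ joined to every $w\in W_i$, discarding the (irrelevant) vertices of $V_i\setminus W_i$ and keeping every inter-part edge with its original endpoint in $W_i$. Contracting the stars back recovers $G'$, and one checks that no star edge $c_iw$ can be a bridge of $\widetilde G$ (tracing the corresponding cut back to $G$ would exhibit $w$ as a cut vertex of $G$); together with the $2$-edge-connectivity of $G'$ this shows $\widetilde G$ is $2$-edge-connected. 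Now let $Z$ be a shortest cycle of $\widetilde G$ among those using at least one inter-part edge (one exists: lift any cycle of $G'$). Being shortest with this property forces $Z$ to be chordless, which in turn forces, for every star $S_i$ met by $Z$ with $c_i\in V(Z)$, that $Z\cap S_i$ is exactly a path $w\,c_i\,w'$ with $w\neq w'$ (two copies of $w$ at $c_i$ would repeat the edge $c_iw$; any further vertex of $W_i$ on $Z$ would form a chord through $c_i$). Contracting the stars then turns $Z$ into a cycle of $G'$ of length $\geq 2$ which at each such $v_i$ is incident to the two distinct vertices $w,w'$ --- a nice cycle.

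The residual and, I expect, main obstacle is the case where $Z$ meets some $S_i$ without using its centre, i.e. $Z$ uses two inter-part edges incident to the same vertex $w\in W_i$. If $Z$ does this at two or more leaves of $S_i$, one can reroute a stretch of $Z$ through $c_i$ and two such leaves without increasing $|Z|$, while strictly decreasing a potential that counts such ``bad'' incidences (with $|Z|$ as a tie-breaker), contradicting minimality. The delicate sub-case is a single bad leaf $w$: here I would invoke $2$-vertex-connectivity once more in the form that every connected component $D$ of $G'-v_i$ attaches to $V_i$ at (at least) two distinct vertices --- otherwise that one vertex is a cut vertex of $G$ separating $D$ from the rest --- so there is an inter-part edge at some $w'\neq w$ into the component $D$ through which $Z$ leaves $v_i$; splicing that edge in together with the detour $w\,c_i\,w'$, along a \emph{shortest} connecting path inside $D$, in place of one of the two edges at $w$, yields a cycle that uses $c_i$ at $V_i$ and, by the choice of a shortest path, does not raise the potential elsewhere. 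Making this bookkeeping airtight (potential strictly decreases, the process terminates) is the crux. All operations --- building $G'$ and $\widetilde G$, a BFS for a shortest cycle, the local rerouting --- are polynomial, giving the stated running time. (For $k=2$ there is a shortcut: the inter-part edges form a bipartite multigraph with $\geq 2$ vertices on each side, hence, by K\"onig's theorem, a matching of size $2$, which is exactly a nice cycle of length $2$.)
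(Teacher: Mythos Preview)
The paper does not prove this lemma; it simply quotes it from \cite{GGJ23}. So there is no ``paper's proof'' to compare against, and your write-up has to stand on its own.

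Your reduction to the star graph $\widetilde G$ is a sound starting point, and your argument that $\widetilde G$ is $2$-edge-connected is correct (in particular, if a star edge $c_iw$ were a bridge, the side of $w$ would witness that $w$ is a cut vertex of $G$). Your handling of the ``two bad leaves'' situation is also fine: if $Z$ hits two leaves $w,w'\in W_i$ without $c_i$, the detour $w\,c_i\,w'$ has length $2$, so replacing the shorter $Z$-arc between $w$ and $w'$ gives a cycle of the same length (shortest forces both arcs to have length $2$) which now uses $c_i$; one checks the removed middle vertex cannot create a new bad incidence.

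The genuine gap is the ``single bad leaf'' case, which you yourself flag as the crux. Concretely, take $V_1=\{a,b\}$, $V_2=\{c\}$, $V_3=\{d\}$ with $G=K_4$. Then $\widetilde G$ has vertices $c_1,a,b,c,d$, and its only shortest cycles are the triangles $acd$ and $bcd$, \emph{each} with a bad incidence at $V_1$; the first cycle with zero bad incidences has length $4$. So ``shortest with fewest bad incidences'' still has a bad leaf, and your rerouting through $c_1$ necessarily \emph{lengthens} the cycle. That by itself is not fatal if your potential is ``(bad incidences, $|Z|$)'' lexicographically --- but then you must show the reroute does not create new bad incidences, and this is exactly where the argument is hand-wavy. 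Your reroute replaces an edge $wa_1$ by $w\,c_i\,w'\,z\,\cdots\,a_1$, where $\cdots$ is a shortest path in $\widetilde G[D]$; nothing prevents this path from meeting $Z$ (so the result need not be a simple cycle) or from touching some $W_j$ at a single leaf without $c_j$ (creating a fresh bad incidence). ``Choice of a shortest path'' does not obviously prevent either failure. Until you either (i) specify the reroute so that the result is a simple cycle and prove the bad count strictly drops, or (ii) switch the primary potential to ``bad incidences'' and argue directly that $0$ is always attainable, the proof is incomplete.

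Two smaller remarks. First, the restriction ``using at least one inter-part edge'' in the definition of $Z$ is vacuous: every cycle of $\widetilde G$ uses an inter-part edge, since stars are acyclic. Second, your $k=2$ shortcut is correct and in fact generalizes: for any two parts $V_i,V_j$ one can find, via K\"onig applied to the $V_i$--$(V\setminus V_i)$ bipartite graph, a matching of size $2$ among inter-part edges, but one still needs a cycle argument to close it up through the right parts when $k\ge 3$.
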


\begin{restatable}{lemma}{lemlargeComponents}\label{lem:largeComponents}
    Let $G$ be a structured graph and $S$ be a $2$-edge-cover of $G$ such that all the connected components of $S$ are 2VC and large. In polynomial time one can compute a 2VCSS solution $S'$ for $G$ with $\cost(S')\leq \cost(S)$.
\end{restatable}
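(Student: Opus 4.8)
The plan is to merge the 2VC large components of $S$ into a single 2VC spanning subgraph by repeatedly adding a nice cycle (via Lemma~\ref{lem:NiceCycle}) whose insertion reduces the number of components, and to pay for the added edges using the credits freed up when components and blocks get merged. Since $S$ consists only of large 2VC components, each component $C$ carries $\credit(C)=1$ (as a component) plus $\credit(B)=1$ (as its unique block), i.e.\ $2$ credits per component. If $S$ has $t\geq 2$ components, it therefore holds $2t$ credits, and our target, a single 2VC spanning subgraph, needs only $2$ credits; so we have a budget of $2(t-1)$ credits to spend on newly added edges while merging the $t$ components down to one.

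First I would set up the partition $\Pi=(V_1,\ldots,V_t)$ where $V_i=V(C_i)$ is the vertex set of the $i$-th component. Each $|V_i|\geq 4$ since the component is large and 2VC (it has at least $6$ edges, hence at least $4$ nodes), so in particular $|V_i|\geq 2$, which is what Lemma~\ref{lem:NiceCycle} needs. Invoke Lemma~\ref{lem:NiceCycle} to obtain a nice cycle $N$ of $G$ w.r.t.\ $\Pi$. Let $N$ touch $\ell\geq 2$ of the parts; adding $E(N)$ to $S$ uses exactly $\ell$ new edges and glues those $\ell$ components (together with any components whose vertices happen to lie on $N$ — but by construction $N$ only has endpoints in distinct parts, so it touches exactly $\ell$ parts) into one new 2VC component $C'$. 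The 2-connectivity of $C'$ is the key structural point: each $C_i$ touched is internally 2VC, and condition~(2) of a nice cycle ensures that the two $N$-edges entering $C_i$ hit distinct nodes of $C_i$ (when $|V_i|\geq 2$), so a standard ear-type argument shows $C'$ has no cut vertex. After this step the number of components drops by $\ell-1$.

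Now the accounting: before the step the $\ell$ merged components held $2\ell$ credits; afterwards $C'$ is a single large 2VC component, holding $1$ (component) $+\,1$ (its single block) $=2$ credits; the remaining $t-\ell$ components are untouched. So we freed $2\ell-2=2(\ell-1)$ credits, and we must cover $\ell\geq 2$ new edges plus re-establish the credit scheme on $C'$. Since $C'$ now has $6$ or more edges and is large and 2VC, its scheme-mandated credit is exactly $2$, which is precisely what we allotted it. The edges cost $\ell$; we freed $2(\ell-1)\geq \ell$ for all $\ell\geq 2$, so $\cost(S)$ does not increase — indeed it strictly decreases unless $\ell=2$. We repeat until a single component remains, which is then a 2VC spanning subgraph $S'$ of $G$ with $\cost(S')\leq\cost(S)$; each invocation of Lemma~\ref{lem:NiceCycle} is polynomial and the number of iterations is at most $t-1$, so the whole procedure is polynomial.

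The main obstacle I anticipate is the careful verification that the merged component $C'$ is genuinely 2VC — in particular handling the case where some part $V_i$ on the nice cycle is a single node is moot here (all parts have $\geq 4$ nodes), but one still has to argue that attaching a cycle through distinct attachment nodes of each 2VC piece leaves no cut vertex, and also to double-check that $N$ does not accidentally pass through a part via an internal node in a way that violates the "endpoints in distinct parts" reading; the definition of nice cycle is designed exactly to rule this out, so the argument should be a clean open-ear-decomposition argument, but it is the step that needs the most care.
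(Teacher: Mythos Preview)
Your proposal is correct and follows essentially the same approach as the paper: iteratively apply Lemma~\ref{lem:NiceCycle} to the partition into 2VC large components, add the $\ell$ edges of the resulting nice cycle, and observe that the $2\ell$ credits of the $\ell$ merged components minus the $2$ credits needed for the new component leave at least $\ell$ credits to pay for the new edges. The paper's proof is terser (it asserts without further argument that the merged piece is a single large 2VC component), while you rightly flag the 2VC verification as the step requiring care; an open-ear argument using condition~(2) of Definition~\ref{def:nicePathCycle} handles it.
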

\begin{proof}
    Initially set $S'=S$. Consider the partition $\Pi=(V_1,\ldots,V_k)$ of $V(G)$ where $V_i$ is the set of vertices of the 2VC component $C_i$ of $S'$. If $k=1$, $S'$ already satisfies the claim. Otherwise, using Lemma~\ref{lem:NiceCycle} we can compute a nice cycle $N$ of $G$ w.r.t. $\Pi$. Let us replace $S'$ with $S'':=S'\cup N$. W.l.o.g  assume $N$ is incident to $V_1,...,V_r$ for some $2\le r\le k$. Then in $S''$ the nodes   $V_1\cup \ldots \cup V_r$ belong to a unique (large) 2VC connected component $C'$. Furthermore $\cost(S')-\cost(S'')=\sum_{i=1}^r(\credit(C_i)+\credit(B_i))-\credit(C')-\credit(B')-r=2r-2-r\ge 0$, where $B_i$ is the only block of the component $C_i$ and $B'$ the only block of $C'$. By iterating the process for a polynomial number of times one obtains a single 2VC component, hence the claim.
\end{proof}

The proof of Lemma \ref{lem:mainLemma} follows by chaining Lemmas \ref{lem:smallComponents}, \ref{lem:complexComponents}, and \ref{lem:largeComponents}, and by the previous simple observations.

\section{From a Canonical 2-Edge-Cover to a 2VC Spanning Subgraph}
\label{sec:2VCSS}

In the following two subsections we provide the proof of Lemmas \ref{lem:smallComponents} and \ref{lem:complexComponents}, hence completing the proof of Lemma \ref{lem:mainLemma}. Throughout this section, we say that two connected components $C_1$ and $C_2$ of $S$ are \emph{adjacent} if there are nodes $u_1\in C_1, u_2\in C_2$ such that $u_1u_2\in E(G)$.

We will use following lemma from \cite{GGJ23} multiple times, whose proof is duplicated here for the sake of completeness.

\begin{restatable}[3-Matching Lemma]{lemma}{3MatchingLemma}\label{lem:matchingOfSize3}
Let $G=(V,E)$ be a 2VC graph without irrelevant edges and without non-isolating 2-vertex-cuts. Consider any partition $(V_1,V_2)$ of $V$ such that for each $i\in\{1,2\}$, $|V_i|\geq 3$ and if $|V_i|=3$, then $G[V_i]$ is a triangle. Then, there exists a matching of size $3$ between $V_1$ and $V_2$.
\end{restatable}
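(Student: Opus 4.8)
The 3-Matching Lemma — here's how I'd approach it.

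The plan is to use Menger's theorem together with a careful case analysis to rule out the obstructions to a size-3 matching. First I would set up the bipartite "contracted" graph: form $G'$ from $G$ by contracting $V_1$ to a single node $a_1$ and $V_2$ to a single node $a_2$, keeping all copies of parallel edges; the edges between $a_1$ and $a_2$ in $G'$ correspond exactly to the edges of $G$ crossing the cut $(V_1,V_2)$, and a matching of size $3$ between $V_1$ and $V_2$ in $G$ is precisely a set of $3$ crossing edges that are pairwise disjoint (no shared endpoint in $V_1$ and no shared endpoint in $V_2$). So the goal becomes: the crossing edges of $G$ cannot all be covered by a small vertex set touching few nodes on each side.

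The key steps, in order: (1) Since $G$ is 2VC, the cut $(V_1,V_2)$ has at least $2$ crossing edges, and in fact removing any single vertex keeps $G$ connected, so there are at least $2$ crossing edges whose $V_1$-endpoints are distinct and whose $V_2$-endpoints are distinct — i.e. a matching of size $\geq 2$ always exists. (2) Suppose for contradiction there is no matching of size $3$; by König's theorem (on the bipartite graph whose left vertices are the nodes of $V_1$ incident to crossing edges, right vertices those of $V_2$, edges the crossing edges), the crossing edges are covered by a set $W$ with $|W|\leq 2$. (3) Analyze the shape of $W$. If $|W|\leq 1$, then all crossing edges share a common endpoint, making $G$ disconnected after deleting that vertex (or a cut vertex situation) — contradicting 2-connectivity. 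So $|W|=2$, say $W=\{x,y\}$. (4) If $x,y$ lie on the same side, say $x,y\in V_1$, then $\{x,y\}$ is a $2$-vertex-cut separating $V_2$ from $V_1\setminus\{x,y\}$; since $|V_1|\geq 3$ this residual side is nonempty, and since $|V_2|\geq 3$ it is not a single node, so this is a non-isolating cut — contradiction. (5) The remaining case is $x\in V_1$, $y\in V_2$: every crossing edge is incident to $x$ or to $y$. Then the only crossing edge not incident to $x$ would go from some vertex of $V_1$ to $y$, and the only one not incident to $y$ goes from $x$ to some vertex of $V_2$; if the edge $xy$ itself is present, then $\{x,y\}$ separates the rest of $V_1$ from the rest of $V_2$ — again a $2$-vertex-cut, non-isolating since both sides have $\geq 2$ remaining vertices, unless one side collapses. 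Here I need the hypothesis that if $|V_i|=3$ then $G[V_i]$ is a triangle: this guarantees the "leftover" part of the side of size $3$ cannot be pried apart, forcing the cut to be genuinely non-isolating, or forcing the crossing edge $xy$ to be irrelevant (its endpoints form a $2$-cut), contradicting the no-irrelevant-edges hypothesis.

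The main obstacle I expect is step (5): when the cover is one vertex per side, the crossing edges form a "double star" and one must squeeze a contradiction out of either the irrelevant-edge hypothesis or the non-isolating-cut hypothesis, and this is exactly where the triangle condition on size-$3$ sides is needed — without it, $(V_1,V_2)$ with $|V_1|=3$ and a non-triangle $G[V_1]$ could genuinely have all crossing edges covered by one vertex on each side with no contradiction. So the bulk of the work is a clean subdivision of case (5) according to whether $xy\in E(G)$ and according to the sizes $|V_i|$, each time exhibiting either an irrelevant edge or a non-isolating $2$-vertex-cut. The earlier cases (1)–(4) are short and essentially forced.
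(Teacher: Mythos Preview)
Your overall strategy (K\"onig--Egerv\'ary plus case analysis on the vertex cover) is exactly the paper's approach, but you have swapped the roles of the two hard cases, and this creates a genuine gap.

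In your step~(4), with both cover vertices $x,y\in V_1$, you claim the cut $\{x,y\}$ is non-isolating because $V_1\setminus\{x,y\}$ is nonempty and $|V_2|\geq 3$. But if $|V_1|=3$, then $|V_1\setminus\{x,y\}|=1$, and since $x,y\notin V_2$ the set $V_2$ may well be a single connected component after their removal; you then get exactly two components, one of size~$1$ --- an \emph{isolating} cut, which is allowed. This is precisely where the paper uses the extra hypotheses: since $\{x,y\}$ is a $2$-vertex-cut, the edge $xy$ (if present) would be irrelevant, hence $xy\notin E(G)$; but if $|V_1|=3$ then $G[V_1]$ is a triangle, forcing $xy\in E(G)$, a contradiction. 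So $|V_1|\geq 4$, and now the cut is genuinely non-isolating.

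Conversely, your step~(5), which you flag as the main obstacle requiring the triangle condition and a subdivision into subcases, is in fact the easy case and needs neither special hypothesis. With $x\in V_1$, $y\in V_2$, the sets $V_1\setminus\{x\}$ and $V_2\setminus\{y\}$ each have size $\geq 2$ (just from $|V_i|\geq 3$), and there is no edge between them since $\{x,y\}$ covers all crossing edges. Hence $\{x,y\}$ is a non-isolating $2$-vertex-cut immediately --- no need to ask whether $xy\in E(G)$, and no use of the triangle assumption. Your proposed subdivision of this case is unnecessary.

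So the fix is simple: move the irrelevant-edge and triangle arguments from step~(5) to step~(4), and collapse step~(5) to the two-line observation above.
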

\begin{proof}
Consider the bipartite graph $F$ induced by the edges with exactly one endpoint in $V_1$. Let $M$ be a maximum (cardinality) matching of $F$. Assume by contradiction that $|M|\leq 2$. By K\"{o}nig-Egev\'{a}ry theorem\footnote{This theorem states that, in a bipartite graph, the cardinality of a maximum matching equals the cardinality of a minimum vertex cover, see e.g. \cite[Chapter~16]{S03}.} there exists a vertex cover $U$ of $F$ of size $|M|$. We distinguish $3$ subcases:

\begin{caseanalysis}
    \case{$U=\{u\}$.} Assume w.l.o.g. $u\in V_1$. Since $U$ is a vertex cover of $F$, there are no edges in $F$ (hence in $G$) between the non-empty sets $V_1\setminus \{u\}$ and $V_2$. Hence $U$ is a $1$-vertex-cut, a contradiction.
    
    \case{$U=\{u,v\}$, where $U$ is contained in one side of $F$.} Assume w.l.o.g. $U\subseteq V_1$. Since $|V_1|\geq 3$ and $U$ is a vertex cover of $F$, $U$ is a 2-vertex-cut separating $V_1\setminus U$ from $V_2$. This implies that $uv\notin E(G)$ (otherwise $uv$ would be an irrelevant edge), and in particular that $G[V_1]$ is not a triangle. Since $G[V_1]$ is not a triangle, $|V_1\setminus U|\geq 2$. This implies that $U$ is a non-isolating 2-vertex-cut of $G$, a contradiction.
    
    \case{$U=\{u,v\}$, where $u$ and $v$ belong to different sides of $F$.} Assume w.l.o.g. $u\in V_1$ and $v\in V_2$. Consider the sets $V'_1:=V_1\setminus \{u\}$ and $V'_2:=V_2\setminus \{v\}$, both of size at least $2$. Notice that there are no edges in $F$ (hence in $G$) between $V'_1$ and $V'_2$ (otherwise $U$ would not be a vertex cover of $F$). This implies that $U$ is a non-isolating 2-vertex-cut of $G$, a contradiction.
\end{caseanalysis}
\end{proof}

\subsection{Removing almost all small components}
\label{sec:removeSmall}

In this section we will prove Lemma~\ref{lem:smallComponents}, namely we will show how to get rid of small components other than pendant $4$-cycles. Recall that a connected component is large if it contains at least 6 edges and small otherwise. In our construction we will maintain that $S$ is canonical, hence in particular small components of the considered partial solution $S$ are cycles (of size $3$, $4$ or $5$).

We distinguish two types of small components that we deal with separately: small components that are adjacent to at least $2$ connected components of $S$, and small components adjacent only to a single connected component of $S$ (excluding pendant $4$-cycles). Let us start by handling the former type of small components. This is done via the following lemma.
\begin{lemma}\label{lem:nonLeafSmallComponent}
     Let $S$ be a canonical 2-edge-cover of a structured graph $G$. If $S$ contains a small component adjacent to at least two other connected components of $S$, then one can compute in polynomial time a canonical 2-edge-cover $S'$ of $G$, with strictly fewer components than $S$, and with $\cost(S')\leq \cost(S)$.
\end{lemma}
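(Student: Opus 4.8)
\textbf{Proof proposal for Lemma~\ref{lem:nonLeafSmallComponent}.}
The plan is to take a small component $C$ of $S$ that is adjacent to (at least) two other components $C_1$ and $C_2$ of $S$, and merge $C$ with $C_1$ and $C_2$ into a single larger component by adding two connecting edges, while simultaneously deleting one edge of $C$ so that the credit of $C$ (which is $|E(C)|/3 \ge 1$ when $C$ is a $3$-, $4$- or $5$-cycle) can be used to pay. Concretely, let $e_1 = u_1 w_1$ with $u_1 \in C$, $w_1 \in C_1$ and $e_2 = u_2 w_2$ with $u_2 \in C$, $w_2 \in C_2$ be edges of $G$; since $C$ is a cycle of length at least $3$ we can choose the notation so that $u_1 \neq u_2$ (if the only edges leaving $C$ toward $C_1 \cup C_2$ are all incident to a single vertex of $C$, that vertex would be a cut vertex of $G$, contradicting $2$-vertex-connectivity of $G$ — I would spell this small sub-argument out). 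Then I set $S' := (S \setminus \{f\}) \cup \{e_1, e_2\}$ where $f$ is the edge of the cycle $C$ lying on the $u_1$--$u_2$ subpath chosen so that $C \setminus \{f\}$ is still a path connecting $u_1$ to $u_2$; this keeps every vertex of $C$ with degree $\ge 2$ (its two cycle-neighbours, or one cycle-neighbour plus the new link), so $S'$ is still a $2$-edge-cover, and it has strictly fewer components since $C$, $C_1$, $C_2$ have been fused.

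Next I would verify the cost bookkeeping. Write $C'$ for the new component containing $C \cup C_1 \cup C_2$. We have $|S'| = |S| + 1$, so we must free at least one credit. The component $C$ contributed $|E(C)|/3 \ge 1$ credits as a small component; $C_1$ and $C_2$ each contributed their own credits as components (and as blocks/bridges internally, which are untouched). After merging, $C'$ needs only $1$ credit as a component, plus $1$ credit for each of its blocks; the block structure of $C'$ is obtained from those of $C_1$, $C_2$ together with possibly one new block formed along the path $e_1, (C\setminus\{f\}), e_2$, or $C\setminus\{f\}$ becomes a sequence of bridges. I would argue case by case (depending on whether $C_1$, $C_2$ were small, large-2VC, or complex) that the credits released — in particular the two component-credits of $C_1, C_2$ collapsing to one component-credit for $C'$ — plus the $\ge 1$ credit of $C$, exceed the $1$ new edge plus any new block-credit created. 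The cleanest bound: deleting one component is worth $\ge 1$ credit saved (two components become one), plus $C$'s $\ge 1$ credit, against $+1$ edge and at most $1$ new block $= +2$; so $\cost(S') \le \cost(S)$, with the only delicate point being to make sure the newly formed block (if any) genuinely has $\ge 3$ edges so the canonical credit scheme applies, which follows because it contains all of $C \setminus \{f\}$ (length $\ge 2$) plus $e_1, e_2$.

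Finally I must check that $S'$ is canonical. Small components of $S'$ are a subset of the small components of $S$ (we only removed and enlarged, never created a new small component), so property~(1) is preserved. For property~(2), if $C_1$ or $C_2$ was a complex component, its leaf-blocks in $S'$ are among its leaf-blocks in $S$ — except that a leaf-block receiving one of the new links at its unique cut vertex could in principle change status, but attaching an edge to a cut vertex of a complex component never turns an internal block into a leaf-block, and leaf-blocks keep their $\ge 5$ nodes — so canonicity is maintained; if a new complex component is created, its leaf-blocks are leaf-blocks of $C_1$ or $C_2$ with $\ge 5$ nodes, or are $2$VC so no leaf-block issue arises. The main obstacle I anticipate is precisely this canonicity check combined with the block-credit accounting when the merge path $C\setminus\{f\}$ attaches as bridges versus as a single new block: one has to pick $e_1,e_2$ (and the deleted edge $f$) carefully, and possibly argue that if attaching naively would create a short leaf-block or violate canonicity, a different choice of connecting edges (or of which cycle-edge to delete) avoids it; handling all adjacency patterns of $C_1,C_2$ uniformly is the fiddly part, but each case is a routine local check.
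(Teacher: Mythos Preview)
Your construction is essentially the paper's Case~(1), where both neighbouring components $C_1,C_2$ are \emph{large}; in that situation the path $e_1\cup(C\setminus\{f\})\cup e_2$ becomes a chain of bridges linking two large components, and your bookkeeping (once you include the bridge credits $(|E(C)|+1)\cdot\frac14$ that your ``cleanest bound'' omits) does go through. The genuine gap is the case where $C_1$ or $C_2$ is itself \emph{small}. Two things break simultaneously:
\begin{itemize}
\item \textbf{Canonicity.} If $C_2$ is, say, a $3$- or $4$-cycle, then after your swap it becomes a leaf-block of the new complex component $C'$ with fewer than $5$ nodes. That violates the definition of canonical, and no ``different choice of $e_1,e_2,f$'' repairs it: the block is $C_2$ itself regardless of where the bridge attaches.
\item \textbf{Credits.} Your ``at most $1$ new block'' is false here: each small neighbour turns into a new block needing $1$ credit, and the path still needs its bridge credits. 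Concretely, if $C,C_1,C_2$ are all $3$-cycles, the released credit is $3\cdot 1=3$, while you must pay $1$ (extra edge) $+1$ (component $C'$) $+1+1$ (two new blocks) $+4\cdot\frac14$ (four bridges) $=5$; the cost strictly increases. Even with one large and one $5$-cycle neighbour the balance is negative.
\end{itemize}

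This is not a routine local check. The paper's proof handles it by a qualitatively different construction: starting from the shortcut pair of $C$ (Corollary~\ref{cor:shortcutPair}), it \emph{iteratively extends} the path through every small component it meets, using Lemma~\ref{lem:shortcutPair} at each step to re-route through a Hamiltonian path of that cycle (this is where the ``no removable $5$-cycle'' hypothesis on structured graphs is actually used). The extension stops only when it reaches a large component or loops back into the already-built path; in the latter case a block of size $\ge 5$ is created, and a separate case analysis (their Cases~(2.1)--(2.4)) verifies both credits and canonicity. Your one-step merge cannot substitute for this iterative absorption of adjacent small components.
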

In order to prove the above lemma, we need the following definition and intermediate results.
 
\begin{definition}\label{def:shortcutPair}
     Let $C$ be a $k$-cycle, $k\in \{3,4,5\}$. We say that $\{u,v\}\subseteq V(C)$ is a \emph{shortcut pair} of $C$ if there is a matching $M$ of size $2$ from $\{u,v\}$ to $V(G)\setminus V(C)$, and there is a $u$-$v$ Hamiltonian path $P_{uv}$ in $G[V(C)]$ such that the first and last edge of $P_{uv}$ are in $E(C)$. We call $M$ the \emph{corresponding matching} of $\{u,v\}$ and $P_{uv}$ the \emph{shortcut path} of $\{u,v\}$ or the shortcut path of $C$ if $\{u,v\}$ is clear from the context.
\end{definition}

Shortcut pairs play an important role in the proof of Lemma~\ref{lem:nonLeafSmallComponent}. Intuitively, once we find a shortcut pair $\{u, v\}$ of a small component $C$ with corresponding matching $\{ux, vy\}$, we can exchange the edges of $C$ with its shortcut path (hence ``saving'' one edge), thus getting a path from $x$ to $y$ that includes all the nodes of $C$. We will use this to ``chain'' together several small components in order to build a path between components of $S$ with some desired properties. The next lemma shows that we can always find a shortcut pair of a small component. Additionally, it shows that we can choose such pair in a way that allows us to continue building the aforementioned path.

\begin{lemma}\label{lem:shortcutPair}
    Let $S$ be a canonical 2-edge-cover of a structured graph $G$, $C$ be a small component of $S$, and $wx$ be an edge of $G$ such that $w\in V(C), x\notin V(C)$. Then there exists a shortcut pair $\{u, v\}$ such that $ux$ is an edge of the corresponding matching of $\{u, v\}$. Furthermore, if $C$ is a 3-cycle, there exist two distinct shortcut pairs $\{u, v_1\}, \{u, v_2\}$ such that their corresponding matchings are $\{ux, v_1w_1\}$ and $\{ux, v_2w_2\}$, $w_1\neq w_2$, resp. All such pairs and their corresponding matchings and shortcut paths can be computed in polynomial time.
\end{lemma}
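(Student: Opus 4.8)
Write $C=c_1c_2\cdots c_kc_1$ with $k\in\{3,4,5\}$ (since $S$ is canonical, its small component $C$ is a cycle of one of these lengths), and set $N_x:=N(x)\cap V(C)$, which contains $w$. The plan rests on two preliminary observations. First, because $C$ is a connected component of the $2$-edge-cover $S$, every vertex of $V(G)\setminus V(C)$ has at least two $S$-neighbours, all lying in $V(G)\setminus V(C)$ (so the restriction of $S$ is a $2$-edge-cover of $G[V(G)\setminus V(C)]$); hence $|V(G)\setminus V(C)|\ge 3$, and $G[V(G)\setminus V(C)]$ is a triangle whenever it has exactly three vertices. Together with the fact that $G[V(C)]$ is a triangle when $k=3$, this lets us apply Lemma~\ref{lem:matchingOfSize3} to the partition $(V(C),V(G)\setminus V(C))$ and obtain a size-$3$ matching $M^*$ between the two sides, whose three endpoints outside $V(C)$ are pairwise distinct. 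Second, for every cycle edge $c_ic_{i+1}$ the path $C\setminus\{c_ic_{i+1}\}$ is a Hamiltonian path of $G[V(C)]$ from $c_i$ to $c_{i+1}$ whose first and last edges lie in $E(C)$; so any pair of cyclically consecutive vertices of $C$ that admits a size-$2$ matching into $V(G)\setminus V(C)$ is already a shortcut pair, and whenever $G[V(C)]$ has a chord, still more Hamiltonian paths of $G[V(C)]$ with first and last edge in $E(C)$ become available.

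For the first assertion, the plan is to exhibit $u\in N_x$ together with a cyclically consecutive neighbour $v$ of $u$ in $C$ that has a neighbour $y\in V(G)\setminus(V(C)\cup\{x\})$; then $\{ux,vy\}$ is a size-$2$ matching into $V(G)\setminus V(C)$, so $\{u,v\}$ is a shortcut pair with $ux$ in its corresponding matching. I would establish the existence of such $u$ and $v$ by a case analysis on $k$ and on the cyclic positions occupied by $N_x$. In the sub-cases where no cyclically consecutive pair works, I would argue that $G[V(C)]$ must contain a chord (otherwise too many vertices of $C$ have degree $2$ in $G$) and use it to produce an alternative Hamiltonian path of $G[V(C)]$, with first and last edge in $E(C)$, joining a vertex of $N_x$ to a vertex of $C$ having a neighbour in $V(G)\setminus(V(C)\cup\{x\})$ --- again yielding a shortcut pair of the required form. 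The remaining sub-cases should be impossible because $G$ is structured: if every vertex of $C$ has $x$ as its only neighbour outside $V(C)$ then, since $|V(G)\setminus V(C)|\ge 3$, the node $x$ is a cut vertex; if two ``antipodal'' cycle vertices have all their $G$-neighbours inside a common pair $\{c_i,c_j\}\subseteq V(C)$ then $\{c_i,c_j\}$ is a non-isolating $2$-vertex-cut; and when $k=5$, if two vertices of $C$ have degree $2$ in $G$ then $C$ is a removable $5$-cycle. Ruling these out in every case gives the first assertion.

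For the second assertion ($C$ a triangle $c_1c_2c_3$), the matching $M^*$ already gives $M^*=\{c_1y_1,c_2y_2,c_3y_3\}$ with $y_1,y_2,y_3$ pairwise distinct and outside $V(C)$, so this part should follow cleanly: I would pick $u\in N_x$ to be $c_j$ if some $y_j$ equals $x$, and an arbitrary element of $N_x$ otherwise; in either case the two $M^*$-partners $w_1,w_2$ of the two vertices $v_1,v_2$ in $V(C)\setminus\{u\}$ are distinct from $x$ and from each other, the Hamiltonian paths $uv_2v_1$ and $uv_1v_2$ of $G[V(C)]$ have first and last edges in $E(C)$, and $\{ux,v_1w_1\}$, $\{ux,v_2w_2\}$ are size-$2$ matchings into $V(G)\setminus V(C)$; hence $\{u,v_1\}$ and $\{u,v_2\}$ are the two required distinct shortcut pairs. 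As for polynomial time: since $|V(C)|\le 5$ there are only $O(1)$ candidate pairs in $V(C)$ and $O(1)$ Hamiltonian paths of $G[V(C)]$, and for each candidate pair and path one bipartite-matching computation decides whether a compatible size-$2$ matching into $V(G)\setminus V(C)$ exists (and finds one); so all shortcut pairs with their matchings and shortcut paths can be listed in polynomial time.

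The hard part will be the case $k=5$ of the first assertion: when $N_x$ is small and every cyclic neighbour of a vertex of $N_x$ has $x$ as its only outside neighbour, there is no ``obvious'' cyclically consecutive shortcut pair, and one must first use the no-removable-$5$-cycle property to guarantee that a suitable chord of $C$ exists and then check that this chord actually yields a Hamiltonian path of the required form between a vertex of $N_x$ and a vertex with a usable outside neighbour. Verifying that the finitely many combinations of $|N_x|$, the cyclic positions of $N_x$, and the chords present are all covered --- always producing either a shortcut pair or one of the three structured-ness violations above --- is where the bookkeeping becomes delicate.
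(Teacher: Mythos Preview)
Your proposal is correct and follows essentially the same approach as the paper: apply the 3-Matching Lemma to the partition $(V(C),V(G)\setminus V(C))$, normalize so that one matching edge is incident to $x$, take a cyclically adjacent matching vertex when possible, and in the residual $5$-cycle case use the absence of removable $5$-cycles to force a chord that yields the needed Hamiltonian shortcut path. The only differences are cosmetic: you phrase the search via $N_x=N(x)\cap V(C)$ and list several structured-ness contradictions (cut vertex, non-isolating cut) that the paper never needs because it fixes $u=w_1$ directly from the adjusted matching and observes that for $k\le 4$ one of $w_2,w_3$ is automatically cyclically adjacent to $w_1$; your extra care in verifying the hypotheses of Lemma~\ref{lem:matchingOfSize3} on the $V(G)\setminus V(C)$ side is a nice addition the paper omits.
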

\begin{proof}
    By the 3-matching Lemma~\ref{lem:matchingOfSize3} applied to $C$, there is a matching $\{w_1x_1, w_2x_2, w_3x_3\}$, with $x_i\notin V(C), w_i\in V(C)$, for $i\in\{1, 2, 3\}$. W.l.o.g. we can assume $x_1=x$, indeed otherwise we can add $wx$ to the matching and remove $wx_i$ if any. If $C$ is a 3-cycle, the pairs $\{u, v_1\}=\{w_1, w_2\}$ and $\{u, v_2\}=\{w_1, w_3\}$ are the desired pairs, with shortcut paths $P_{uv_1}=C\setminus\{uv_1\}$ and $P_{uv_2}=C\setminus\{uv_2\}$, resp. 
    
    Assume next that $C$ is a 4-cycle or a 5-cycle. If $w_i$ is adjacent to $w_1$ in $C$ for some $i\in\{2, 3\}$ then $\{u, v\}=\{w_1, w_i\}$ is the desired shortcut pair, with corresponding matching $\{ux, vx_i\}$ and shortcut path $P_{uv}=C\setminus\{uv\}$. Notice that this is always the case if $C$ is a $4$-cycle. We next assume that $w_i, i\in\{2, 3\}$, is not adjacent to $w_1$ in $C$ (and thus $C$ is a $5$-cycle). Moreover, we can assume that there is no matching $M = \{w_1'x, w_2'x_2'\}$, where $x_2'\notin V(C), w_1', w_2'\in V(C)$, and $w_1'$ is adjacent to $w_2'$ in $C$.
    
    Let $C=w_1aw_2w_3b$. Notice that by the above assumption there cannot exist an edge $ay\in E(G)$ with $y\notin V(C)$. Indeed, if $y\neq x$, then the matching $M = \{w_1x, ay\}$ is a contradiction to our above assumption, and if $y=x$ then the matching $M = \{ax, w_2x_2\}$ is a contradiction to our above assumption. Symmetrically, $b$ is not adjacent to any node outside $C$ in $G$.

    Since $G$ is a structured graph, $C$ is not a removable $5$-cycle, and thus either $a$ or $b$, say $a$ by symmetry, is adjacent in $G$ to some node in $C$ other than $w_1$ and $w_2$. We claim that $\{u, v\}=\{w_1, w_2\}$ is our desired shortcut pair, with its corresponding matching $\{ux, vx_2\}$. If $ab\in G$ then $P_{uv}=w_1abw_3w_2$ satisfies the claim. If $ab\notin E(G)$, then it must be $w_3a\in E(G)$. Hence  $P_{uv}=w_1bw_3aw_2$ satisfies the claim.
\end{proof}
\begin{corollary}\label{cor:shortcutPair}
    Let $S$ be a canonical 2-edge-cover of $G$ and $C$ be a small component of $S$. If $C$ is adjacent to at least two other connected components of $S$, then we can find in polynomial time a shortcut pair $\{u, v\}$ of $C$ such that its corresponding matching $\{ux_1, vx_2\}$ has $x_1\in V(C_1), x_2\in V(C_2)$, where $C$, $C_1$ and $C_2$ are distinct connected components of $S$.
\end{corollary}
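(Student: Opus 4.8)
The plan is to bypass most of Lemma~\ref{lem:shortcutPair} and argue directly. First I would reduce the statement to a purely structural claim: it suffices to find two vertices $a,b\in V(C)$ that are \emph{consecutive} on the cycle $C$ and are joined, by edges of $G$, to two \emph{distinct} connected components $C_1,C_2$ of $S$ with $C_1,C_2\neq C$. Indeed, if $a,b$ are consecutive on $C$ then $P_{ab}:=C\setminus\{ab\}$ is a Hamiltonian path of $G[V(C)]$ whose first and last edges lie in $E(C)$, so by Definition~\ref{def:shortcutPair} the pair $\{a,b\}$ is a shortcut pair with shortcut path $P_{ab}$; and picking an edge $ax_1$ with $x_1\in V(C_1)$ and an edge $bx_2$ with $x_2\in V(C_2)$ gives a matching $\{ax_1,bx_2\}$ of size $2$ from $\{a,b\}$ to $V(G)\setminus V(C)$ (note $x_1\neq x_2$ since $C_1,C_2$ are disjoint). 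As $|V(C)|\le 5$, such $a,b$ (and the path and matching) can be located by brute force in polynomial time, so only \emph{existence} requires an argument.

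To prove existence, I would call $w\in V(C)$ \emph{active} if some edge of $G$ leaves $V(C)$ at $w$, equivalently $\deg_G(w)\ge 3$ (every vertex of the cycle $C$ has degree $2$ inside $C$); for active $w$ let $L(w)$ be the nonempty set of components of $S$ other than $C$ incident to $w$, and put $L(w)=\emptyset$ otherwise. Every component in $L(w)$ is adjacent to $C$, and conversely the union of the $L(w)$ over $w\in V(C)$ is exactly the family $\mathcal D$ of components adjacent to $C$, which has $|\mathcal D|\ge 2$ by hypothesis. Now assume for contradiction that $|L(a)\cup L(b)|\le 1$ for every edge $ab$ of the cycle $C$, and derive a contradiction by a case analysis on $|V(C)|\in\{3,4,5\}$, in each case exploiting that $G$ is structured. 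The first point in each case is that $C$ has many active vertices: for a triangle at least two (otherwise the unique active vertex would be a cut vertex of $G$, impossible as $G$ is $2$VC and $V(G)\setminus V(C)\neq\emptyset$); for a $4$-cycle $v_1v_2v_3v_4$ at least one of $\{v_1,v_3\}$ and at least one of $\{v_2,v_4\}$ (otherwise removing the other two vertices isolates the two non-active ones and leaves something outside $V(C)$, a non-isolating cut); for a $5$-cycle all but at most one vertex (otherwise $C$ is a removable $5$-cycle). In each case one therefore has an edge $ab$ of $C$ with both endpoints active, so the contradiction hypothesis forces $L(a)=L(b)=\{D\}$ for a single component $D$, and (chaining along consecutive active vertices, which is needed only for the $5$-cycle) a set of at least two adjacent vertices of $C$ all have $L$-set $\{D\}$. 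Since $\mathcal D$ also contains some $D'\neq D$, some vertex $w$ of $C$ has $D'\in L(w)$; this $w$ is active and distinct from those $\{D\}$-vertices, and a quick inspection of each of the three cycles shows that $w$ is consecutive on $C$ with one of them ($v_3$ with $v_1$ or $v_2$ in the triangle; $v_3$ with $v_2$ or $v_4$ with $v_1$ in the $4$-cycle; the lone possibly-non-active vertex with one of the others in the $5$-cycle). That edge of $C$ then has $L$-union containing both $D$ and $D'$, contradicting the assumption.

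I expect the only genuinely delicate part to be this last piece of bookkeeping: showing that in a structured graph one cannot have $C$ adjacent to two components while every edge of $C$ joins vertices whose outside-neighbourhoods touch at most one common component of $S$. This is exactly where the absence of non-isolating cuts (for $4$-cycles) and of removable $5$-cycles (for $5$-cycles), together with $2$-vertex-connectivity (for triangles), is used; everything else, including the polynomial running time, is immediate from $|V(C)|=O(1)$.
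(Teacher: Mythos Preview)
Your reduction to \emph{consecutive} vertices of $C$ is not sufficient in the $5$-cycle case, and the argument you give for it contains a genuine error. You assert that ``active'' (having an edge leaving $V(C)$) is equivalent to $\deg_G(w)\ge 3$, and then use the non-removability of $C$ to conclude that at least four vertices of a $5$-cycle are active. But a vertex of $C$ can have degree $\ge 3$ in $G$ because of a \emph{chord} of $C$ lying in $E(G)\setminus E(C)$, without having any edge leaving $V(C)$; non-removability only bounds the number of degree-$2$ vertices, not the number of inactive ones. Concretely, take $C=v_1v_2v_3v_4v_5$ with exactly $v_1,v_2,v_4$ active, $L(v_1)=L(v_2)=\{D\}$, $L(v_4)=\{D'\}$ for $D\neq D'$, and a single chord $v_2v_5\in E(G)$. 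Then $v_5$ has degree $3$ and only $v_3$ has degree $2$, so $C$ is not removable; yet the only consecutive active pair is $\{v_1,v_2\}$, which sees only $D$. Your contradiction hypothesis is never violated. The corollary is still true here, but only through the \emph{non-consecutive} pair $\{v_1,v_4\}$ with shortcut path $v_1v_5v_2v_3v_4$, which uses the chord $v_5v_2$ as an internal edge. This is exactly the situation Lemma~\ref{lem:shortcutPair} is designed to handle (its last paragraph builds a Hamiltonian path through a chord when the matched vertices are pairwise non-adjacent on $C$), and the paper's proof of the corollary inherits that machinery; by trying to bypass Lemma~\ref{lem:shortcutPair} you lose precisely this case.

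There is also a smaller logical slip: the statement you assume for contradiction, ``$|L(a)\cup L(b)|\le 1$ for every edge $ab$ of $C$'', is strictly stronger than the negation of what you want (namely that for every edge $ab$ with both endpoints active, $L(a)=L(b)$ is a singleton). Deriving a contradiction from a stronger hypothesis only shows that some edge has $|L(a)\cup L(b)|\ge 2$, which does not give you an edge with both endpoints active going to different components (e.g.\ $L(a)=\{C_1,C_2\}$, $L(b)=\emptyset$). Your actual chaining argument happens to work verbatim under the correct negation, so this is easy to fix; but the $5$-cycle gap above is not, and you will need to reinstate the chord-based construction of a shortcut path to close it.
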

\begin{proof}
    We first claim that there are edges $ax_1, bx_2$ with $a, b\in V(C), x_1\in V(C_1), x_2\in V(C_2), a\neq b$, where $C_1$ and $C_2$ are distinct connected components of $S$. To obtain this, we apply the 3-matching Lemma~\ref{lem:matchingOfSize3} to $C$. If two edges of the matching have their endpoints not in $C$ in distinct connected components of $S$ we are done. Otherwise, all edges of the matching are incident to the same connected component $C'$. By assumption of the lemma there is at least one edge between $C$ and a connected component of $S$ distinct from $C'$, so we can modify the matching by including that edge and removing at most one edge. The claim follows. 
    
    Apply Lemma~\ref{lem:shortcutPair} to $ax_1$ and $bx_2$ to find shortcut pairs of $C$, $\{u_1, v_1\}, \{u_2, v_2\}$, such that $u_1x_1, u_2x_2$ belong to their respective corresponding matchings. If the corresponding matching $\{u_1x_1, v_1y_1\}$ has $y_1\notin V(C_1)$, then $\{u, v\}=\{u_1, v_1\}$ is the desired shortcut pair. Similarly for the corresponding matching to $\{u_2, v_2\}$, $\{u_2x_2, v_2y_2\}$. Assume now that $y_1\in V(C_1), y_2\in V(C_2)$.
    
    If $u_1=u_2$, then $\{u, v\}=\{u_1, v_2\}$ is the desired shortcut pair. Similarly with $u_1=v_2, v_1=u_2, v_1=v_2$. Assume $u_1, v_1, u_2, v_2$ are all distinct. Then either $u_1$ is adjacent in $C$ to one of $u_2$ or $v_2$, or $v_1$ is. Say $u_1$ is adjacent to $u_2$. The shortcut pair $\{u, v\}=\{u_1, u_2\}$ satisfies the claim with matching $\{u_1x_1, u_2x_2\}$ and $P_{uv}=C\setminus\{uv\}$. The other cases are treated identically, with the corresponding matchings being $\{u_1x_1, v_2y_2\}$, $\{v_1y_1, u_2x_2\}$, and $\{v_1y_1, v_2y_2\}$.
\end{proof}
\SetKwComment{Comment}{$\triangleright$\ }{}\setcounter{AlgoLine}{0}
\begin{algorithm}
\caption{Algorithm that computes $P$. During the execution of the algorithm, at the beginning of every iteration of a while loop, we maintain the notation that $P = u_Lu_1\dots u_ku_R$, $u_L\in V(C_L)$, $u_R\in V(C_R)$.}
\label{alg:pathP}
\KwIn{A canonical $2$-edge-cover $S$ of a structured graph $G$ containing a small component $C$ adjacent to at least two other components of $S$}
\KwOut{A path $P$ between connected components of $S$ satisfying Properties~\ref{prp:pathLength} to~\ref{prp:u2}.}
\textbf{Initialization:} Let $\{u, v\}$ be a shortcut pair of $C$ with shortcut path $P_{uv}$ and corresponding matching $\{uu_L, vu_R\}$, $u_L\in V(C_L)$, $u_R\in V(C_R)$, where $C_L$ and $C_R$ are distinct connected components of $S$\; \label{line:initP}
$P \gets P_{uv} \cup \{uu_L, vu_R\}$\; \label{line:initP2}

\While{$C_L$ is small}{
    \If{there exists a shortcut pair $\{u, v\}$ of $C_L$ with shortcut path $P_{uv}$ and corresponding matching $\{uu_1, vx\}$ such that $x \notin V(C_R) \cup V(P)$\label{line:ifCL}}{
        Let $C_L'$ be the component of $S$ such that $x \in V(C_L')$\;
        $P \gets (E(P) \setminus \{u_1u_L\}) \cup E(P_{uv}) \cup \{uu_1, vx\}$\;
        $C_L \gets C_L'$ \label{line:updatePCL} \Comment*[r]{Extension on the left side (see Figure~\ref{fig:extendingPath})}
    }
    \Else{
        Let $\{u, v\}$ be a shortcut pair of $C_L$ with corresponding matching $\{uu_1, vx\}$ such that $x \in V(C_R) \cup V(P)$. If $C_L$ is a $3$-cycle, choose such a pair such that $x \neq u_3$\; \label{line:triangleCL}
        $P \gets (E(P) \setminus \{u_1u_L\}) \cup \{uu_1\}$ \Comment*[r]{Stop the extension on the left side (see Figure~\ref{fig:closingPath})}
        \textbf{break} \Comment*[r]{We exit the while loop}
    }
}

\While{$C_R$ is small}{
    \If{there exists a shortcut pair $\{u, v\}$ of $C_R$ with shortcut path $P_{uv}$ and corresponding matching $\{uu_k, vx\}$ such that $x \notin V(C_L) \cup V(P)$\label{line:ifCR}}{
        Let $C_R'$ be the component of $S$ such that $x \in V(C_R')$\;
        $P \gets (E(P) \setminus \{u_ku_R\}) \cup E(P_{uv}) \cup \{uu_k, vx\}$\;
        $C_R \gets C_R'$ \label{line:updatePCR} \Comment*[r]{Extension on the right side}
    }
    \Else{
        Let $\{u, v\}$ be a shortcut pair of $C_R$ with corresponding matching $\{uu_k, vx\}$ such that $x \in V(C_L) \cup V(P)$. If $C_R$ is a $3$-cycle, choose such a pair such that $x \neq u_{k-2}$\; \label{line:triangleCR}
        $P \gets (E(P) \setminus \{u_ku_R\}) \cup \{uu_k\}$ \Comment*[r]{Stop the extension on the right side}
        \textbf{break} \Comment*[r]{We exit the while loop}
    }
}

\Return{$P$}\;

\end{algorithm}
We are now ready to prove Lemma \ref{lem:nonLeafSmallComponent}.
\begin{proof}[Proof of Lemma \ref{lem:nonLeafSmallComponent}]
    We next describe an iterative procedure to construct a (maximal in some sense) path $P$, incident on two distinct connected components $C_L$ and $C_R$ of $S$. Formally, Algorithm~\ref{alg:pathP} finds, in polynomial time, a path $P = u_Lu_1u_2\dots u_ku_R$ satisfying the following properties:

    \begin{enumerate}
        \item $|E(P)|\geq 4$.\label{prp:pathLength}
        \item The set $\{u_1, \dots, u_k\}$ of internal nodes of $P$ is the union of the node sets of some small components of $S$. The nodes $u_L, u_R$ belong to distinct connected components $C_L, C_R$ of $S$.\label{prp:internalNodesP}
        \item If $C_L$ is small, then for some $v_L\in V(C_L)$, $\{u_L, v_L\}$ is a shortcut pair of $C_L$ such that its corresponding matching is $\{u_Lu_1, v_Lw_L\}$, with $w_L\in V(C_R)\cup V(P)$. Moreover, if $C_L$ is a $3$-cycle, then $w_L\neq u_3$.\label{prp:cl}
        \item If $C_R$ is small, then for some $v_R\in V(C_R)$, $\{u_R, v_R\}$ is a shortcut pair of $C_R$ such that its corresponding matching is $\{u_Ru_1, v_Rw_R\}$, with $w_R\in V(C_L)\cup V(P)$. Moreover, if $C_R$ is a $3$-cycle, then $w_R\neq u_{k-2}$.\label{prp:cr}
        \item The edges $u_1u_2$ and $u_{k-1}u_k$ are edges of small components of $S$.\label{prp:u2}
    \end{enumerate}

    Since at every iteration the length of $P$ increases or a while loop stops, Algorithm~\ref{alg:pathP}, terminates after a polynomial number of iterations. By Corollary~\ref{cor:shortcutPair}, one can find the shortcut pair required in Line~\ref{line:initP} in polynomial time. The condition in Line~\ref{line:ifCL} can be checked in polynomial time by considering all pairs of nodes of $C_L$ (recall $C_L$ is small when executing Line~\ref{line:ifCL}) and all matchings of size $2$ incident to such pairs. When executing Line~\ref{line:triangleCL}, by Lemma~\ref{lem:shortcutPair} and the fact that the condition at Line~\ref{line:ifCL} does not hold, we can find the desired pair. Notice that here we used the fact that, if $C_L$ is a $3$-cycle, Lemma~\ref{lem:shortcutPair} guarantees the existence of two shortcut pairs $\{u, v_1\}, \{u, v_2\}$ with corresponding matchings $\{uu_1, v_1w_1\}$ and $\{uu_1, v_2w_2\}, w_1\neq w_2$. Since the condition at Line~\ref{line:ifCL} does not hold, $w_1, w_2\in V(C_R)\cup V(P)$, so we can choose $v = v_i$ such that $w_i \neq u_3$, for some $i\in\{1, 2\}$. The executions of Lines~\ref{line:ifCR} and~\ref{line:triangleCR} can also be done in polynomial time by identic arguments. Therefore, Algorithm~\ref{alg:pathP} runs in polynomial time.

    Now we show that $P$ satisfies Properties~\ref{prp:pathLength} to~\ref{prp:u2}. Property~\ref{prp:pathLength} is satisfied when initializing $P$ at Line~\ref{line:initP2}, and is maintained through the execution of the algorithm because we never decrease the size of $P$. Notice that $P$ satisfies Property~\ref{prp:internalNodesP} at Line~\ref{line:initP2}. The set of internal nodes of $P$ then only changes at Lines~\ref{line:updatePCL} and~\ref{line:updatePCR}. When executing those lines, we add the nodes of the shortcut path $P_{uv}$ of $C_L$ (resp., $C_R$) to the set of internal nodes of $P$. Since when executing Line~\ref{line:updatePCL} (resp.,~\ref{line:updatePCR}) $C_L$ (resp., $C_R$) is small, we maintain Property~\ref{prp:internalNodesP}. Properties~\ref{prp:cl} and~\ref{prp:cr} are satisfied by the choice of $\{u, v\}$ at Lines~\ref{line:triangleCL} and~\ref{line:triangleCR}, respectively. Finally, notice that the edges $u_1u_2$ and $u_{k-1}u_k$ are the last edges of some shortcut path of some small component. Thus, by Definition~\ref{def:shortcutPair}, they are edges of small components of $C$. Therefore, $P$ satisfies Properties~\ref{prp:pathLength} to~\ref{prp:u2}, as we wanted to prove.
    
    Next we build $S'$ as follows. Let $C_1,\ldots,C_q$ be the small components whose node sets form the internal nodes of $P$ (Property~\ref{prp:internalNodesP}). We initially set $S':=(S\setminus \bigcup_{i=1}^{q}E(C_i)) \cup E(P)$. If $C_L$ is small, let $\{u_L,v_L\}$ be the shortcut pair given by Property~\ref{prp:cl}, with matching $\{u_Lu_1, v_Lw_L\}$ and shortcut path $P_{u_Lv_L}$: replace $S'$ with $(S'\setminus E(C_L))\cup P_{u_Lv_L}\cup \{v_Lw_L\}$. If $C_R$ is small, we update $S'$ symmetrically (using Property~\ref{prp:cr}). We will use figures to show examples of the construction of $S'$ based on $P$ in the corresponding relevant cases.
       
    Notice that $|S'|=|S|+1$. We also remark that we collect $(|E(P)|-1)\cdot \frac{1}{3}$ credits from the removed small components. Clearly $S'$ has fewer connected components than $S$ and is canonical. We will show that in most cases $\cost(S')\leq \cost(S)$. When the latter property does not hold, we will describe an alternative $S'$ that satisfies the claim. We consider different cases depending on the type of the components $C_R$ and $C_L$. 
    
    \begin{figure}
        \centering
        \includegraphics{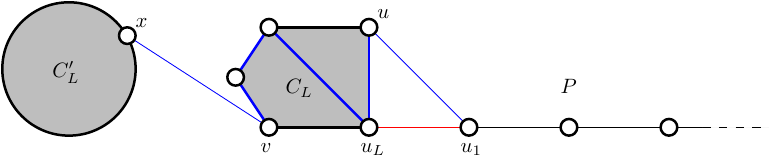}
        \caption{An example of extending $P$ on the $C_L$ side. The shortcut path $P_{uv}$ of $C_L$ is shown in bold blue. In this iteration we remove $u_Lu_1$ (in red) from $P$, and add $P_{uv}, u_1u$ and $vx$ (all of them in blue) to $P$. Here $C_L'$ could be small or large. In the next iteration $C_L'$ takes the place of $C_L$.}
        \label{fig:extendingPath}
    \end{figure}
    
    \begin{cfigure}
        \centering
        \includegraphics{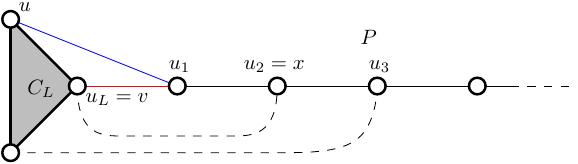}
        \caption{In this case we stop extending $P$ on the left side. Notice that in this case we have at least $2$ choices for the pair $\{u, v\}$. Since $C_L$ is a $3$-cycle we select the pair $\{u, v\}$ such that $x\neq u_3$. The dashed edges are edges in $G$.}
        \label{fig:closingPath}
    \end{cfigure}

    \begin{caseanalysis}
    \case{Both $C_R$ and $C_L$ are large.}\hfill\\
    An example of the construction of $S'$ for this case is shown in Figure~\ref{fig:closingPathBothLarge}. $S'$ contains a (complex) connected component $C$ spanning the nodes of $C_R$, $C_L$ and $P$. We need $1$ credit for $C$ and $|E(P)|\cdot \frac{1}{4}$ credits for the edges of $P$ (which become bridges of $C$). Recall that we collect $(|E(P)-1|)\cdot \frac{1}{3}$ credits from the removed small components. We also collect $1$ credit from $C_R$ and $C_L$ each, while their blocks retain their credits. Altogether
    \begin{align*}
    \cost(S) - \cost(S') & = |S|-|S'|+\credit(C_R)+\credit(C_L) + (|E(P)|-1)\cdot 1/3 - \credit(C) - |E(P)|\cdot 1/4 \\
    & = -1+1+1+(|E(P)|-1)\cdot 1/3 -1 - |E(P)|\cdot 1/4 \geq 0,
    \end{align*}
    where in the last inequality we used $|E(P)|\geq 4$ (Property~\ref{prp:pathLength}).

    \begin{figure}
    \centering
    \includegraphics{/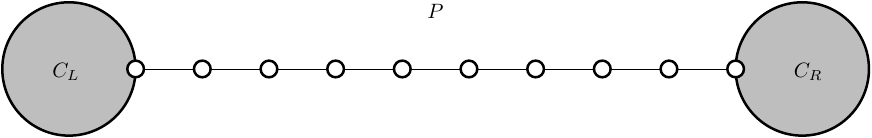}
    \caption{An example of the construction of $S'$ when both $C_L$ and $C_R$ are large components. The set of internal nodes of $P$ is the union of the node sets of some small components $C_1,\dots, C_q$ of $S$. In this case we simply set $S':=(S\setminus \bigcup_{i=1}^{q}E(C_i)) \cup E(P)$. Every connected component $C\notin\{C_L, C_R\}\cup\{C_1,\dots, C_q\}$ of $S$ remains unaltered.}
    \label{fig:closingPathBothLarge}
    \end{figure}
    
    \case{At least one among $C_R$ and $C_L$ is small  (w.l.o.g. assume $C_L$ is small).}\hfill\\
    Notice that, by Property~\ref{prp:cl}, one has that $w_L\in V(C_R)\cup V(P)\setminus\{u_1\}$, so that either $w_L\in V(C_R)$ or $w_L = u_j$ for some $j\in\{2, \dots, k\}$.
    
    \subcase{$w_L\in V(C_R)$.}\hfill\\
    An example of the construction of $S'$ for this case is shown in Figure~\ref{fig:closingPathInCR}. Regardless of the type of component $C_R$ is, the nodes of $C_R, C_L$ and $P$ are merged into a unique large component $C$ of $S'$, and the nodes of $C_L$ and $P$ belong to a block $B$ in $S'$. Therefore the edges of $P$ are not bridges in $C$ and hence have no credits in $S'$.

    \begin{figure}
        \centering
        \includegraphics{/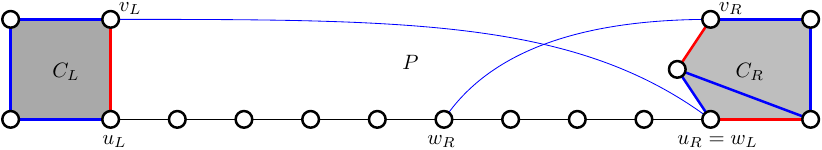}
        \caption{An example of the construction of $S'$ when $C_L$ is a small component and $w_L\in V(C_R)$. We show the case in which $C_R$ is also a small component. We remove the edges of $C_L$ and $C_R$, and we add their shortcut paths and the edges $v_Lw_L$ and $v_Rw_R$. This is equivalent to removing the red edges and adding the blue edges. The shortcut paths of $C_L$ and $C_R$ are shown in bold blue. In this case every node of $P$ and $C_L$ belongs to a block $B$ in $S'$. If $C_R$ is small, the nodes of $C_R$ also belong to $B$. The set of internal nodes of $P$ is the union of the node sets of some small components $C_1,\dots, C_q$ of $S$. Every connected component $C\notin\{C_L, C_R\}\cup\{C_1,\dots, C_q\}$ of $S$ remains unaltered.}
        \label{fig:closingPathInCR}
    \end{figure}
    
    If $C_R$ is also a small component the nodes of $C_R$ (and thus all nodes of $C$) also belong to $B$, so $B$ is the only new block. To see this, notice that all the nodes of the path from $w_R$ to $u_R$ in $S'$ belong to a block $B'$ together with the nodes of $C_R$. Since $w_R\in V(C_L)\cup V(P)$ (Property~\ref{prp:cr}), this implies that $B$ and $B'$ share at least one edge, and thus it must be that $B=B'$. If $C_R$ is a large component, then $B$ is also the only block of $S'$ not present in $S$. Using $|E(P)|\geq 4$ (Property~\ref{prp:pathLength}) one has
    \begin{align*}
    \cost(S) - \cost(S') & = |S|-|S'|+\credit(C_R)+\credit(C_L) + (|E(P)|-1)\cdot 1/3 - \credit(C) -\credit(B)\\
    & \geq -1 + 1 + 1 + (|E(P)|-1)\cdot 1/3 - 1 -1\geq 0.
    \end{align*}
    
    \subcase{$w_L=u_2$.}\label{case:wlu2}\hfill\\
    We illustrate this case in Figure~\ref{fig:caseb2}. Let $C$ be the component of $S$ such that $w_L\in C$. By Property~\ref{prp:u2}, $u_1u_2\in E(C)$. Let $S'':= (S\setminus E(C_L))\cup E(P_{u_Lv_L})\cup\{u_Lu_1, v_Lu_2\}\setminus \{u_1u_2\}$. In $S''$, the nodes of $C$ and $C_L$ belong to a single cycle $C'$ of size at least $6$: $C'$ is a 2VC large component, and thus it contains exactly one block $B$. Then, $S''$ is a solution with fewer components than $S$ and one has
    \begin{align*}
        \cost(S) - \cost(S'') = \credit(C_L) + \credit(C) - \credit(C') -\credit(B)\geq 1 + 1 - 1 - 1 = 0.
    \end{align*}
        
    \begin{figure}
        \centering
        \includegraphics{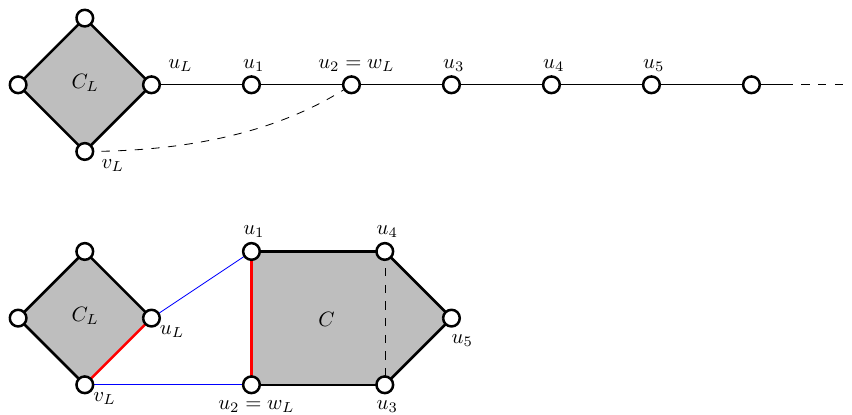}
        \caption{Lemma~\ref{lem:nonLeafSmallComponent} Case~\ref{case:wlu2}. By Property~\ref{prp:u2}, the edge $u_1u_2$ belongs to some small component $C$ of $S$. Note that $u_3u_4\in E(P)$. We can add the blue edges and remove the red edges from $S$ to merge two small components into a large one.}
        \label{fig:caseb2}
    \end{figure}
    
    \subcase{$w_L = u_j, j\in\{3,\dots, k\}$.}\hfill\\
    \subsubcase{$C_R$ is a large component.}\hfill\\
    An example of the construction of $S'$ for this case is shown in Figure~\ref{fig:closingPathCRLarge}. $S'$ contains a complex connected component $C$ spanning the nodes of $C_R, C_L$ and $P$. The nodes of $C_L$ together with the path $u_Lu_1u_2\dots u_j$ and the edge $v_Lu_j$ belong now to a block $B_L$ in $S'$, so exactly $|E(P)-j|$ edges of $P$ are now bridges of $C$, and require $\frac{1}{4}$ credit each. If $C_L$ is not a $3$-cycle one has
       \begin{align*}
           &\cost(S) - \cost(S') \geq \\
           &|S| - |S'| + \credit(C_R) + \credit(C_L) + (|E(P)|-1)\cdot 1/3 - \credit(C) - \credit(B_L) - (|E(P)|-j)\cdot 1/4\geq\\
           &-1 + 1 + 4/3 + (|E(P)|-1)\cdot 1/3 - 1 - 1 - (|E(P)|-3)\cdot 1/4 > 0.
       \end{align*}

       \begin{cfigure}
        \centering
        \includegraphics{/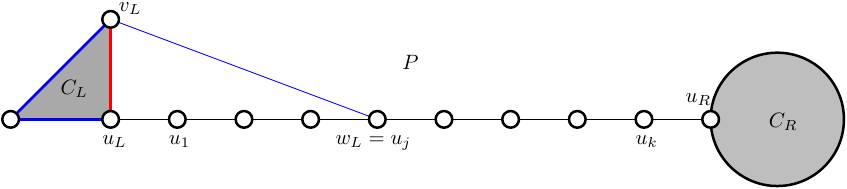}
        \caption{An example of the construction of $S'$ when $C_L$ is a small component, $w_L = u_j, j\in\{3, \dots, k\}$ and $C_R$ is a large component. We remove the edges of $C_L$ and we add the shortcut path of $C_L$ and the edge $v_Lw_L$. This is equivalent to removing the red edge and adding the blue edges. The shortcut path of $C_L$ is shown in bold blue. The nodes of $C_L$ together with the path $u_Lu_1u_2\dots u_j$ and the edge $v_Lu_j$ belong to a block $B_L$ in $S'$. The set of internal nodes of $P$ is the union of the node sets of some small components $C_1,\dots, C_q$ of $S$. Every connected component $C\notin\{C_L, C_R\}\cup\{C_1,\dots, C_q\}$ of $S$ remains unaltered.}
        \label{fig:closingPathCRLarge}
    \end{cfigure}
        On the other hand, if $C_L$ is a $3$-cycle by Property~\ref{prp:cl} we can assume $j\geq 4$, so one has
       \begin{align*}
           &\cost(S) - \cost(S') \geq \\
           &|S| - |S'| + \credit(C_R) + \credit(C_L) + (|E(P)|-1)\cdot 1/3 - \credit(C) - \credit(B_L) - (|E(P)|-j)\cdot 1/4\geq\\
           &-1 + 1 + 1 + (|E(P)|-1)\cdot 1/3 - 1 - 1 - (|E(P)|-4)\cdot 1/4 \geq 0.
       \end{align*}
    
    In both cases we used Property~\ref{prp:pathLength}, i.e., $|E(P)|\geq 4$.
    
    \subsubcase{$C_R$ is a small component.}\hfill\\
    An example of the construction of $S'$ for this case is shown in Figure~\ref{fig:closingPathCRSmall}. $S'$ contains a large component $C$ spanning the nodes of $C_R$, $C_L$ and $P$. If in $S'$ the nodes of $C_R, C_L$ and $P$ belong to a single block $B$ then one has 
    \begin{align*}
     \cost(S) - \cost(S') &= |S| - |S'| + \credit(C_R) + \credit(C_L) + (|E(P)|-1)\cdot 1/3 - \credit(C) -\credit(B)\\
     &\geq -1 + 1 + 1 + 1 - 1 - 1 = 0,
    \end{align*}
    where we have used $|E(P)|\geq 4$ (Property~\ref{prp:pathLength}). Otherwise, $C$ is a complex component. As in the previous case, the nodes of $C_L$ together with the path $u_Lu_1u_2\dots u_j$ and the edge $v_Lu_j$ belong now to a block $B_L$ in $S'$. By the symmetry of Properties~\ref{prp:pathLength} to~\ref{prp:u2} w.r.t. $C_L$ and $C_R$, we can assume that the nodes of $C_R$ together with the subpath of $P$, $u_Ru_ku_{k-1}\dots u_{k-j'}$ and an edge $v_Ru_{k-j'}$ belong now to a block $B_R$, where $j'\geq 3$ and $k-j'\geq j$ (otherwise $B_R$ and $B_L$ would share an edge in $S'$ and thus $C$ would not be a complex component). Notice that we have $|E(P)|\geq j+j'$. If $C_L$ is not a $3$-cycle one has
    \begin{align*}
        \cost(S) - \cost(S') &= |S| - |S'| + \credit(C_R) + \credit(C_L) + (|E(P)|-1)\cdot 1/3 \\ &-\credit(C) - \credit(B_R) - \credit(B_L) - (|E(P)|-j-j')\cdot 1/4 \\ &\geq
        -1 + 1 + 4/3 + (|E(P)|-1)\cdot 1/3 - 1 - 1 - 1 - (|E(P)|-6)\cdot 1/4 \geq 0.
    \end{align*}
    
    The last inequality above follows from the fact that $|E(P)|\geq j+j'\geq 6$. If $C_L$ is a $3$-cycle we can assume by Property~\ref{prp:cl} that $j\geq 4$ and one has 
    \begin{align*}
        \cost(S) - \cost(S') &= |S| - |S'| + \credit(C_R) + \credit(C_L) + (|E(P)|-1)\cdot 1/3 \\ &- \credit(C) - \credit(B_R) - \credit(B_L) - (|E(P)|-j-j')\cdot 1/4 \\ &\geq
        -1 + 1 + 1 + (|E(P)|-1)\cdot 1/3 - 1 - 1 - 1 - (|E(P)|-7)\cdot 1/4 \geq 0.
    \end{align*}
    \end{caseanalysis}
    The last inequality above follows from the fact that $|E(P)|\geq j+j'\geq 7$.
    \begin{figure}
        \centering
        \includegraphics{/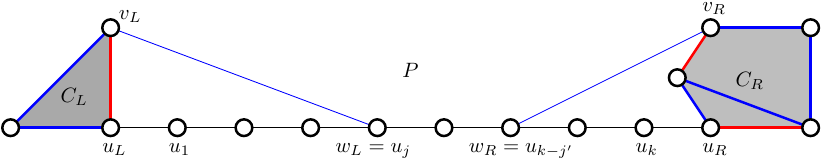}
        \caption{An example of the construction of $S'$ when $C_L$ is a small component, $w_L = u_j, j\in\{3, \dots, k\}$ and $C_R$ is also a small component. We remove the edges of $C_L$ and $C_R$, and we add their shortcut paths and the edges $v_Lw_L$ and $v_Rw_R$. This is equivalent to removing the red edges and adding the blue edges. The shortcut paths of $C_L$ and $C_R$ are shown in bold blue. The nodes of $C_L$ together with the path $u_Lu_1u_2\dots u_j$ and the edge $v_Lw_L$ belong to a block $B_L$ in $S'$. Similarly, the nodes of $C_R$ together with the path $u_Ru_ku_{k-1}\dots u_{k-j'}$ and the edge $v_Rw_R$ belong to a block $B_R$ in $S'$. In the case shown, $B_L\neq B_R$. The set of internal nodes of $P$ is the union of the node sets of some small components $C_1,\dots, C_q$ of $S$. Every connected component $C\notin\{C_L, C_R\}\cup\{C_1,\dots, C_q\}$ of $S$ remains unaltered.}
        \label{fig:closingPathCRSmall}
    \end{figure}
\end{proof}

We next address the small components of the second type, i.e., small components adjacent only to a single connected component of $S$ (excluding pendant $4$-cycles).

\begin{lemma}\label{lem:leafSmallComponent}
    Let $S$ be a canonical 2-edge-cover of a structured graph $G$. Assume that $S$ contains a small component $C$ such that $C$ is only adjacent to another component $C'$ of $S$ and $C$ is not a pendant $4$-cycle. Then one can compute in polynomial time a canonical 2-edge-cover $S'$ of $G$ with strictly fewer components than $S$ and with $\cost(S')\leq \cost(S)$.
\end{lemma}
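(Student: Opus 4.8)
The plan is to absorb $C$ into its unique neighbouring component $C'$ by rerouting, paying for the single extra edge out of the $\credit(C)=|E(C)|/3\ge 1$ credits that $C$ frees up when it disappears. Before the case split I would record three observations. Since $S$ is canonical, $C$ is a cycle of length $3$, $4$ or $5$; and since $C$ is not a pendant $4$-cycle and is adjacent only to $C'$, if $C$ is a $4$-cycle then $C'$ must be small (hence itself a $2$VC cycle). Applying the $3$-Matching Lemma~\ref{lem:matchingOfSize3} to the partition $(V(C),V\setminus V(C))$ — legitimate since $|V(C)|\in\{3,4,5\}$ is a triangle when it equals $3$, and $V\setminus V(C)\supseteq V(C')$ has at least $3$ vertices, coinciding with the triangle $V(C')$ when it has exactly $3$ — gives a size-$3$ matching between $V(C)$ and $V(C')$, so at least $3$ distinct vertices of $C'$ are adjacent to $C$. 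Finally, if $C'$ is itself a small component adjacent to at least two components of $S$, I would simply invoke Lemma~\ref{lem:nonLeafSmallComponent} on $C'$, which already produces the required $S'$; so from now on either $C'$ is large, or $C'$ is small and adjacent only to $C$.

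I would then dispose of the case in which $C'$ is $2$VC (this covers $C'$ large and $2$VC, and also $C'$ small, since a small canonical component is a cycle). Take a shortcut pair $\{u,v\}$ of $C$ with corresponding matching $\{ux,vy\}$, $x\ne y$, $x,y\in V(C')$, and shortcut path $P_{uv}$ (these exist by Lemma~\ref{lem:shortcutPair}). If $C'$ is large, put $S'=(S\setminus E(C))\cup E(P_{uv})\cup\{ux,vy\}$; removing any single vertex leaves $C'$ connected with the path $x\,u\,P_{uv}\,v\,y$ attached at the two distinct vertices $x,y$, so the merged component $C''$ is again large and $2$VC (one block), $S'$ is canonical with one fewer component, $|S'|=|S|+1$, and the credit change is only $-\credit(C)$, giving $\cost(S')-\cost(S)=1-\credit(C)\le 0$. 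If $C'$ is small, the naive splice would waste $C'$'s credits, so I would additionally break $C'$ open: among the $\ge 3$ matching vertices in $V(C')$, two are adjacent on the cycle $C'$ (any $3$ vertices of a cycle of length $\le 5$ contain an adjacent pair), say $x_1x_2\in E(C')$ with matching edges $x_1w_1,x_2w_2$; then $w_1\ne w_2$, and $w_1w_2\in E(C)$ automatically when $C$ is a triangle, the general case being supplied by Lemma~\ref{lem:shortcutPair}. Splicing the $w_1$–$w_2$ Hamiltonian path of $C$ in place of $x_1x_2$ fuses $C$ and $C'$ into a single cycle of length $|V(C)|+|V(C')|\ge 6$, so $|S'|=|S|$ and the $\credit(C)+\credit(C')=(|E(C)|+|E(C')|)/3\ge 2$ credits are replaced by the $2$ credits of the new large $2$VC component, keeping $\cost$ unchanged.

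The remaining case is $C'$ large and complex, where $C''$ may be complex and canonicity must be rechecked. Splicing the shortcut path of $C$ between $x$ and $y$ merges $V(C)$ together with exactly the blocks and bridges on the $x$–$y$ path of $C'$ into one new block $B^\ast$, and leaves every other block — in particular every leaf-block of $C'$, which has $\ge 5$ vertices — and every other bridge untouched; since $B^\ast\supseteq V(C)$ also has $\ge 5$ vertices, $S'$ is canonical. For the cost one balances the one new edge against $\credit(C)\ge 1$ plus the credit released by collapsing the $x$–$y$ portion of $C'$ into $B^\ast$ ($1$ per collapsed block, $1/4$ per collapsed bridge, less $1$ for $B^\ast$ when it is genuinely new). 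When $C$ is a $4$- or $5$-cycle the slack $\credit(C)\ge 4/3$, or the savings from merging $C'$-blocks, is comfortably enough; the tight situation is $C$ a triangle, where I would choose $x,y$ — using the extra freedom in Lemma~\ref{lem:shortcutPair} (its two-pair clause for triangles) and the fact that $\ge 3$ vertices of $C'$ are adjacent to $C$ — so that the $x$–$y$ path traverses a block, merges two blocks, or contains at least $4$ bridges.

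The main obstacle is exactly this last bookkeeping for a triangle $C$: when \emph{all} vertices of $C'$ adjacent to $C$ happen to lie in the interior of one short all-bridge segment of $C'$, none of the favourable choices above is available and a plain splice would increase $\cost$. I expect to resolve this either by showing the configuration is impossible in a structured graph (it ought to force a non-isolating $2$-vertex-cut or an irrelevant edge, both forbidden), or by a tailored rerouting that threads the short bridge segment through $C$, replacing its chain of quarter-credit bridges by a detour absorbed into a single full-credit block, so that the net change in $\cost$ is again non-positive. Once all cases are in place, $S'$ is canonical, has strictly fewer components than $S$, satisfies $\cost(S')\le\cost(S)$, and is computed in polynomial time since every step reduces to the polynomial-time primitives of Lemmas~\ref{lem:matchingOfSize3}, \ref{lem:shortcutPair} and \ref{lem:nonLeafSmallComponent}.
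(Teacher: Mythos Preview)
Your overall skeleton matches the paper's, but there are two genuine gaps beyond the one you flag, and your speculated fix for the flagged gap points in the wrong direction.

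\textbf{The $5$-cycle against a single bridge.} Your claim that for a $5$-cycle $C$ ``the slack $\credit(C)\ge 4/3$\ldots is comfortably enough'' in the complex-$C'$ case is false. If the $x$--$y$ path in $C'$ is a single bridge $xy$, your balance reads $1$ (extra edge) versus $\credit(C)+\credit(xy)-\credit(B^\ast)=\tfrac{5}{3}+\tfrac{1}{4}-1=\tfrac{11}{12}<1$, so $\cost$ strictly increases. Lemma~\ref{lem:shortcutPair} does not let you force a longer $x$--$y$ path: for a $5$-cycle the three matched vertices in $C$ can have a unique adjacent pair, and their partners in $C'$ may well be joined by a single bridge. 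The paper handles exactly this situation (its Case~(2.1.2)) not by creating a block but by \emph{deleting} the bridge $xy$ and threading $C$ through as a chain of $|E(C)|+1$ new bridges; then $|S'|=|S|$ and the credit balance is $\tfrac{5}{3}+\tfrac14-(|E(C)|+1)\cdot\tfrac14=\tfrac{5}{12}>0$.

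\textbf{The triangle tight case.} Your first speculation---that the bad configuration is forbidden in a structured graph---is wrong: the paper's Case~\ref{case:triangleLast} treats precisely this situation and does not derive a contradiction. The actual fix is again \emph{not} to absorb into a block. One shows that the longest of the three bridge-paths $P_{v_iv_j}$ has length $2$ or $3$ and hence contains an internal vertex $w$ of degree $\ge 3$ in $C'$ adjacent to, say, $v_1$; then set $S':=S\cup\{u_1v_1,u_2v_2\}\setminus\{u_1u_2,v_1w\}$, turning the triangle into four new bridges while removing one old bridge. Here $|S'|=|S|$ and $\credit(C)+\credit(v_1w)-4\cdot\tfrac14=\tfrac14>0$. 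Your second speculation (``detour absorbed into a single full-credit block'') goes the wrong way: there is no block to absorb into, and creating one costs a full credit you cannot afford.

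\textbf{A smaller slip.} In the ``$C'$ small'' case you pick $x_1x_2\in E(C')$ first and then assert Lemma~\ref{lem:shortcutPair} supplies a $w_1$--$w_2$ Hamiltonian path in $C$; but that lemma only promises \emph{some} shortcut pair containing a \emph{given} outside neighbour, not that the specific pair $\{w_1,w_2\}$ matched to $\{x_1,x_2\}$ is one. When $C$ is a $5$-cycle this can fail. The paper sidesteps the issue: when one of $C,C'$ is a $5$-cycle it does not try to form a single cycle at all, it simply adds the two crossing edges and removes one edge of $C$ (net $+1$ edge), using $\credit(C)+\credit(C')\ge\tfrac43+\tfrac53=3$ to pay for the new large $2$VC component.

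The common missing idea in the first two points is the same: when $C$ attaches across a short bridge segment of $C'$, do not manufacture a block---thread $C$ through as bridges and delete one bridge of $C'$ to keep $|S'|=|S|$.
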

\begin{proof}
    Apply the 3-matching Lemma~\ref{lem:matchingOfSize3} to $C$ to get a matching $\{u_1v_1, u_2v_2, u_3v_3\}, u_i\in C, v_i\in C'$ for $i\in\{1,2,3\}$. W.l.o.g. assume that $u_1$ and $u_2$ are adjacent in $C$. This way, $E(C)\cup\{u_1v_1, u_2v_2\}\setminus\{u_1u_2\}$ is a path from $v_1$ to $v_2$ whose internal nodes are given by $V(C)$. We will consider a few cases depending on the type of $C$ and $C'$.

\begin{caseanalysis}
\case{$C'$ is small.}\hfill\\
\subcase{At least one among $C$ and $C'$ is a $3$-cycle or both $C$ and $C'$ are $4$-cycles}.\label{case:both4Cycles}\hfill\\
In this case w.l.o.g. we can assume that $v_1$ and $v_2$ are adjacent in $C'$. Let $S':=S\cup\{u_1v_1, u_2v_2\}\setminus\{u_1u_2, v_1v_2\}$. Notice that $E(C)\cup E(C')\cup\{u_1v_1, u_2v_2\}\setminus\{u_1u_2, v_1v_2\}$ induces a cycle $C''$ of length at least 6 in $S'$ (hence a 2VC large component with a single block $B$). One has 
    $$
    \cost(S) - \cost(S') = \credit(C) + \credit(C') - \credit(C'') - \credit(B)\geq 1 + 1 - 1 - 1 = 0.
    $$

\subcase{At least one among $C$ and $C'$ is a $5$-cycle.}\hfill\\
Assume that $C'$ is a $5$-cycle, the other case being symmetric. By Case~\ref{case:both4Cycles} we can assume that $C$ is a $4$-cycle or a $5$-cycle. Set $S':=S\cup \{u_1v_1, u_2v_2\}\setminus \{u_1u_2\}$. Notice that $C$ and $C'$ are merged into a 2VC large component $C''$ of $S'$ with a single block $B$. One has 
    $$
    \cost(S) - \cost(S') = \credit(C) + \credit(C') - \credit(C'') -\credit(B) - 1\geq 4/3 + 5/3 - 1 - 1 - 1 = 0.
    $$ 
\case{$C'$ is large.}\hfill\\
Observe that $C$ is a $3$-cycle or a $5$-cycle since by assumption it is not a pendant $4$-cycle.

\subcase{$C$ is a $5$-cycle.}\hfill\\
\subsubcase{There is a $v_1$-$v_2$ path $P$ in $C'$ that includes some edge of some block or at least $2$ bridges.}\label{case:v1v2block}\hfill\\
Set $S':=S\cup \{u_1v_1, u_2v_2\}\setminus \{u_1u_2\}$. Let $C''$ be the connected component of $S'$ that includes the nodes of $C$ and $P$. If $P$ includes an edge of some block $B$ in $S$, then in $S'$ the nodes of $C, P$ and $B$ belong to a single block $B'$, and one has:
    $$
    \cost(S)-\cost(S')\geq |S| - |S'| + \credit(C)+ \credit(B) - \credit(B') = - 1 + \frac{5}{3} + 1 - 1 > 0.
    $$

Otherwise, $P$ includes at least $2$ bridges, so we get at least $2\cdot\frac{1}{4}=\frac{1}{2}$ credits from the bridges along $P$. Also, the nodes of $C$ and $P$ form a new block $B'$, which is the only block in $S'$ not present in $S$. Altogether
    $$
    \cost(S)-\cost(S')\geq |S| - |S'| + \credit(C)+\frac{1}{2} - \credit(B') = - 1 + \frac{5}{3} + \frac{1}{2} - 1 > 0.
    $$
    
\subsubcase{$v_1v_2$ is a bridge of $C'$.}\label{case:v1v2bridge}\hfill\\
Set $S':=S\cup \{u_1v_1, u_2v_2\}\setminus \{u_1u_2, v_1v_2\}$. In $S'$ we replace the bridge $v_1v_2$ with  a path $P$ of $|E(C)|+1$ bridges. One has 
    $$
    \cost(S)-\cost(S')= \credit(C)+\credit(v_1v_2)-\sum_{e\in E(P)}\credit(e)= |E(C)|\frac{1}{3} + \frac{1}{4} - (|E(C)|+1)\frac{1}{4} \geq 0.
    $$ 
\subcase{$C$ is a $3$-cycle.}\hfill\\
\subsubcase{There is a $v_i$-$v_j$ path $P$ in $C'$, for some $i,j$, that includes some edge of some block or at least $4$ bridges.}\hfill\\
W.l.o.g. assume $(i,j)=(1,2)$. In this case we build $S'$ as in Case~\ref{case:v1v2block}. The cost analysis works almost identically as in Case~\ref{case:v1v2block}. In this case $C$ provides $1$ credit only (rather than $5/3$), but in both subcases this is enough. In particular, in the case $P$ includes at least $4$ bridges they provide $4\cdot \frac{1}{4} = 1$ credits which makes up for the loss of credit of $C$.

\subsubcase{$v_iv_j$ is a bridge of $C'$, for some $i,j$.}\hfill\\
W.l.o.g. assume $(i,j)=(1,2)$. In this case we build $S'$ like in Case~\ref{case:v1v2bridge}. The analysis is identical.

\subsubcase{None of the above cases.}\label{case:triangleLast}\hfill\\
This case is illustrated in Figure \ref{fig:c2iii}. Let $P_{v_iv_j}$ be the $v_i$-$v_j$ path in $C'$. Observe that such paths consist of bridges only, and have length $2$ or $3$. W.l.o.g. assume $|E(P_{v_1v_2})|\ge |E(P_{v_1v_3})|,|E(P_{v_2v_3})|$. Since $2\leq |E(P_{v_1v_2})|\leq 3$, $v_3$ cannot be an internal vertex of $P_{v_1v_2}$: indeed otherwise the path from $v_3$ to at least one of $v_1$ or $v_2$ would have length $1$, a contradiction. However since $P_{v_1v_2}$ consists only of bridges and $|E(P_{v_1v_2})|\ge |E(P_{v_1v_3})|,|E(P_{v_2v_3})|$, at least one internal node $w$ of $P_{v_1v_2}$ has degree at least $3$ in $C'$. Note $w$ is adjacent to $v_1$ or $v_2$ in $C'$: assume w.l.o.g. $v_1w\in E(C')$. Set $S':=S\cup\{u_1v_1, u_2v_2\}\setminus\{u_1u_2, v_1w\}$. The nodes of $C$ and $C'$ belong to a single connected component $C''$ in $S'$. In $C''$  the bridges and blocks of $C'$ remain the same, except for the removed bridge $v_1w$. The edges of the path $P=v_1u_1u_3u_2v_2$ are bridges of $C''$. Altogether 
    $$
    \cost(S)-\cost(S')\geq \credit(C)+\credit(v_1w)+\credit(C')-\credit(C'')-\sum_{e\in E(P)}\credit(e)=1+\frac14-4\cdot \frac14 > 0.
    $$ 

\begin{figure}
    \centering
    \includegraphics[scale=0.8]{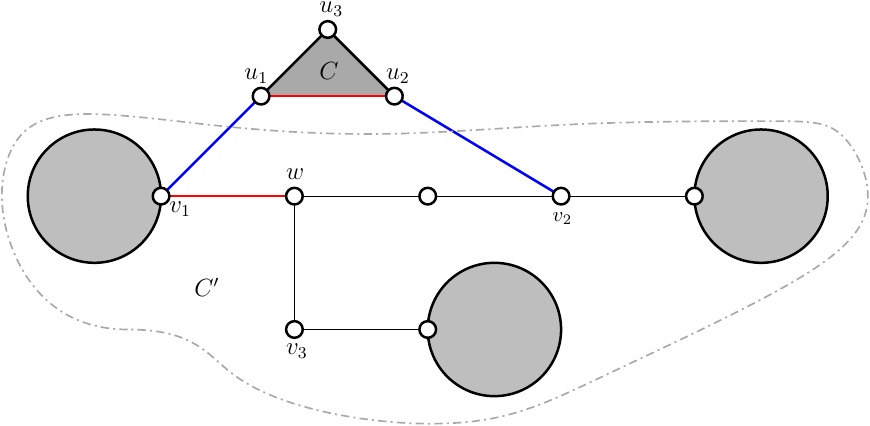}
    \caption{Illustration of Case~\ref{case:triangleLast} of Lemma \ref{lem:leafSmallComponent}. $S'$ is obtained from $S$ by adding the blue edges and removing the red ones.} 
    \label{fig:c2iii}
\end{figure}
\end{caseanalysis}

In each case it is easy to check that $S'$ is a 2-edge-cover and the number of components of $S'$ is strictly less than the number of components of $S$. Furthermore, $S'$ is canonical since each newly created block has at least $5$ nodes and each newly created connected component has at least $6$ edges (hence it is large).
\end{proof}

Lemma~\ref{lem:smallComponents} follows by repeated applications of Lemma~\ref{lem:nonLeafSmallComponent} and Lemma~\ref{lem:leafSmallComponent}. 
 
\subsection{Removing complex components}
\label{sec:removeComplex}

In this section we prove Lemma~\ref{lem:complexComponents}. Recall that initially we are given a canonical 2-edge-cover $S$ where the only small components are pendant $4$-cycles: we will maintain this invariant.

\begin{definition}\label{def:bridgePath}
    We say that $P$ is an \emph{extending path} of a complex component $C$, if it is a path in $E(G)\setminus E(C)$ between two nodes of $C$ such that $P=\{e_0\}\cup P_1\cup\{e_1\}\cup P_2\cup \{ e_2\}\dots\cup P_k\cup\{e_k\}$, $k\geq 0$, where $P_1,\dots,P_k$ are paths (possibly of length zero) in distinct components $C_1,\dots,C_k$, resp., of $S\setminus E(C)$, and $e_0,\dots, e_k$ are edges in $E(G)\setminus S$. 
\end{definition}

Since $C_1,\dots, C_k$ are all distinct, and the only small components of $S$ are pendant $4$-cycles, either $P=\{e_0\}\cup P_1\cup\{e_1\}$, with $C_1$ being a pendant $4$-cycle, or $C_1,\dots, C_k$ are all large components. In the latter case, we say that $P$ is a \emph{clean} extending path. Notice that a clean extending path might consist of a single edge.

Consider what happens to the cost of $S$ when we add the edges of a clean extending path $P_{uv}=\{e_0\}\cup P_1\cup\{e_1\}\cup P_2\cup \{ e_2\}\dots\cup P_k\cup\{e_k\}$ between nodes $u$ and $v$ of some complex component $C$ of $S$. By doing this, the edges of $P_{uv}$ and of every path in $C$ from $u$ to $v$ become part of a block $B$. Therefore this operation cannot create any new bridge. The cost increases by $k+1$ due to the extra edges of $P_{uv}$. The blocks of every $C_i$ retain their credit, and we collect $1$ credit from every $C_i$. Hence we earn at least $k$ credits from the components along $P_{uv}$. Altogether, the cost increases by at most $1$ based on the above arguments. We call such cost increase, the increase of the cost due to $P_{uv}$, and denote it by $\widetilde \cost(P_{uv})\leq 1$. We remark that in the following we sometimes add $\ell\geq 2$ clean extending paths simultaneously. In that case we will guarantee that they do not share any component $C_i$ (we say that they are \emph{component disjoint}): this way we can guarantee that the overall increase of the cost due to such paths is the sum of their individual cost increases, hence at most $\ell$ (otherwise we might overcount the credits gained from some component). In the following we might add several component disjoint extending paths and remove some edges from $C$. When doing this we will guarantee that the extending paths are still part of some block, despite removing edges of $C$: this way we maintain the invariant that we do not create new bridges and thus the above cost analysis holds.

We next show that we can extend the 3-matching Lemma~\ref{lem:matchingOfSize3} to work with extending paths. If $C$ is a complex component of $S$, we define the graph $G_C$ such that $V(G_C)=V(C)$, and for $u, v\in V(G_C)$, $uv\in E(G_C)$ iff $uv\in E(C)$ or there is an extending path from $u$ to $v$.  

\begin{lemma}\label{lem:3matchingPath}
    Let $C$ be a complex component of $S$, and $(V_1, V_2)$ be a partition of the nodes of $C$ such that $4\leq |V_1|\leq |V_2|$. Then there is a matching of size $3$ in $G_C$ between $V_1$ and $V_2$.
\end{lemma}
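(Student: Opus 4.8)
The plan is to reduce the statement to the ordinary 3-matching Lemma (Lemma~\ref{lem:matchingOfSize3}) applied to an auxiliary graph built on the node set of $C$. First I would form a graph $G'$ on the vertex set $V(C)$ as follows: contract each large component $C_i$ of $S\setminus E(C)$ into a single node, keeping the edges of $E(G)$ that have at least one endpoint in $C$. More precisely, I want $G'$ to be a 2VC graph, without irrelevant edges and without non-isolating 2-vertex-cuts, on a vertex set that ``represents'' $V(C)$, so that a matching of size $3$ in $G'$ between (the images of) $V_1$ and $V_2$ lifts back to three vertex-disjoint extending paths of $C$ between $V_1$ and $V_2$, i.e. a matching of size $3$ in $G_C$. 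The key observation making the lift work is that an edge of $G_C$ is either an edge of $C$ or realised by an extending path, and distinct large components $C_i$ yield \emph{component-disjoint} extending paths; so if the three matching edges in $G'$ pass through distinct contracted nodes (or avoid them), the corresponding extending paths are pairwise disjoint.

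The first concrete step is to check that the contracted graph $G'$ inherits the hypotheses of Lemma~\ref{lem:matchingOfSize3}. Since $G$ is structured it is 2VC with no irrelevant edges and no non-isolating cuts; contracting connected vertex sets preserves 2-vertex-connectivity, and one must argue that contraction does not \emph{create} an irrelevant edge or a non-isolating 2-cut (a 2-cut of $G'$ would pull back to a 2-cut of $G$ separating the same sides, using that each contracted $C_i$ is itself connected and 2-edge-connected). Care is needed with parallel edges created by contraction: we keep a single copy, which is harmless for the matching argument. The second step is to set up the partition of $V(G')$ induced by $(V_1,V_2)$: each contracted node lies entirely inside $C$ only through its attachment vertices, so I actually want to keep $V(C)$ uncontracted and instead add to $G[V(C)]$ a ``shortcut edge'' $uv$ whenever there is an extending path between $u,v$ — this is exactly the definition of $G_C$ already in the paper. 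So really $G'=G_C$, and the task is to verify that $G_C$ satisfies the hypotheses of Lemma~\ref{lem:matchingOfSize3} with respect to the partition $(V_1,V_2)$, where now $|V_1|\geq 4$ so the triangle special case is vacuous.

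Thus the heart of the proof is: (i) $G_C$ is 2VC — because $C$ is connected and every vertex of $C$ that is a cut vertex of $C$ is joined to the rest by extending paths through other components or through $G$, since $S$ is a 2-edge-cover and $G$ is 2VC; (ii) $G_C$ has no irrelevant edge and no non-isolating 2-vertex-cut — any such cut $\{u,v\}$ in $G_C$ would, after expanding shortcut edges into their extending paths, give a 2-vertex-cut of $G$ with both sides of size $\geq 2$ (using $|V_1|\geq 4$), contradicting structuredness. Given (i) and (ii), Lemma~\ref{lem:matchingOfSize3} yields a size-$3$ matching in $G_C$ between $V_1$ and $V_2$, which is exactly the claim. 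The main obstacle I anticipate is part (ii): translating a hypothetical small cut of $G_C$ back into a cut of $G$ requires being careful that the extending paths realising the shortcut edges incident to the cut do not themselves route ``around'' the cut through the very components that would be separated — this is handled by the component-disjointness of extending paths through distinct large $C_i$ and by the fact that pendant $4$-cycles attach to a single large component. I would isolate this translation as the one careful lemma and let everything else follow by invoking Lemma~\ref{lem:matchingOfSize3} verbatim.
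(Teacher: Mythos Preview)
Your approach is essentially the paper's: show that $G_C$ inherits enough structure from $G$ (no cut vertex, no non-isolating $2$-cut) and then finish with K\"onig--Egerv\'ary. The paper carries this out slightly more economically, and two points in your write-up deserve correction.

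First, the paper does \emph{not} invoke Lemma~\ref{lem:matchingOfSize3} as a black box on $G_C$. It only proves that $G_C$ has no non-isolating $2$-vertex-cut (by the lifting argument you outline) and no cut vertex, and then redoes the K\"onig--Egerv\'ary case split directly. The reason is that your step (ii) --- verifying $G_C$ has no \emph{irrelevant edge} --- is both unnecessary and possibly false: an isolating cut $\{u,v\}$ of $G_C$ with $uv$ realised by an extending path (rather than an edge of $G$) would make $uv$ irrelevant in $G_C$ without contradicting structuredness of $G$. Fortunately the irrelevant-edge hypothesis in Lemma~\ref{lem:matchingOfSize3} is only used when some $|V_i|=3$; with $|V_1|,|V_2|\ge 4$ the case analysis goes through without it, which is why the paper just inlines the argument.

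Second, your anticipated ``main obstacle'' --- that extending paths might route around a hypothetical cut $\{u,v\}$ of $G_C$ --- is a non-issue. By definition every internal vertex of an extending path lies outside $V(C)$, so an extending path between $a,b\in V(C)\setminus\{u,v\}$ automatically avoids $u$ and $v$. The lifting argument is therefore the clean one the paper gives: if $\{u,v\}$ were a non-isolating cut of $G_C$, then since $G$ has no such cut there is a path in $G\setminus\{u,v\}$ between the two large sides; restricting to its vertices in $V(C)$ yields either $C$-edges or extending paths, hence a path in $G_C\setminus\{u,v\}$, a contradiction. No component-disjointness bookkeeping is needed here.
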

\begin{proof}
    We claim $G_C$ has no non-isolating cut. Assume to get a contradiction that there is a non-isolating cut $\{u, v\}$ in $G_C$ splitting $G_C$ into components $C_1, C_2, \dots, C_k, k\geq 2$. Since $G$ is structured, $\{u, v\}$ is not a non-isolating cut of $G$. Therefore, there are at least $2$ components among $C_1, C_2, \dots, C_k$ such that there is a path in $G\setminus\{u, v\}$ between them. Moreover, we can assume such path to be fully outside $C$. Indeed, all edges of $C$ are in $G_C$, so a path $P$ between two components among $C_1, C_2, \dots, C_k$ must contain an edge not in $C$ (otherwise it would also be a path in $G_C$, a contradiction). Thus, the subpath of $P$ that is fully outside $C$ is a path between two components among $C_1, C_2, \dots, C_k$. W.l.o.g. assume there is a path $P$ in $G\setminus\{u, v\}$ between nodes $v_1\in V(C_1)$ and $v_2\in V(C_2)$ not using any edge of $C$. But then there exists an extending path between $v_1$ and $v_2$: this implies that there exists an edge in $G_C\setminus\{u, v\}$ from $v_1$ to $v_2$, a contradiction. The claim follows.
    
    Now assume $|V_1|,|V_2|\ge 4$. Then if there is no matching of size $3$ in $G_C$ between $V_1$ and $V_2$, using König-Egeváry Theorem, there exists a vertex cover $V'$ of the bipartite graph induced by $(V_1,V_2)$ in $G_C$ of size at most $2$. Notice that there is no edge between $V_1\setminus V'$ and $V_2\setminus V'$, where the latter sets have size at least $2$. If $V'$ is of size $2$, then it is a non-isolating cut of $G_C$, a contradiction. If instead $V'=\{v'\}$, then $v'$ is a cut vertex in $G_C$ (hence in $G$), a contradiction.
\end{proof}

For the sake of brevity, when given a complex component $C$ of $S$ and a subgraph $B$ of $C$, we will usually say that we apply Lemma~\ref{lem:3matchingPath} to $B$ instead of to $(V(B), V(C)\setminus V(B))$. Also, because $S$ is canonical, every leaf-block of $C$ has at least $5$ nodes, so we can apply Lemma~\ref{lem:3matchingPath} to any partition $V_1, V_2$ of $V(C)$ if $V_1$ and $V_2$ both contain the nodes of some leaf-block. 

The most complicated part of proving Lemma~\ref{lem:complexComponents} is removing complex components without creating new small components. We do this in Lemma~\ref{lem:bridgeCovering} whose proof is postponed. 
\begin{lemma}\label{lem:bridgeCovering}
    Let $S$ be a canonical 2-edge-cover of a structured graph $G$ whose small components are pendant $4$-cycles and that has at least one complex component. In polynomial time one can find a canonical $2$-edge-cover $S'$ of $G$, whose small components are pendant $4$-cycles, with no more connected components than $S$, with fewer complex components than $S$, and such that $\cost(S')\leq \cost(S)$.
\end{lemma}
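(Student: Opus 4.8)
\textbf{Proof proposal.}

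The plan is to eliminate a single complex component $C$ of $S$ at a time: we add a carefully chosen collection of extending paths (possibly removing a few edges of $C$), turning $C$ — together with the large components reached by the clean ones among these paths, and possibly one pendant $4$-cycle — into a single large $2$VC component. The increase in the number of edges is paid for by the credits that are freed when all the blocks and bridges of $C$ merge into one block. Iterating over the complex components then yields Lemma~\ref{lem:complexComponents}.

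First I would set up the credit bookkeeping. Fix a complex component $C$ and consider its block-cutpoint tree $T_C$; its leaves are exactly the leaf-blocks $B_1,\dots,B_\ell$ of $C$, with $\ell\ge 2$ since $S$ is a $2$-edge-cover, and each $B_i$ has at least $5$ nodes since $S$ is canonical. Let $b$ and $r$ be the numbers of blocks and bridges of $C$. When $C$ becomes a single block, the component keeps its one credit, the $b$ block-credits collapse to one, and the $r$ bridge-credits vanish, so exactly $(b-1)+r/4$ credits are freed. Since every leaf-block is a block we have $b\ge\ell$, and the number of extending paths needed to biconnect $T_C$ is $\max\{\lceil \ell/2\rceil,\ \Delta(T_C)-1\}\le b-1\le (b-1)+r/4$; the bound is essentially tight only when $T_C$ is a star (where $r=0$), which I would treat as a separate but analogous case. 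Thus it suffices to guarantee that each extending path we add raises the cost by at most $1$, i.e. has $\widetilde\cost(\cdot)\le 1$, and that the paths are pairwise component disjoint so that their cost increases add up.

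Next I would exhibit the paths. To biconnect $T_C$ we need, for a suitable family of leaf-block pairs (or, in the star case, a spanning family of links among the leaf-blocks), a connection in $G_C$ between prescribed leaf-blocks whose endpoints avoid the (unique) cut vertices of those leaf-blocks in $C$. Such connections are found by applying Lemma~\ref{lem:3matchingPath} to partitions $(V(B_i),V(C)\setminus V(B_i))$ — valid since $|V(B_i)|\ge 5\ge 4$ and, using another leaf-block, $|V(C)\setminus V(B_i)|\ge 4$ — and combining the applications on the two sides exactly as in the proof of Corollary~\ref{cor:shortcutPair}; the matchings have size $3$, which leaves enough freedom to avoid the cut vertices, to choose two distinct endpoints in a leaf-block that is used twice, and to keep the chosen paths pairwise component disjoint (distinct tree-paths of $T_C$ overlap only inside blocks). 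A clean extending path that links two leaf-blocks merges all blocks and bridges on the corresponding tree-path into one block and absorbs the large components it traverses, with $\widetilde\cost(\cdot)\le 1$; a non-clean connection $\{e_0\}\cup P_1\cup\{e_1\}$ through a pendant $4$-cycle $C_1$ absorbs $C_1$ into the new block, and its $4/3$ credits cover the two new edges. Summing, the cost increase is at most the number of paths, which is at most the freed surplus, so $\cost(S')\le\cost(S)$.

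The step I expect to be the real obstacle is verifying $2$-\emph{vertex}-connectivity of the resulting component, not merely $2$-edge-connectivity: an extending path entering an intermediate $2$VC component at a single node, or two added paths meeting at a cut vertex of $C$, would leave a cut vertex behind. Ruling this out requires (i) insisting that every extending path used traverses each intermediate component through two distinct nodes — possible since those components are $2$VC and large — and (ii) choosing, at each leaf-block, the endpoints of the incident added paths to be distinct and different from that block's cut vertex, for which the size-$3$ matching again leaves room (a leaf-block has at least $4$ non-cut nodes). Granting this, $S'$ is canonical: every newly created block has at least $5$ nodes, every newly created component is large, and no new bridge is created, so the bridge/block credit scheme remains valid; moreover the only small components of $S'$ are still pendant $4$-cycles, the number of connected components did not increase, and $S'$ has one fewer complex component. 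I would finish with the tight star-type subcases (e.g. $\ell=2$ with $b=2$, $r=0$, where only one extending path of cost exactly $1$ is affordable and one checks directly that a single well-chosen connection between the two leaf-blocks already makes $C$ $2$VC), and then invoke the iteration to conclude.
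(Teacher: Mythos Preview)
Your global plan --- pick a family of extending paths that biconnects the block-cutpoint tree $T_C$ in one shot and pay for them out of the $(b-1)+r/4$ freed credits --- rests on an assumption the available tools do not support: that you can route an extending path between a \emph{prescribed} pair of leaf-blocks. Lemma~\ref{lem:3matchingPath} applied to $(V(B_i),V(C)\setminus V(B_i))$ only yields three $G_C$-edges out of $B_i$ to \emph{some} vertices of $C\setminus B_i$; those vertices need not lie in any leaf-block at all, let alone in the one your tree-pairing demands. In the paper's proof this is exactly the source of the long case analysis: the extending path from the chosen leaf-block $B$ may land on a nearby bridge endpoint $u_1$, $u_2$, or $u_3$, or in $A(u_1)$, $A(u_2)$, etc., and each landing point forces a different local surgery (sometimes adding a second path found from a \emph{different} leaf-block, sometimes deleting a bridge). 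Your appeal to ``combining the applications on the two sides exactly as in the proof of Corollary~\ref{cor:shortcutPair}'' does not transfer: that corollary exploits that a small cycle has only $3$--$5$ nodes, so any two matching edges have adjacent endpoints in it; leaf-blocks have unbounded size and no such adjacency argument is available.

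A second gap is component-disjointness. Your remark that ``distinct tree-paths of $T_C$ overlap only inside blocks'' concerns overlap \emph{inside $C$}, but the bound $\widetilde\cost\le 1$ requires the clean extending paths to avoid sharing any intermediate \emph{large component of $S$ outside $C$}, and nothing in your scheme prevents two such paths --- even between disjoint leaf-block pairs --- from passing through the same outside component. The paper establishes component-disjointness pair by pair, each time by arguing that a shared outside component would splice into a single clean path contradicting an already-excluded case (typically Case~\ref{case:cleanToBlock}). For both reasons the paper does \emph{not} eliminate $C$ in one shot; it performs one local improvement (fewer components, or fewer bridges/blocks in $C$) and iterates --- and the issue you single out as ``the real obstacle'' (vertex- versus edge-connectivity) is in fact the easy part once the routing is pinned down.
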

After this, the only remaining components are large 2VC components and pendant $4$-cycles. It is then quite simple to remove pendant $4$-cycles and thus obtain a solution satisfying the conditions of Lemma~\ref{lem:complexComponents}. We show this first in the following simple lemma. Then, Lemma~\ref{lem:complexComponents} follows by chaining Lemmas~\ref{lem:bridgeCovering} and \ref{lem:last4Cycles}.

\begin{lemma}\label{lem:last4Cycles}
    Let $S$ be a canonical 2-edge-cover such that the only connected components of $S$ are large 2VC components and pendant $4$-cycles. Then in polynomial time one can find a 2-edge-cover $S'$ such that the only components of $S'$ are large 2VC components and $\cost(S')\leq \cost(S)$.
\end{lemma}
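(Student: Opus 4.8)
The plan is to eliminate the pendant $4$-cycles one at a time, each time merging a pendant $4$-cycle into the large $2$VC component it is attached to, strictly decreasing the number of pendant $4$-cycles without increasing the cost. So suppose $S$ has a pendant $4$-cycle $C$, attached to a large $2$VC component $C'$, and let $B'$ be the (unique) block of $C'$. First I would note that, since $C'$ is a connected simple graph with at least $6$ edges, it has at least $4$ nodes, so $|V(G)\setminus V(C)|\geq|V(C')|\geq 4\geq 3$. Hence the $3$-Matching Lemma~\ref{lem:matchingOfSize3} applies to the partition $(V(C),V(G)\setminus V(C))$ and gives a matching $\{u_1v_1,u_2v_2,u_3v_3\}$ with $u_i\in V(C)$ and $v_i\notin V(C)$. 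Because $C$ is a pendant $4$-cycle of $C'$, every edge of $G$ leaving $C$ ends in $C'$, so $v_1,v_2,v_3\in V(C')$.

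Since $u_1,u_2,u_3$ are three of the four nodes of the cycle $C$, at least two of them are adjacent in $C$; w.l.o.g.\ $u_1u_2\in E(C)$. I would then set $S':=(S\setminus\{u_1u_2\})\cup\{u_1v_1,u_2v_2\}$, so that $|S'|=|S|+1$. It is immediate that $S'$ is a $2$-edge-cover: the degrees of $u_1$ and $u_2$ are unchanged and those of $v_1,v_2$ only increase. Moreover $E(C)\setminus\{u_1u_2\}$ together with $\{u_1v_1,u_2v_2\}$ forms a path $Q$ whose endpoints are $v_1\neq v_2$ in $C'$ and whose internal nodes are exactly the nodes of $C$; appending $Q$ as a new ear to an open ear decomposition of $C'$ exhibits $C$ and $C'$ as merged into a single large $2$VC component $C''$ of $S'$, which therefore has a unique block $B''$. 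No other connected component of $S$ is touched, and every other pendant $4$-cycle of $C'$ remains a pendant $4$-cycle (now of $C''$, since $V(C')\subseteq V(C'')$), so the invariant ``every component is a large $2$VC component or a pendant $4$-cycle'' is maintained.

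For the cost, using $\credit(C)=|E(C)|/3=4/3$, $\credit(C')+\credit(B')=2$ and $\credit(C'')+\credit(B'')=2$, I obtain
\[
\cost(S)-\cost(S')=(|S|-|S'|)+\credit(C)+\credit(C')+\credit(B')-\credit(C'')-\credit(B'')=-1+\tfrac{4}{3}+2-2=\tfrac{1}{3}>0 .
\]
Thus $\cost(S')\leq\cost(S)$. Since the number of pendant $4$-cycles strictly decreases with each such step, after polynomially many iterations we reach a $2$-edge-cover whose only components are large $2$VC components, with cost at most that of the original $S$.

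I do not expect a genuine obstacle here. The two points that need a little care are verifying that the hypotheses of the $3$-Matching Lemma are met --- which reduces to the bound $|V(G)\setminus V(C)|\geq 3$, guaranteed by the presence of the large component $C'$ --- and checking that adding the ear $Q$ keeps the merged component $2$-vertex-connected; both are routine.
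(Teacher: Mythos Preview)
Your proof is correct and follows essentially the same approach as the paper: apply the 3-Matching Lemma to a pendant $4$-cycle, pick two matched vertices adjacent on the cycle, swap the corresponding cycle edge for the two matching edges, and iterate. Your write-up is in fact slightly more careful than the paper's in checking the hypotheses of the 3-Matching Lemma and in justifying that the merged component is $2$VC and that the remaining pendant $4$-cycles stay pendant.
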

\begin{proof}
    Let $C_1$ be a $4$-cycle of $S$ adjacent only to one connected component $C_2$ of $S$ that is large and 2VC. Let $B$ be the only block of $C_2$. By applying the 3-matching Lemma~\ref{lem:matchingOfSize3} to $C_1$ we can find a matching $\{a_1a_2, b_1b_2\}$, with $a_1b_1\in E(C_1)$ and $a_2,b_2\in V(C_2)$. Set $S':=S\cup\{a_1a_2, b_1b_2\}\setminus\{a_1b_1\}$. In $S'$ the nodes of $C_1$ and $C_2$ belong to a single large component $C_3$ with a single block $B'$. One has 
    $$
    \cost(S)-\cost(S')=\credit(C_1)+\credit(C_2)+\credit(B)-\credit(C_3)-\credit(B')-1=\frac43+1+1-1-1-1>0.   
    $$
    By iterating the above construction we obtain the claim.
\end{proof}

\paragraph{Notation in the figures of Lemma~\ref{lem:bridgeCovering}.} In the coming figures, gray regions represent blocks or $4$-cycles, blue edges are added to $S$ and red edges are being removed from $S$. Blue curved line represent clean extending paths that are added to $S$. Edges in a complex component that become part of a new block are shown in green. The dashed edges represent other edges of $G$ not in $S$.

\begin{proof}[Proof of Lemma \ref{lem:bridgeCovering}]
    Let $C$ be any complex component of $S$. We will show how to modify $S$ into a canonical 2-edge-cover $S'$ without creating new connected components and in a way that $V(C)$ is contained in a connected component $C'$ of $S'$, and either $S'$ has fewer connected components than $S$ or $C'$ has fewer bridges or fewer blocks than $C$. We then repeat this operation on $C'$ until it becomes a (large) 2VC component.

    For every $a, b\in C$ let $P^C_{ab}$ be any path in $C$ between them. We consider the following cases, assuming at each case that the previous ones do not hold.

\begin{caseanalysis}
\case{For some leaf-block $B$, there is a non-clean extending path $P_{xy}$ from $x\in V(B)\setminus\{u\}$ to $y\in V(C)\setminus V(B)$, where $u$ is the only cut vertex of $B$.}\label{case:nonCleanPath}\hfill\\
This case is illustrated in Figure~\ref{fig:caseNoCleanB}. Let $C'$ be the $4$-cycle corresponding to $P_{xy}$ and let $xa, a\in V(C')$, be the first edge of $P_{xy}$. Apply the 3-matching Lemma~\ref{lem:matchingOfSize3} to $C'$ to get a matching $\{v_1w_1, v_2w_2, v_3w_3\}$, with $w_i\in C, v_i\in C'$ for all $i$. W.l.o.g. we can assume  $w_1=x$: indeed if $w_1\neq x$, we can add $ax$ to the matching and remove $aw_i$ if any. Notice that at least one of $v_2$ or $v_3$ is adjacent to $v_1$, say $v_1v_2\in E(C')$ w.l.o.g. Set $S':=S\cup\{v_1w_1, v_2w_2\}\setminus\{v_1v_2\}$. In $S'$, the path $E(C')\cup\{v_1w_1, v_2w_2\}\setminus\{v_1v_2\}$ is merged with $B$ (and potentially more edges of $C$) into a single block $B'$. Hence one has
\begin{align*}
\cost(S)-\cost(S')\geq \credit(B)+\credit(C')-\credit(B')-1=1+4/3-1-1>0.
\end{align*}

\begin{figure}
    \centering
    \includegraphics[scale=0.8]{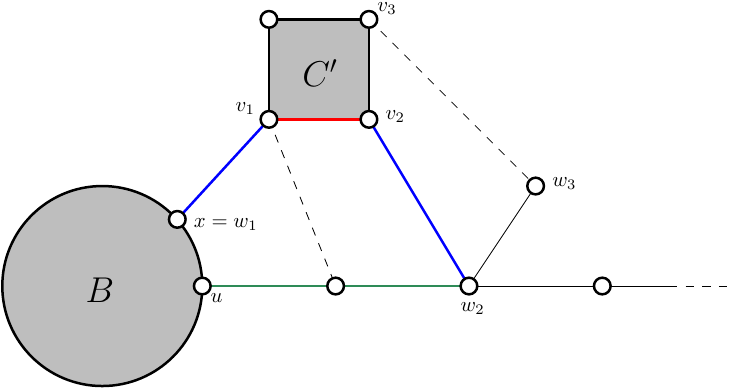}
    \caption{Lemma~\ref{lem:bridgeCovering} Case~\ref{case:nonCleanPath}.}
    \label{fig:caseNoCleanB}
\end{figure}

    We will use the next remark repeatedly in the rest of the cases.
    \begin{remark}\label{rem:lemma18}
    Let $B$ be any leaf-block with cut vertex $u$, and assume that Case~\ref{case:nonCleanPath} does not hold.     
    Apply Lemma~\ref{lem:3matchingPath} to $B$. There are two extending paths with distinct endpoints from $v_1, v_2\in V(B)\setminus\{u\}$ to $w_1, w_2\in V(C)\setminus V(B)$. Since Case~\ref{case:nonCleanPath} does not hold, these paths are clean. Given $x\in V(C)\setminus V(B)$ at least one of those paths goes from $a$ to $b$, where $a\in V(B)\setminus\{u\}$, $b\in V(C)\setminus (V(B)\cup\{x\})$.
    \end{remark}
   
\case{For some block $B$, there is a clean extending path $P_{xy}$ from $x\in V(B)$ to $y\in V(C)\setminus V(B)$ such that there is a path $P^C_{xy}$ containing an edge of $B$ and either an edge of another block $B'$ or at least $4$ bridges.}\label{case:cleanToBlock}\hfill\\
This case is illustrated in Figure~\ref{fig:case2BlocksPath}. Set $S':=S\cup E(P_{xy})$. $P_{xy}, B$ and all edges of $P^C_{xy}$ are merged into a single block $B''$ of $S'$. If $P^C_{xy}$ contains the edge of another block $B'$ then $B'$ is also part of $B''$ in $S'$, and it brings $1$ credit. If $P^C_{xy}$ contains at least $4$ bridges each of them brings $1/4$ credits, so we collect $1$ credit from the edges of $P^C_{xy}$ as well. Altogether 
    \begin{align*}
        \cost(S)-\cost(S')\geq \credit(B) + \credit(P^C_{xy}) - \credit(B'')  - \widetilde\cost(P_{xy})\geq 1 +1 -1-1 =0.
    \end{align*}

\begin{figure}
    \centering
    \includegraphics[scale=0.8]{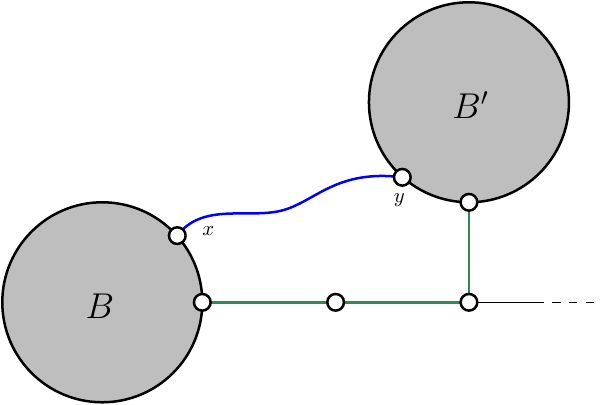}
    \caption{Lemma~\ref{lem:bridgeCovering} Case~\ref{case:cleanToBlock}.}
    \label{fig:case2BlocksPath}
\end{figure}

\begin{remark}\label{rem:atLeastOneBridge}
Assuming that Cases~\ref{case:nonCleanPath} and~\ref{case:cleanToBlock} do not hold, every leaf-block $B$ must have at least one bridge incident to them. Otherwise, let $u$ be the cut vertex of $B$. By Remark \ref{rem:lemma18}, there is a clean extending path from a vertex $v\in V(B)\setminus\{u\}$ to some $w\in V(C)\setminus V(B)$. Since no bridge is incident to $u$, $P^C_{wu}$ includes the edge of another block $B'$, and thus $P^C_{vw}$ includes edges of both $B$ and $B'$, which is excluded by Case~\ref{case:cleanToBlock}. 
\end{remark}

Let $C^*$ be the block-cutpoint graph (see Section~\ref{sec:preliminaries}) of $C$, and take a longest path $P^*$ in the tree $C^*$. Note that the endpoints of this paths are two leaf-blocks. Let $B$ be the block corresponding to one of those leaves. By Remark \ref{rem:atLeastOneBridge} and because $P^*$ is a longest path, $B$ must have exactly one bridge $u_0u_1$, $u_0\in V(B)$, incident to it. $P^*$ is the path $b_1c_1b_2c_2b_3c_3\dots$, where $b_1, b_2,\dots$ are blocks or bridges of $C$, $c_1, c_2,\dots$ are cut vertices of $C$ and $b_1, c_1, b_2$ correspond to $B, u_0$ and $u_0u_1$, respectively. Consider the maximal sequence $b_2, b_3,\dots$ such that $b_i$ is a bridge of $C$ for all $i\geq 2$, and let $u_0u_1u_2u_3\dots$ be the path of bridges corresponding to that sequence. For every $i\geq 0$, let $A(u_i)$ be the set of nodes (other than $u_i$) with paths in $C$ to $u_i$ that do not go through $u_{i-1}$ or $u_{i+1}$. Observe that the sets $A(u_i)$ are disjoint and that $V(B)\setminus\{u_0\}\subseteq A(u_0)$. An example of how $C$ and $P^*$ might look like is shown in Figure~\ref{fig:longestPathP}.

\begin{figure}[ht]
    \centering
    \includegraphics[scale=0.8]{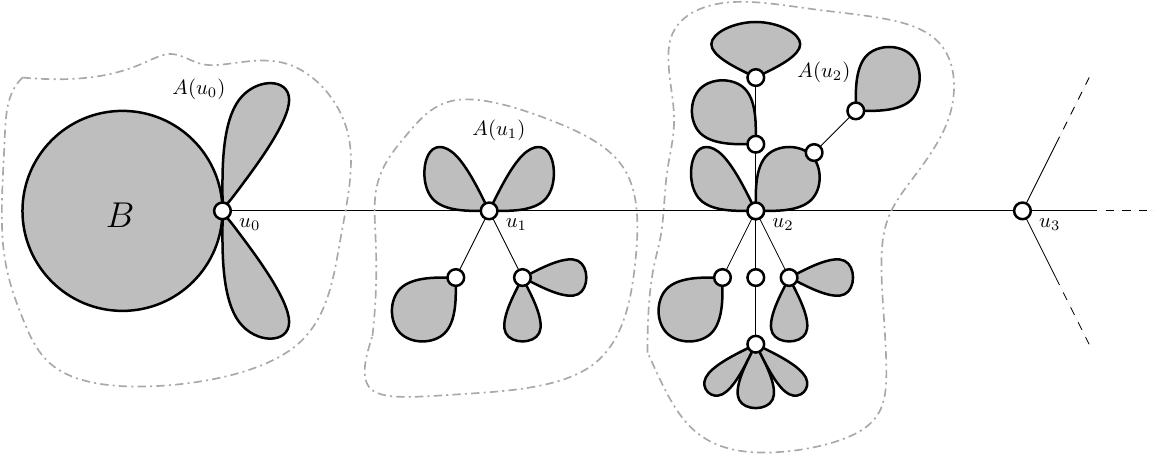}
    \caption{A representation of $C$ with respect to the path $P^*$. Recall that, because of Remark~\ref{rem:atLeastOneBridge}, every leaf-block must have a bridge incident to it.}
    \label{fig:longestPathP}
\end{figure}

\case{There is clean extending path $P_{vw}$ from $v\in V(B)\setminus\{u_0\}$ to $w\in A(u_1)$.}\label{case:cleanToAu1}\hfill\\
This case is illustrated in Figure~\ref{fig:blockExtendingCase11}. No path $P^C_{wu_0}$ can include the edge of a block distinct from $B$, because of Case~\ref{case:cleanToBlock}. Thus $P^C_{wu_0}$ consists only of bridges. Also, since $P^*$ is a longest path, $P^C_{wu_1}$ consists of a single bridge $u_1w$ and $w\in V(B')$, where $B'$ is a leaf-block of $C$ distinct from $B$. Apply Remark \ref{rem:lemma18} to $B'$ to find a clean extending path $P_{xy}$ from $x\in V(B')\setminus\{w\}$, to $y\in V(C)\setminus V(B')$. Let us first argue that the paths $P_{vw}$ and $P_{xy}$ are component disjoint. If that is not the case then there is a clean extending path $P_{vx}$ from $v$ to $x$, but since every $P^C_{vx}$ includes at least an edge of $B$ and an edge of $B'$ this is excluded by Case~\ref{case:cleanToBlock}.
   
Notice that every path $P^C_{yu_0}$ must not use $u_1w$. Otherwise, since $P^*$ is a longest path, it must be that $y\in B''$, where $B''$ is another leaf-block with $w\in V(B'')$, and $P^C_{xy}$ contains the edges of two distinct blocks, a contradiction by Case~\ref{case:cleanToBlock}. Set $S':=S\cup E(P_{vw})\cup E(P_{xy})\setminus\{u_1w\}$, so that $B$ and $B'$ are merged into a single block $B''$ by the paths $P_{vw}$ and $P_{xy}\cup P^C_{yu_0}$. One has 
\begin{align*}
    \cost(S)-\cost(S') &\geq \credit(B)+\credit(B')+1-\credit(B'')-\widetilde\cost(P_{vw})-\widetilde\cost(P_{xy})\\
    & \geq 1+1+1-1-1-1=0.
\end{align*}

\begin{figure}
    \centering
    \includegraphics[scale=0.8]{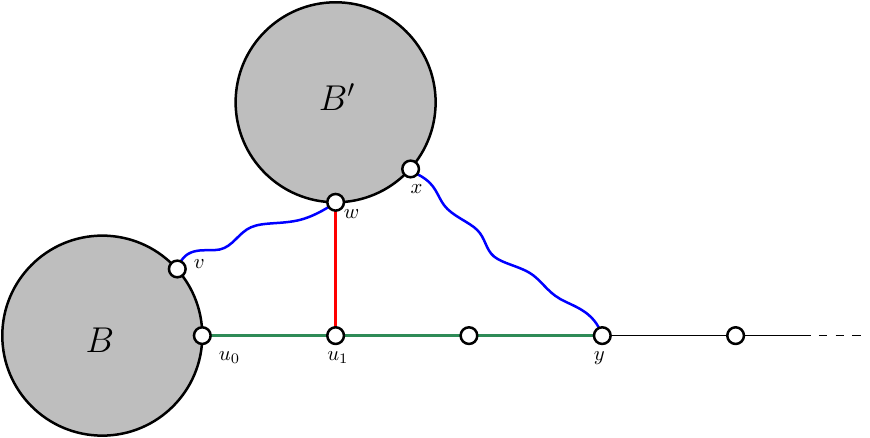}
    \caption{Lemma~\ref{lem:bridgeCovering} Case~\ref{case:cleanToAu1}.}
    \label{fig:blockExtendingCase11}
\end{figure}

\case{There is a clean extending path $P_{vw}$ from $v\in V(B)\setminus\{u_0\}$ to $w\in A(u_2)$.}\label{case:cleanToAu2}\hfill\\
By Case~\ref{case:cleanToBlock} it must be that every $P^C_{wu_0}$ contains no edge of a block and $|E(P^C_{wu_0})|\leq 3$. Thus there is a bridge $u_2w$ in $C$ and $P^C_{wu_0}=wu_2u_1u_0$.

\subcase{There is another bridge $ww', w'\neq u_2$.}\label{case:cleanToAu2bridge}\hfill\\
This case is illustrated in Figure~\ref{fig:blockExtendingCase21}. Since $P^*$ is a longest path, $w'\in V(B')$, where $B'$ is a leaf-block of $C$ distinct from $B$. Apply Remark~\ref{rem:lemma18} to $B'$ to find a clean extending path $P_{xy}$ from $x\in V(B')\setminus\{w'\}$ to $y\in V(C)\setminus(V(B')\cup \{w\})$. Notice that $P_{xy}$ and $P_{vw}$ are component disjoint, otherwise there is a clean extending path $P_{vx}$ from $v$ to $x$, which is excluded by Case~\ref{case:cleanToBlock}. 
    
Let $C_1$ and $C_2$ be the connected components resulting from removing the bridge $u_2w$, with $V(B)\subset V(C_1)$. If $y\in V(C_2)$ then either $y\in V(B'')$, where $B''$ is another leaf-block with $w'\in V(B'')$, and $P^C_{xy}$ contains the edges of two distinct blocks (a contradiction by Case~\ref{case:cleanToBlock}), or we are in a case symmetric to Case~\ref{case:cleanToAu1} (with $B'$ in place of $B$). So we can assume $y\in V(C_1)$ and therefore $w\notin V(P^C_{yu_0})$. Set $S':=S\cup E(P_{vw})\cup E(P_{xy})$, so that in $S'$, $B$ and $B'$ are merged into a block $B''$ of $S'$ through the paths $P_{vw}\cup \{ww'\}$ and $P_{xy}\cup P^C_{yu_0}$. Observe that the path $P^C_{wu_0}$ is also part of $B''$ in $S'$. One has 
\begin{align*}
    \cost(S)-\cost(S') &\geq \credit(B)+\credit(B')+\credit(ww')+\credit(wu_2)+\credit(u_2u_1)+\credit(u_1u_0)\\
    &-\credit(B'')-\widetilde\cost(P_{vw})-\widetilde\cost(P_{xy}) \geq 1+1+4\cdot 1/4-1-1-1=0.
\end{align*}

\begin{figure}
    \centering
    \includegraphics[scale=0.8]{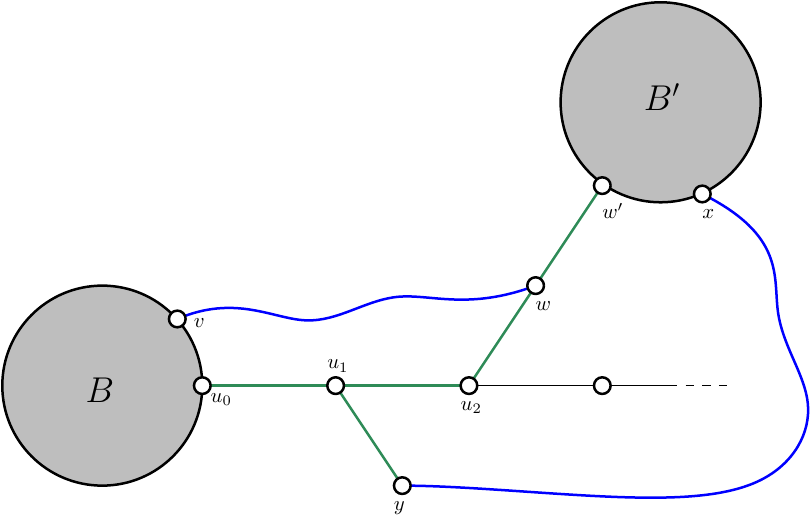}
    \caption{Lemma~\ref{lem:bridgeCovering} Case~\ref{case:cleanToAu2bridge}.}
    \label{fig:blockExtendingCase21}
\end{figure}

\subcase{There is no bridge $ww', w'\neq u_2$.}\label{case:cleanToAu2nobridge}\hfill\\
This case is illustrated in Figure~\ref{fig:blockExtendingCase22}.  Using the fact that $P^*$ is a longest path and Remark \ref{rem:atLeastOneBridge}, it must be that $w\in V(B')$ for some leaf-block $B'$ of $C$ distinct from $B$. Apply Remark \ref{rem:lemma18} to $B'$ to find a clean extending path $P_{xy}$ from $x\in V(B')\setminus \{w\}$ to $y\in V(C)\setminus V(B')$. As in the previous case, $P_{xy}$ and $P_{vw}$ are component disjoint, otherwise there is a clean extending path $P_{vx}$ from $v$ to $x$, which is excluded by Case~\ref{case:cleanToBlock}.
    
If $wu_2\in E(P^C_{yu_0})$ for some path $P^C_{yu_0}$, it must be that $y\in V(B'')$, where $B''$ is another leaf-block of $C$ with $w\in V(B'')$, a contradiction by Case~\ref{case:cleanToBlock}. Set $S':=S\cup E(P_{vw})\cup E(P_{xy})\setminus\{u_2w\}$, so that $B$ and $B'$ are merged into a single block $B''$ of $S'$ through the paths $P_{vw}$ and $P_{xy}\cup P^C_{yu_0}$. Then we get  
\begin{align*}
    \cost(S)-\cost(S') &\geq \credit(B)+\credit(B')+1-\credit(B'')-\widetilde\cost(P_{vw})-\widetilde\cost(P_{xy})\\
    & \geq 1+1+1-1-1-1=0.
\end{align*}

\begin{figure}
    \centering
    \includegraphics[scale=0.8]{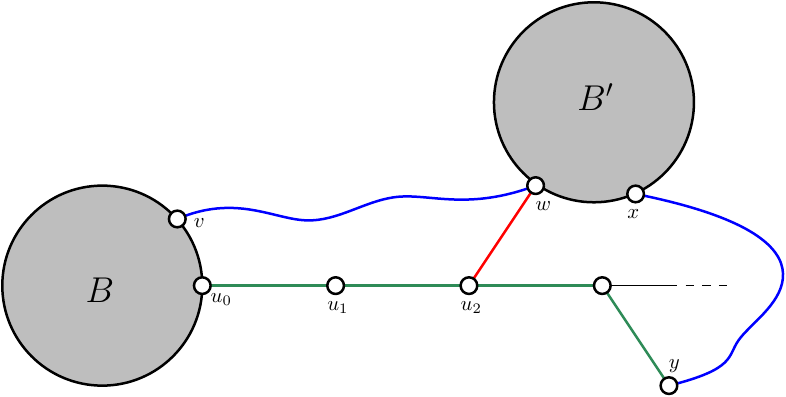}
    \caption{Lemma~\ref{lem:bridgeCovering} Case~\ref{case:cleanToAu2nobridge}.}
    \label{fig:blockExtendingCase22}
\end{figure}

\case{There are two clean extending paths $P_{v_1u_2}$ and $P_{v_2u_3}$ from $v_1, v_2\in V(B)\setminus\{u_0\}$, $v_1\neq v_2$ to $u_2$ and $u_3$.}\label{case:twoPaths}\hfill\\
Notice that $P_{v_1u_2}$ and $P_{v_2u_3}$ might not be component disjoint. Removing the bridge $u_2u_3$ splits $C$ into $2$ connected components $C_1, C_2$, with $V(B)\subseteq V(C_1)$.

\subcase{There is a non-clean extending path $P_{ab}$ from $a\in V(C_1)$ to $b\in V(C_2)$.}\label{case:twoPathsNonClean}\hfill\\
This case is illustrated in Figure~\ref{fig:blockExtendingCase3pendant}. Let $C'$ be the $4$-cycle corresponding to $P_{ab}$ and let $aa'$ and $bb'$ be the first and last edges of $P_{ab}$. Apply the 3-matching Lemma~\ref{lem:matchingOfSize3} to $C'$ to get a matching $\{x_1y_1, x_2y_2, x_3y_3\}$, with $x_i\in C, y_i\in C'$ for all $i$. 

Let us prove that we can assume w.l.o.g. that $x_1\in V(C_1)$, $x_2\in V(C_2)$ and $y_1y_2\in E(C')$. First notice that we can assume w.l.o.g. that neither $x_i\in V(C_1)$ for all $i\in\{1, 2, 3\}$ nor $x_i\in V(C_2)$ for all $i\in\{1, 2, 3\}$. Indeed, if  $x_i\in V(C_1)$ for $i\in\{1, 2, 3\}$ add $bb'$ to the matching and remove $x_ib'$ if any. Symmetrically if $x_i\in V(C_2)$ for $i\in\{1, 2, 3\}$ add $aa'$ to the matching and remove $x_ia'$ if any. Now assume $x_1, x_3\in V(C_1)$ and $x_2\in V(C_2)$. Since $C'$ is a $4$-cycle, at least one of $y_1, y_3$, say w.l.o.g. $y_1$, is adjacent to $y_2$ in $C'$, so $x_1\in V(C_1), x_2\in V(C_2)$ and $y_1y_2\in E(C')$. The case where two of $x_1, x_2, x_3$ belong to $V(C_2)$ is symmetric. From now on we assume that $x_1\in V(C_1)$, $x_2\in V(C_2)$ and $y_1y_2\in E(C')$. 
    
Set $S':=S\cup E(P_{v_2u_3})\cup\{x_1y_1, x_2y_2\}\setminus\{y_1y_2\}$. Notice that since $x_1\in V(C_1), x_2\in V(C_2)$, there are vertex disjoint paths $P^C_{x_1u_2}$ and $P^C_{x_2u_3}$. Adding $P_{v_2u_3}$ merges $B$ and $P^C_{u_0u_3}=u_0u_1u_2u_3$ into a single block, and adding $E(C')\cup\{x_1y_1, x_2y_2\}\setminus\{y_1y_2\}$ merges that block and the paths $P^C_{x_1u_2}$, $P^C_{x_2u_3}$ into a new block $B'$ of $S'$. One has
\begin{align*}
    \cost(S)-\cost(S') &\geq \credit(B)+\credit(C') +\credit(u_0u_1)+\credit(u_1u_2)+\credit(u_2u_3)\\
    &- \credit(B') - \widetilde\cost(P_{v_2u_3}) -1 \geq 1 + 4/3 + 3\cdot 1/4 - 1 - 1 - 1 >0.
\end{align*}

\begin{figure}
    \centering
    \includegraphics[scale=0.8]{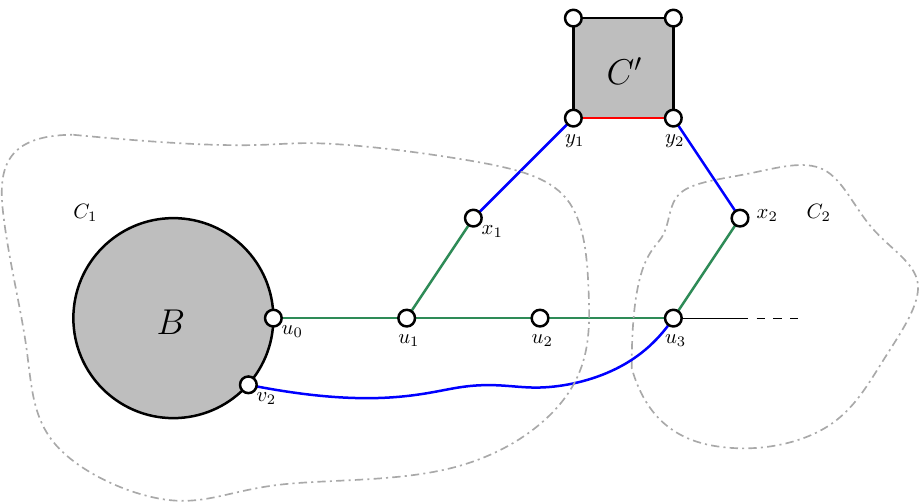}
    \caption{Lemma~\ref{lem:bridgeCovering} Case~\ref{case:twoPathsNonClean}.}
    \label{fig:blockExtendingCase3pendant}
\end{figure}

\subcase{All extending paths between nodes of $C_1$ and $C_2$ are clean.}\label{case:twoPathsClean}\hfill\\
This case is illustrated in Figure~\ref{fig:caseu3}. Apply Lemma~\ref{lem:3matchingPath} to $(V(C_1), V(C_2))$. By the hypothesis of this case, there must be at least one clean extending path $P_{xy}$ from $x\in V(C_2)\setminus\{u_3\}$ to $y\in V(C_1)\setminus\{u_0\}$. Since $x\neq u_3$, $P^C_{xu_3}$ is not empty, and it contains at least one edge $e$. Thus, $|E(P^C_{xu_0})|\geq 4$, so $y\notin A(u_0)$ by Case~\ref{case:cleanToBlock}, and thus $y\in A(u_1)\cup A(u_2)\cup\{u_1, u_2\}$. Observe that the paths $P_{xy}$ and $P_{v_1u_2}$ are component disjoint. Indeed, otherwise there is a clean extending path $P_{v_1x}$ from $v_1$ to $x$, and there is a path $P^C_{v_1x}$ that includes an edge of $B$ and at least $4$ other edges not in $B$ (since $|E(P^C_{xu_0})|\geq 4$); this is excluded by Case~\ref{case:cleanToBlock}. By a similar argument $P_{xy}$ and $P_{v_2u_3}$ are also component disjoint. 
    
If $y\in A(u_1)\cup\{u_1\}$ set $S':=S\cup E(P_{v_1u_2})\cup E(P_{xy})\setminus \{u_1u_2\}$. Notice that $u_1u_2$ is neither in $P^C_{u_1y}$ nor in $P^C_{xu_3}$. The path $\{u_0u_1\}\cup P^C_{u_1y}\cup P_{xy}\cup P^C_{xu_3}\cup\{u_2u_3\}\cup P_{v_1u_2}$ is merged with $B$ into a single block $B'$ of $S'$. If $e$ is the edge of a block then that block is also merged into $B'$, bringing $1$ credit, and if it is a bridge it brings $1/4$ credits, so we get at least $1/4$ credits from the edges of $P^C_{xu_3}$. One has
\begin{align*}
    \cost(S)-\cost(S') &\geq \credit(B)+\credit(u_0u_1)+\credit(u_1u_2)+\credit(u_2u_3)+1/4+1\\
    &-\credit(B')-\widetilde\cost(P_{v_1u_2})-\widetilde\cost(P_{xy})\geq 1+4\cdot 1/4+1-1-1-1=0.
\end{align*}

If $y\in A(u_2)\cup\{u_2\}$ set $S':=S\cup E(P_{v_2u_3})\cup E(P_{xy})\setminus \{u_2u_3\}$. Notice that $u_2u_3$ is neither in $P^C_{u_2y}$ nor in $P^C_{xu_3}$. The path $\{u_0u_1\}\cup \{u_1u_2\}\cup P^C_{u_2y}\cup P_{xy} \cup P^C_{xu_3}\cup P_{v_2u_3}$ is merged with $B$ into a single block $B'$ of $S'$. As before, if $e$ is the edge of a block then that block is also merged into $B'$, bringing $1$ credit, and if it is a bridge it brings $1/4$ credits, so we get at least $1/4$ credits from the edges of $P^C_{xu_3}$. One has 
    \begin{align*}
        \cost(S)-\cost(S') &\geq \credit(B)+\credit(u_0u_1)+\credit(u_1u_2)+\credit(u_2u_3)+1/4+1\\
        &-\credit(B')-\widetilde\cost(P_{v_2u_3})-\widetilde\cost(P_{xy})\geq 1+4\cdot 1/4+1-1-1-1=0.
    \end{align*}

\begin{figure}
    \centering
    \includegraphics[scale=0.8]{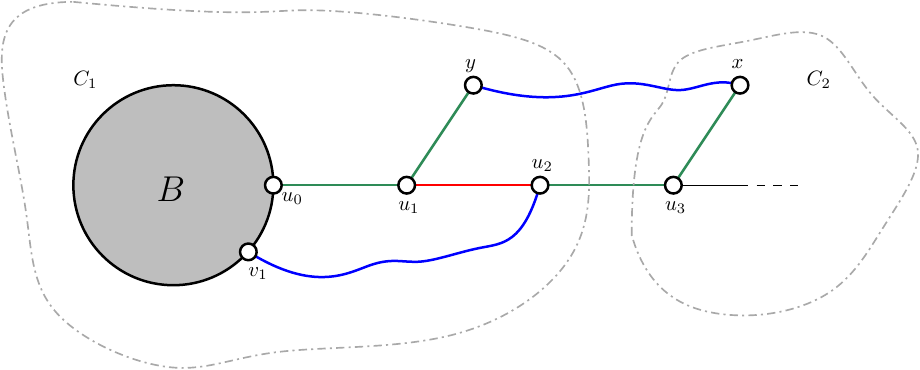}
    \includegraphics[scale=0.8]{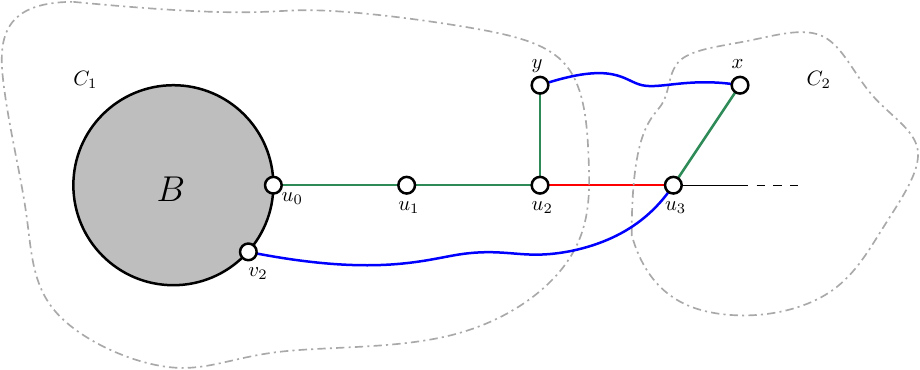}
    \caption{Lemma~\ref{lem:bridgeCovering} Case~\ref{case:twoPathsClean}. We show the two possibilities, $y\in A(u_1)\cup\{u_1\}$ and $y\in A(u_2)\cup\{u_2\}$.}
    \label{fig:caseu3}
\end{figure}

\begin{remark}\label{rem:extendingPathEndpoints}
    Assume that all previous cases do not hold. Then, if there is a clean extending path $P_{vw}$ from $v\in V(B)\setminus\{u_0\}$ to $w\in V(C)\setminus V(B)$ one has $w\in \{u_1, u_2, u_3\}$. To see this, notice that by Case~\ref{case:cleanToBlock} and since the path from every $u\in A(u_0)$ to $u_0$ contains the edge of a block, $w\notin A(u_0)$. By Cases~\ref{case:cleanToAu1} and~\ref{case:cleanToAu2} $w\notin A(u_1)\cup A(u_2)$, and by Case~\ref{case:cleanToBlock} $w\in A(u_0)\cup A(u_1)\cup A(u_2)\cup \{u_1, u_2, u_3\}$, because otherwise $|E(P^C_{wu_0})|\geq 4$.
\end{remark}

\case{None of the previous cases hold.}\hfill\\
Many subcases of this case are analogous to previous cases; however, we include them all here for the sake of completeness. Apply Lemma~\ref{lem:3matchingPath} to $B$ to find a matching $\{e_1, e_2, e_3\}$ in $G_C$. Since $B$ is a leaf-block, we can assume w.l.o.g. that $e_1, e_2$ correspond to extending paths with distinct endpoints from $V(B)\setminus\{u_0\}$ to $V(C)\setminus V(B)$. Since Case~\ref{case:nonCleanPath} does not hold, those paths are clean. By Remark~\ref{rem:extendingPathEndpoints} and the fact that Case~\ref{case:twoPaths} does not hold, at least one of those paths has an endpoint in $u_1$ and the other in either $u_2$ or $u_3$. This implies that $e_1, e_2$ correspond to clean extending paths $P_{v_1u_1}, P_{v_2w_2}$ with distinct endpoints from $v_1, v_2\in V(B)\setminus\{u_0\}$ to $u_1$ and $w_2\in\{u_2, u_3\}$, resp.

\subcase{There is a leaf-block $B'$ with $V(B')\subset A(u_0)$ distinct from $B$.}\label{case:degenerateA0}\hfill\\
This case is illustrated in Figure~\ref{fig:caseA06}. Apply Remark \ref{rem:lemma18} to $B'$ to find a clean extending path $P_{xy}$ from $x\in V(B')\setminus\{u_0\}$, to $y\in V(C)\setminus V(B')$. Note that by Case~\ref{case:cleanToBlock} we have $y\notin A(u_0)$ since otherwise $P^C_{yu_0}$ would include the edge of a block distinct from $B'$. Set $S':=S\cup E(P_{v_1u_1})\cup E(P_{xy})\setminus\{u_0u_1\}$, so that $B$ and $B'$ are merged into a single block $B''$ of $S'$ through the paths $P_{v_1u_1}\cup P^C_{yu_1}$ and $P_{xy}$. Notice that by Case~\ref{case:cleanToBlock}, $P_{v_1u_1}$ and $P_{xy}$ are component disjoint. One has
    \begin{align*}
        \cost(S)-\cost(S') &\geq \credit(B)+\credit(B')+1-\credit(B'')- \widetilde\cost(P_{v_1u_1})-\widetilde\cost(P_{xy})\\
        & \geq 1 + 1 + 1 - 1 - 1 - 1 = 0.
    \end{align*}

\begin{figure}
    \centering
    \includegraphics[scale=0.8]{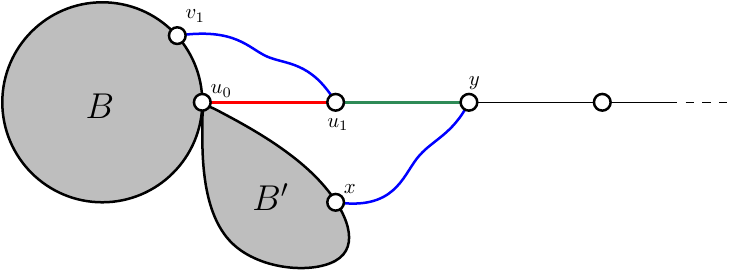}
    \caption{Lemma~\ref{lem:bridgeCovering} Case~\ref{case:degenerateA0}.}
    \label{fig:caseA06}
\end{figure}

\subcase{There is a non-clean extending path $P_{u_0w}$ from $u_0$ to $w\in V(C)\setminus V(B)$.}\label{case:degenerateNonClean}\hfill\\ 
This case is illustrated in Figure~\ref{fig:caseNoCleanB6}. Let $C'$ be the pendant $4$-cycle corresponding to $P_{u_0w}$. Apply the 3-matching Lemma~\ref{lem:matchingOfSize3} to $C'$ to get a matching $\{x_1y_1, x_2y_2, x_3y_3\}$ with $x_i\in C, y_i\in C'$ for all $i$. W.l.o.g. we can assume $x_1=u_0$ and $y_1y_2\in E(C')$. If $x_1\neq u_0$ we can add the first edge of $P_{u_0w}$ to the matching and remove any edge sharing an endpoint with it. Since one of $y_2$ or $y_3$ must be adjacent to $y_1$ we can also assume $y_1y_2\in E(C')$. By exclusion of Case~\ref{case:degenerateA0}, every path in $C$ from $u_1$ to $x_2$ does not include the edge $u_0u_1$. Set $S':=S\cup E(P_{v_1u_1})\cup\{u_0y_1, x_2y_2\}\setminus\{u_0u_1, y_1y_2\}$. Observe that $P_{v_1u_1}$, $E(C')\cup\{u_0y_1, x_2y_2\}\setminus\{y_1y_2\}$, and $P^C_{x_2u_1}$ form a path from $v_1$ to $u_0$ such that its inner vertices are not in $B$, and thus in $S'$ they are merged with $B$ into a single block $B'$ of $S'$. One has 
\begin{align*}
    \cost(S)-\cost(S') &\geq \credit(B)+\credit(C')-\credit(B')-\widetilde\cost(P_{v_1u_1}) \geq 1+4/3-1 - 1 > 0.
\end{align*}

\begin{cfigure}
    \centering
    \includegraphics[scale=0.8]{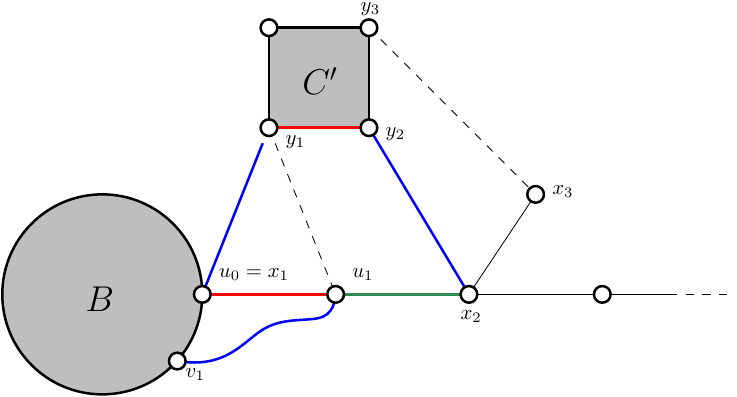}
    \caption{Lemma~\ref{lem:bridgeCovering} Case~\ref{case:degenerateNonClean}.}
    \label{fig:caseNoCleanB6}
\end{cfigure}

\begin{remark}\label{rem:degenerateCompDisjoint}
    For every clean extending path $P_{u_0w}$ from $u_0$ to $w\in V(C)\setminus (V(B)\cup\{u_1, w_2\})$, $P_{u_0w}$ and $P_{v_1u_1}$ are component disjoint. Indeed, otherwise there exists a clean extending path $P_{v_1w}$ from  and $v_1$ to $w$, and by Remark~\ref{rem:extendingPathEndpoints}, $w\in \{u_2, u_3\}$. This, together with the path $P_{v_2w_2}$, constitute a contradiction to the fact that Case~\ref{case:twoPaths} does not hold. 
\end{remark}

\subcase{There is a clean extending path $P_{u_0w}$ from $u_0$ to $w\in V(C)\setminus (V(B)\cup\{u_1, w_2\})$ such that there is a path $P^C_{u_0w}$ containing the edge of another block $B'$ or at least $4$ bridges.}\label{case:degenerateCleanToBlock}\hfill\\
This case is illustrated in Figure~\ref{fig:case2BlocksPath6}. By Remark~\ref{rem:degenerateCompDisjoint}, $P_{v_1u_1}$ and $P_{u_0w}$ are component disjoint. Set $S':=S\cup E(P_{v_1u_1})\cup E(P_{u_0w})\setminus \{u_0u_1\}$. Notice that $P^C_{u_1w}$ does not include the edge $u_0u_1$, since by Case~\ref{case:degenerateA0}, $A(u_0)=V(B)\setminus\{u_0\}$. The paths $P_{u_0w}\cup P^C_{u_1w}$ and $P_{v_1u_1}$ are merged with $B$ into a single block $B''$ of $S'$. All edges of $P^C_{u_0w}$ (except $u_0u_1$) are also included in $B''$. If one of those edges is the edge of another block it gets merged as well in $B''$, and if not we get the credit from $4$ bridges. In both cases we get at least $1$ credit. One has
\begin{align*}
    \cost(S)-\cost(S') &\geq 1 + 1 - \widetilde\cost(P_{v_1u_1})-\widetilde\cost(P_{u_0w}) \geq 1 + 1 - 1-1  = 0.
\end{align*}

\begin{figure}
    \centering
    \includegraphics[scale=0.8]{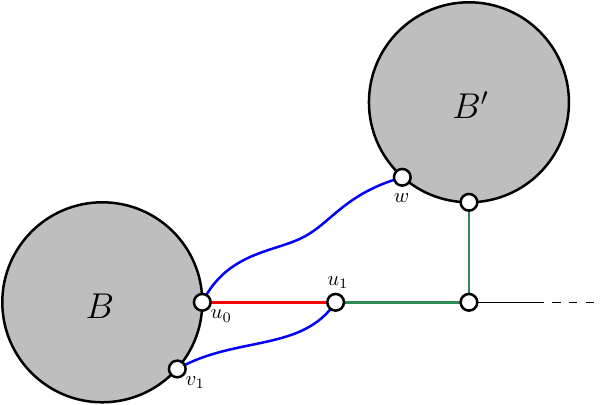}
    \caption{Lemma~\ref{lem:bridgeCovering} Case~\ref{case:degenerateCleanToBlock}.}
    \label{fig:case2BlocksPath6}
\end{figure}

\subcase{There is a clean extending path $P_{u_0w}$ from $u_0$ to $w\in A(u_1)$.}\label{case:degenerateAu1}\hfill\\
This case is illustrated in Figure~\ref{fig:blockExtendingcasewu1}. By Case~\ref{case:degenerateCleanToBlock} and the fact that $u_1, w_2\notin A(u_1)$, no path $P^C_{u_0w}$ contains the edge of another block $B'$. Thus, $P^C_{u_0w}$ consists only of bridges. Also, since $P^*$ is a longest path, $P^C_{wu_1}$ consists of a single bridge $u_1w$ and $w\in V(B')$, where $B'$ is a leaf-block of $C$ distinct from $B$. Apply Remark~\ref{rem:lemma18} to find a clean extending path $P_{xy}$ from $x\in V(B')\setminus\{w\}$ to $y\in V(C)\setminus V(B')$.

Let us first argue that the paths $P_{u_0w}, P_{v_1u_1}$ and $P_{xy}$ are pair-wise component disjoint. By Remark~\ref{rem:degenerateCompDisjoint}, $P_{u_0w}$ and $P_{v_1u_1}$ are component disjoint. If $P_{v_1u_1}$ and $P_{xy}$ are not component disjoint, then there is a clean extending path $P_{v_1x}$ from $v_1\in V(B)\setminus\{u_0\}$ to $x$, a contradiction by Case~\ref{case:cleanToBlock}. If $P_{u_0w}$ and $P_{xy}$ are not component disjoint then there is a clean extending path $P_{u_0x}$ from $u_0$ to $x$. Since every $P^C_{u_0x}$ includes at least an edge of $B'$ this is excluded by Case~\ref{case:degenerateCleanToBlock}.

Notice that every path $P^C_{yu_1}$ must not use $u_1w$. Otherwise, since $P^*$ is a longest path, it must be that $y\in B''$, where $B''$ is another leaf-block with $w\in V(B'')$, and $P^C_{xy}$ contains the edges of two distinct blocks, a contradiction by Case~\ref{case:cleanToBlock}. Also, every path $P^C_{yu_1}$ does not use $u_0u_1$, because by exclusion of Cases~\ref{case:degenerateA0} and~\ref{case:degenerateCleanToBlock}, $y\notin A(u_0)\cup \{u_0\}$. Set $S':=S\cup E(P_{v_1u_1})\cup E(P_{u_0w})\cup E(P_{xy})\setminus\{u_0u_1, u_1w\}$, so that $B$ and $B'$ are merged into a single block $B''$ through the paths $P_{u_0w}$ and $P_{xy}\cup P^C_{yu_1}\cup P_{u_1v_1}$. One has 
\begin{align*}
    \cost(S)-\cost(S') &\geq \credit(B)+\credit(B')+2-\credit(B'')-\widetilde\cost(P_{v_1u_1})-\widetilde\cost(P_{u_0w})-\widetilde\cost(P_{xy})\\
    & \geq 1+1+2-1-1-1-1=0.
\end{align*}

\begin{figure}
    \centering
    \includegraphics[scale=0.8]{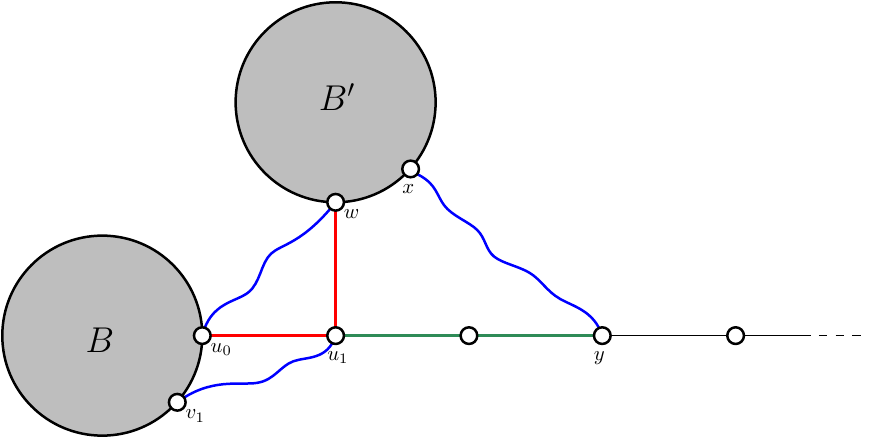}
    \caption{Lemma~\ref{lem:bridgeCovering} Case~\ref{case:degenerateAu1}.}
    \label{fig:blockExtendingcasewu1}
\end{figure}

\subcase{There is clean extending path $P_{u_0w}$ from $u_0$ to $w\in A(u_2)$.}\label{case:degenerateAu2}\hfill\\
By Case~\ref{case:degenerateCleanToBlock} and the fact that $u_1, w_2\notin A(u_2)$ it must be that every $P^C_{wu_0}$ contains no edge of a block and $|E(P^C_{wu_0})|\leq 3$. Thus there
is a bridge $u_2w$ in $C$ and $P^C_{wu_0} = wu_2u_1u_0$.

\subsubcase{There is another bridge $ww', w'\neq u_2$.}\label{case:degenerateAu2bridge}\hfill\\
This case is illustrated in Figure~\ref{fig:blockExtendingCasewu21}. Since $P^*$ is a longest path, $w'\in V(B')$, where $B'$ is a leaf-block of $C$ distinct from $B$. Apply Remark~\ref{rem:lemma18} to $B'$ to find a clean extending path $P_{xy}$ from $x\in V(B')\setminus\{w'\}$ to $y\in V(C)\setminus(V(B')\cup \{w\})$. Notice that $P_{xy}$ and $P_{v_1u_1}$ are component disjoint, otherwise there is a clean extending path $P_{v_1x}$ from $v_1$ to $x$, which is excluded by Case~\ref{case:cleanToBlock}. Also, $P_{u_0w}$ and $P_{xy}$ are also component disjoint, otherwise there is a clean extending path $P_{u_0x}$ from $u_0$ to $x$, which is excluded by Case~\ref{case:degenerateCleanToBlock}. Thus, by Remark~\ref{rem:degenerateCompDisjoint}, $P_{xy}, P_{v_1u_1}$ and $P_{u_0w}$ are pair-wise component disjoint.
    
Let $C_1$ and $C_2$ be the connected components resulting from removing the bridge $u_2w$, with $V(B)\subset V(C_1)$. If $y\in V(C_2)$ then either $y\in V(B'')$, where $B''$ is another leaf-block with $w'\in V(B'')$, and $P^C_{xy}$ contains the edges of two distinct blocks (a contradiction by Case~\ref{case:cleanToBlock}), or we are in a case symmetric to Case~\ref{case:cleanToAu1} (with $B'$ in place of $B$). So we can assume $y\in V(C_1)$ and therefore $w\notin V(P^C_{yu_1})$. Also, every path $P^C_{yu_1}$ does not use $u_0u_1$, because by exclusion of Cases~\ref{case:degenerateA0} and~\ref{case:degenerateCleanToBlock}, $y\notin A(u_0)\cup \{u_0\}$. Set $S':=S\cup E(P_{u_0w})\cup E(P_{v_1u_1})\cup E(P_{xy})\setminus\{u_0u_1\}$, so that in $S'$, $B$ and $B'$ are merged into a block $B''$ of $S'$ through the paths $P_{u_0w}\cup \{ww'\}$ and $P_{xy}\cup P^C_{yu_1}\cup P_{v_1u_1}$. Observe that the path $P^C_{wu_1}$ is also part of $B''$ in $S'$. One has
\begin{align*}
    \cost(S)-\cost(S') &\geq \credit(B)+\credit(B')+\credit(ww')+\credit(wu_2)+\credit(u_2u_1)+\credit(u_1u_0) + 1\\
    &-\credit(B'')-\widetilde\cost(P_{u_0w})-\widetilde\cost(P_{xy})-\widetilde\cost(P_{v_1u_1})\\
    &\geq 1+1+4\cdot 1/4+1-1-1-1-1=0.
\end{align*}

\begin{figure}
    \centering
    \includegraphics[scale=0.8]{/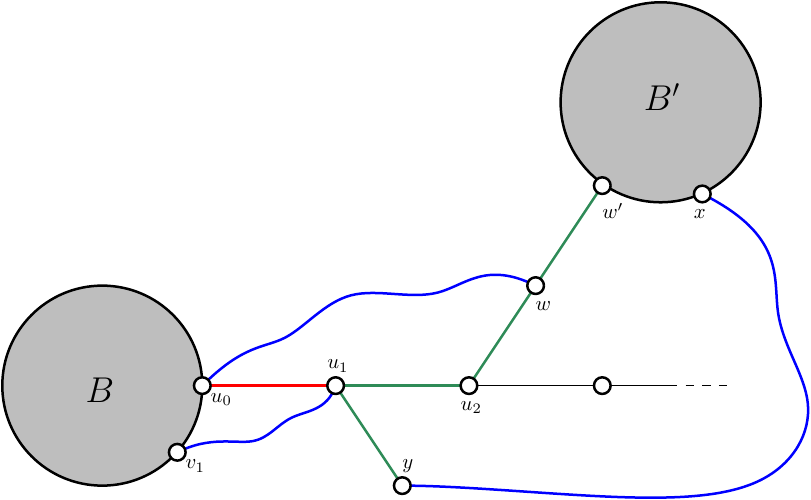}
    \caption{Lemma~\ref{lem:bridgeCovering} Case~\ref{case:degenerateAu2bridge}.}
    \label{fig:blockExtendingCasewu21}
\end{figure}

\subsubcase{There is no bridge $ww', w'\neq u_2$.}\label{case:degenerateCleanToAu2nobridge}\hfill\\
This case is illustrated in Figure~\ref{fig:blockExtendingCasewu22}. Using the fact that $P^*$ is a longest path and Remark \ref{rem:atLeastOneBridge}, it must be that $w\in V(B')$ for some leaf-block $B'$ of $C$ distinct from $B$. Apply Remark \ref{rem:lemma18} to $B'$ to find a clean extending path $P_{xy}$ from $x\in V(B')\setminus \{w\}$ to $y\in V(C)\setminus V(B')$. By the same arguments as in the previous case, $P_{v_1u_1}, P_{u_0w}$ and $P_{xy}$ are pair-wise component disjoint.
    
If $wu_2\in E(P^C_{yu_1})$ for some path $P^C_{yu_1}$, it must be that $y\in V(B'')$, where $B''$ is another leaf-block of $C$ with $w\in V(B'')$, a contradiction by Case~\ref{case:cleanToBlock}. Thus, one has $wu_2\notin P^C_{yu_1}$. Also, by exclusion of Cases~\ref{case:degenerateA0} and~\ref{case:degenerateCleanToBlock}, $u_0u_1\notin P^C_{yu_1}$. Set $S':=S\cup E(P_{v_1u_1})\cup E(P_{u_0w})\cup E(P_{xy})\setminus\{u_0u_1, u_2w\}$, so that $B$ and $B'$ are merged into a single block $B''$ of $S'$ through the paths $P_{u_0w}$ and $P_{xy}\cup P^C_{yu_1}\cup P_{v_1u_1}$. Then we get  
\begin{align*}
    \cost(S)-\cost(S') &\geq \credit(B)+\credit(B')+2-\credit(B'')-\widetilde\cost(P_{v_1u_1})-\widetilde\cost(P_{u_0w})-\widetilde\cost(P_{xy})\\
    & \geq 1+1+2-1-1-1-1=0.
\end{align*}

\begin{cfigure}
    \centering
    \includegraphics[scale=0.8]{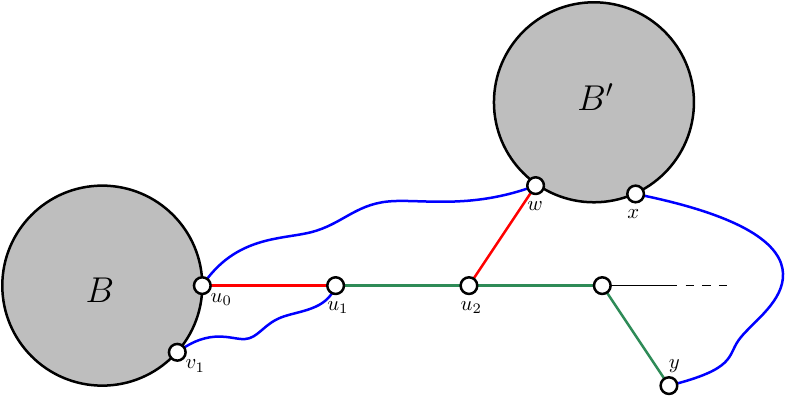}
    \caption{Lemma~\ref{lem:bridgeCovering} Case~\ref{case:degenerateCleanToAu2nobridge}.}
    \label{fig:blockExtendingCasewu22}
\end{cfigure}

\subcase{None of the previous cases hold.}\label{case:degenerateLast}\hfill\\
Let us first show that there must exist a clean extending path $P_{u_0w_1}$ from $u_0$ to $w_1\in \{u_2, u_3\}\setminus\{w_2\}$. Recall that $\{e_1, e_2, e_3\}$ is a matching in $G_C$ such that $e_1$ and $e_2$ correspond to $P_{v_1u_1}$ and $P_{v_2w_2}$, respectively. Recall also that $w_2\in\{u_2, u_3\}$. Notice that $e_3$ must be incident to $u_0$ in $B$, otherwise it is incident to some $v_3\in V(B)\setminus\{u_0, v_1, v_2\}$ in $B$, and thus it corresponds to some extending path $P_{v_3w_1}$ from $v_3$ to $w_1\in V(C)\setminus(V(B)\cup\{u_1, w_2\})$. By exclusion of Case~\ref{case:nonCleanPath}, $P_{v_3w_1}$ is clean, and by Remark~\ref{rem:extendingPathEndpoints}, it must be that $w_1\in\{u_1, u_2, u_3\}$. Thus, the paths $P_{v_2w_2}, P_{v_3w_1}$ imply a contradiction to the fact that Case~\ref{case:twoPaths} does not hold. Therefore, $e_3$ is incident to $u_0$ in $B$. Notice that, since $e_1$ is incident to $u_1$, it must be that $e_3$ is not incident to $u_1$. But then, since $u_0u_1$ is the only bridge of $C$ incident to $V(B)$ and Case~\ref{case:degenerateA0} does not hold, the only edge of $C$ not in $B$ incident to $u_0$ is $u_0u_1$, and thus it must be that the edge $e_3$ corresponds to some extending path $P_{u_0w_1}$ from $u_0$ to some node $w_1\in V(C)\setminus(V(B)\cup\{u_1, w_2\})$.

Now, since Case~\ref{case:degenerateNonClean} does not hold, $P_{u_0w_1}$ is clean. Since Cases~\ref{case:degenerateA0},~\ref{case:degenerateAu1} and~\ref{case:degenerateAu2} do not hold, $w_1\in V(C)\setminus (A(u_0)\cup A(u_1)\cup A(u_2)\cup \{u_0, u_1\})$. If $w_1\in V(C)\setminus (A(u_0)\cup A(u_1)\cup A(u_2)\cup \{u_0, u_1, u_2, u_3\})$ then $|E(P^C_{wu_0})|\geq 4$, so that Case~\ref{case:degenerateCleanToBlock} holds, a contradiction. Thus $w_1\in \{u_2, u_3\}\setminus\{w_2\}$.

We remark that $P_{u_0w_1}$ and $P_{v_2w_2}$ might not be component disjoint. Removing the bridge $u_2u_3$ splits $C$ into $2$ connected components $C_1, C_2$, with $V(B)\subset V(C_1)$.

\subsubcase{There is a non-clean extending path $P_{ab}$ from $a\in V(C_1)$ to $b\in V(C_2)$.}\label{case:degeneratetwoPathsNonClean}\hfill\\
This case is illustrated in Figure~\ref{fig:blockExtendingcasewu31}. Let $C'$ be the $4$-cycle corresponding to $P_{ab}$ and let $aa'$ and $bb'$ be the first and last edges of $P_{ab}$. Apply the 3-matching Lemma~\ref{lem:matchingOfSize3} to $C'$ to get a matching $\{x_1y_1, x_2y_2, x_3y_3\}$, with $x_i\in C, y_i\in C'$ for all $i$. By the same arguments as in Case~\ref{case:twoPathsNonClean}, we can assume w.l.o.g. that $x_1\in V(C_1)$, $x_2\in V(C_2)$ and $y_1y_2\in E(C')$.
    
If $w_2 = u_3$, then the same construction and analysis as in Case~\ref{case:twoPathsNonClean} holds. Otherwise, one has $w_1 = u_3$. By Remark~\ref{rem:degenerateCompDisjoint}, the paths $P_{v_1u_1}$ and $P_{u_0u_3}$ are component disjoint. Set $S':=S\cup E(P_{v_1u_1})\cup E(P_{u_0u_3})\cup\{x_1y_1, x_2y_2\}\setminus\{u_0u_1, y_1y_2\}$. Notice that since $x_1\in V(C_1), x_2\in V(C_2)$, there are vertex disjoint paths $P^C_{x_1u_2}$ and $P^C_{x_2u_3}$. Moreover, by Case~\ref{case:degenerateA0}, none of those paths contains the edge $u_0u_1$. Notice $P^C_{u_1u_3}=u_1u_2u_3$. Adding $P_{v_1u_1}$ and $P_{u_0u_3}$ and removing $u_0u_1$ merges $B$ and $P_{v_1u_1}\cup P^C_{u_0u_3}$ into a single block, and adding $E(C')\cup\{x_1y_1, x_2y_2\}\setminus\{y_1y_2\}$ merges that block and the paths $P^C_{x_1u_2}$, $P^C_{x_2u_3}$ into a new block $B'$ of $S'$. One has
\begin{align*}
    \cost(S)-\cost(S') &\geq \credit(B)+\credit(C') +\credit(u_0u_1)+\credit(u_1u_2)+\credit(u_2u_3)\\
    &- \credit(B') - \widetilde\cost(P_{v_1u_1}) -\widetilde\cost(P_{u_0u_3})\geq 1 + 4/3 + 3\cdot 1/4 - 1 - 1 - 1 >0.
\end{align*}

\begin{figure}
    \centering
    \includegraphics[scale=0.8]{/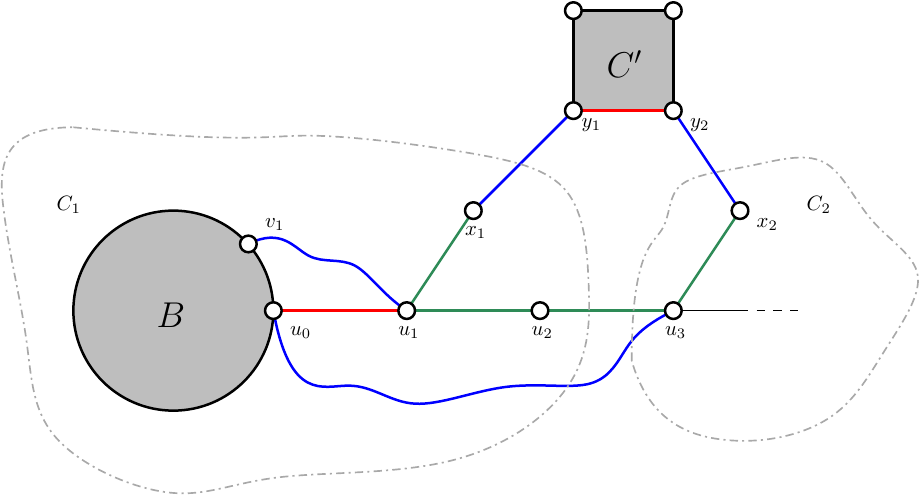}
    \caption{Lemma~\ref{lem:bridgeCovering} Case~\ref{case:degeneratetwoPathsNonClean}.}
    \label{fig:blockExtendingcasewu31}
\end{figure}

\subsubcase{All extending paths between nodes of $C_1$ and $C_2$ are clean.}\label{case:degenerateLastClean}\hfill\\
This case is illustrated in Figure~\ref{fig:blockExtendingcasewu3}. Apply Lemma~\ref{lem:3matchingPath} to $(V(C_1), V(C_2))$. By the hypothesis of this case, there must be at least one clean extending path $P_{xy}$ from $x\in V(C_2)\setminus\{u_3\}$ to $y\in V(C_1)\setminus\{u_0\}$. Since $x\neq u_3$, $P^C_{xu_3}$ is not empty, and it contains at least one edge $e$. Thus, $|E(P^C_{xu_0})|\geq 4$, so $y\notin A(u_0)$ by Case~\ref{case:cleanToBlock}, and thus $y\in A(u_1)\cup A(u_2)\cup\{u_1, u_2\}$. Observe that the paths $P_{v_1u_1}$ and $P_{xy}$ are pair-wise component disjoint. Indeed, if $P_{v_1u_1}$ and $P_{xy}$ are not component disjoint there is a clean extending path $P_{v_1x}$ from $v_1$ to $x$, and there is a path $P^C_{v_1x}$ that includes an edge of $B$ and at least $4$ other edges not in $B$ (since $|E(P^C_{xu_0})|\geq 4$); this is excluded by Case~\ref{case:cleanToBlock}. By a similar argument $P_{xy}$ and $P_{v_2w_2}$ are also component disjoint. Also, the paths $P_{u_0w_1}$ and $P_{xy}$ must also be component disjoint, otherwise there is a clean extending path $P_{u_0x}$ from $u_0$ to $x$, and there is a path $P^C_{u_0x}$ that includes an edge of $B$ and at least $4$ other edges not in $B$ (since $|E(P^C_{xu_0})|\geq 4$); this is excluded by Case~\ref{case:degenerateCleanToBlock}. Finally, $P_{v_1u_1}$ and $P_{u_0w_1}$ are also component disjoint by Remark~\ref{rem:degenerateCompDisjoint}.
    
Assume first $y\in A(u_1)\cup\{u_1\}$. If $w_2=u_2$ then the same construction and analysis as in Case~\ref{case:twoPathsClean}, with $P_{v_2w_2}$ in place of $P_{v_1u_2}$ holds. Otherwise, one has $w_1 = u_2$. Set $S':=S\cup E(P_{v_1u_1})\cup E(P_{u_0u_2})\cup E(P_{xy})\setminus \{u_0u_1, u_1u_2\}$. Notice that the edges $u_0u_1, u_1u_2$ are not in $P^C_{xu_3}$. Also, the edges $u_0u_1, u_1u_2$ are not in $P^C_{u_1y}$ because $y\in A(u_1)\cup\{u_1\}$. The path $P_{v_1u_1}\cup P^C_{u_1y}\cup P_{xy}\cup P^C_{xu_3}\cup\{u_2u_3\}\cup P_{u_0u_2}$ is merged with $B$ into a single block $B'$ of $S'$. If $e$ is the edge of a block then that block is also merged into $B'$, bringing $1$ credit, and if it is a bridge it brings $1/4$ credits, so we get at least $1/4$ credits from the edges of $P^C_{xu_3}$. One has
\begin{align*}
    \cost(S)-\cost(S') &\geq \credit(B)+\credit(u_0u_1)+\credit(u_1u_2)+\credit(u_2u_3)+1/4+2\\
    &-\credit(B')-\widetilde\cost(P_{u_1v_1})-\widetilde\cost(P_{u_0u_2})-\widetilde\cost(P_{xy})\\
    &\geq 1+4\cdot 1/4+2-1-1-1-1=0.
\end{align*}
Assume now that $y\in A(u_2)\cup\{u_2\}$. If $w_2=u_3$ then the same construction and analysis as in Case~\ref{case:twoPathsClean}, with $P_{v_2w_2}$ in place of $P_{v_2u_3}$ holds. Otherwise, one has $w_1 = u_3$. Set $S':=S\cup E(P_{v_1u_1})\cup E(P_{u_0u_3})\cup E(P_{xy})\setminus \{u_0u_1, u_2u_3\}$. By similar arguments as above, the edges $u_0u_1, u_2u_3$ are neither in $P^C_{xu_3}$ nor in $P^C_{u_2y}$. The path $P_{v_1u_1}\cup \{u_1u_2\}\cup P^C_{u_2y}\cup P_{xy} \cup P^C_{xu_3}\cup P_{u_0u_3}$ is merged with $B$ into a single block $B'$ of $S'$. As before, if $e$ is the edge of a block then that block is also merged into $B'$, bringing $1$ credit, and if it is a bridge it brings $1/4$ credits, so we get at least $1/4$ credits from the edges of $P^C_{xu_3}$. One has 
    \begin{align*}
        \cost(S)-\cost(S') &\geq \credit(B)+\credit(u_0u_1)+\credit(u_1u_2)+\credit(u_2u_3)+1/4+2\\
        &-\credit(B')-\widetilde\cost(P_{v_1u_1})-\widetilde\cost(P_{u_0u_3})-\widetilde\cost(P_{xy})\\
        &\geq 1+4\cdot 1/4+2-1-1-1-1=0.
    \end{align*}

\begin{figure}
    \centering
    \includegraphics[scale=0.8]{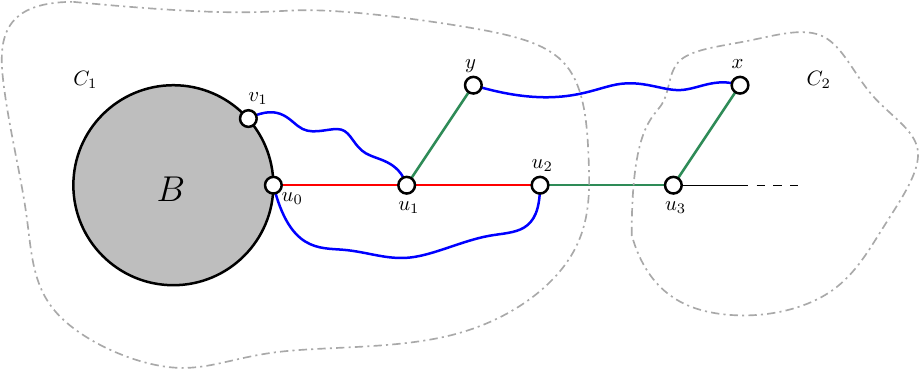}
    \includegraphics[scale=0.8]{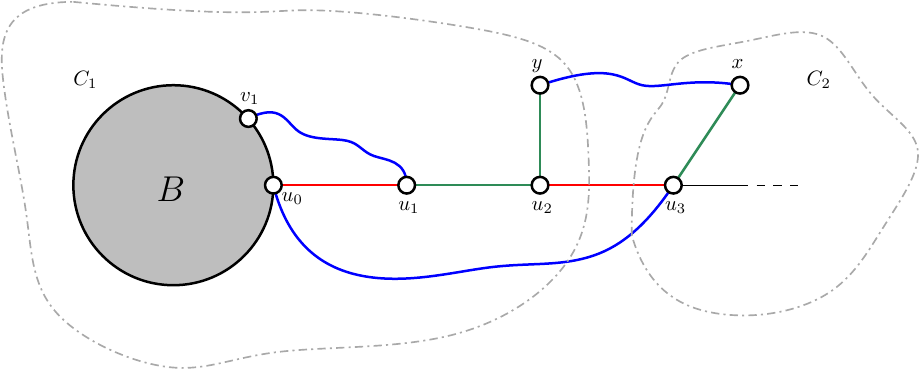}
    \caption{Lemma~\ref{lem:bridgeCovering} Case~\ref{case:degenerateLastClean}.}
    \label{fig:blockExtendingcasewu3}
\end{figure}

\end{caseanalysis}

Notice that in each case the constructed $S'$ is canonical. Indeed, we never create connected components with fewer than $6$ edges, nor any leaf-block with fewer than $5$ nodes. Furthermore, each $4$-cycle which is adjacent to a unique other large component $C'$ preserves this property (though $C'$ might turn from 2VC to complex or vice versa). Finally, the number of complex components decreases since we iteratively apply the above construction to a complex component $C$ until it has only one block and thus is no longer a complex component.
\end{proof}

\printbibliography

\end{document}